\newtheorem{corollary}{Corollary}[section]
\newtheorem{definition}{Definition}[section]
\newtheorem{lemma}{Lemma}[section]
\newtheorem{remark}{Remark}[section]
\newtheorem{theorem}{Theorem}[section]
\newtheorem{assumption}{Assumption}[section]
\DeclareMathOperator*{\argmax}{arg\,max}
\DeclareMathOperator*{\dom}{dom}
\newcommand{\var}{\textnormal{VaR}_{\alpha}}
\newcommand{\cvar}{\textnormal{CVaR}_{\alpha}}
\newcommand{\Prob}{\mathbb{P}}
\newcommand{\Egpd}{\epsilon_{u,\alpha}}
\newcommand{\hEgpd}{\hat{\epsilon}^{(n)}_{\alpha}}
\newcommand{\hXi}{\hat{\xi}^{\scriptscriptstyle{(n)}}_{\scriptscriptstyle{\textnormal{MLE}}}}
\newcommand{\hRho}{\hat{\rho}_n}
\newcommand{\hSig}{\hat{\sigma}^{\scriptscriptstyle{(n)}}_{\scriptscriptstyle{\textnormal{MLE}}}}
\newcommand{\UPOT}{\hat{c}^{(n)}_{\epsilon,\alpha}}
\newcommand{\POT}{\hat{c}^{(n)}_{\alpha}}
\newcommand{\pto}{\overset{p}{\to}}
\newcommand{\dto}{\overset{d}{\to}}
\newcommand{\thresh}{X_{(n-k,n)}}
\definecolor{MyDarkBlue}{rgb}{0,0.08,0.45}
\def\boxit#1{\vbox{\hrule\hbox{\vrule\kern6pt
          \vbox{\kern6pt#1\kern6pt}\kern6pt\vrule}\hrule}}
\title{Bias-Corrected Peaks-Over-Threshold Estimation of the CVaR}
\author{%
Dylan Troop\thanks{Corresponding author.}\\
  Department of Computer Science and Software Engineering\\
  Concordia University\\
  Montreal, Canada \\
  \texttt{d\_troop@encs.concordia.ca} \\
  \And
  Fr\'ed\'eric  Godin\thanks{Quantact Actuarial and Financial Mathematics
  Laboratory, Montreal, Canada.} \\
  Department of Mathematics and Statistics \\
  Concordia University\\
  Montreal, Canada \\
  \texttt{frederic.godin@concordia.ca} \\
  \And
  Jia Yuan Yu \\
  Concordia Institute of Information Systems Engineering \\
  Concordia University \\
  Montreal, Canada \\
  \texttt{jiayuan.yu@concordia.ca} \\
}
\begin{document}

\maketitle

\begin{abstract}
The conditional value-at-risk (CVaR) is a useful risk measure in fields such as machine learning, finance, insurance, energy, etc. When measuring very extreme risk, the commonly used CVaR estimation method of sample averaging does not work well due to limited data above the value-at-risk (VaR), the quantile corresponding to the CVaR level. To mitigate this problem, the CVaR can be estimated by extrapolating above a lower threshold than the VaR using a generalized Pareto distribution (GPD), which is often referred to as the peaks-over-threshold (POT) approach. This method often requires a very high threshold to fit well, leading to high variance in estimation, and can induce significant bias if the threshold is chosen too low. In this paper, we derive a new expression for the GPD approximation error of the CVaR, a bias term induced by the choice of threshold, as well as a bias correction method for the estimated GPD parameters. This leads to the derivation of a new estimator for the CVaR that we prove to be asymptotically unbiased. In a practical setting, we show through experiments that our estimator provides a significant performance improvement compared with competing CVaR estimators in finite samples. As a consequence of our bias correction method, it is also shown that a much lower threshold can be selected without introducing significant bias. This allows a larger portion of data to be be used in CVaR estimation compared with the typical POT approach, leading to more stable estimates. As secondary results, a new estimator for a second-order parameter of heavy-tailed distributions is derived, as well as a confidence interval for the CVaR which enables quantifying the level of variability in our estimator.
\end{abstract}

\section{Introduction} \label{se:intro}
Traditional machine learning algorithms typically consider the expected value of a random variable as the target to optimize. In a risk-averse setting, the objective function needs to be adapted to consider the full distribution and account for severe outcomes. Recently, risk-averse machine learning has become an important area of study, especially in the context of multi-armed bandits and reinforcement learning, for example, for example, \cite{NIPS2014_mdp, tamar2015, torossian2019xarmed, NIPS2019_cvar} and \cite{keramati2019optimistic} address learning with a risk-averse agent. Most often, the risk measure of interest is the \textit{conditional value-at-risk} (CVaR). Given a continuous random variable $X$ representing losses (i.e., where larger values are less desirable), the CVaR at a confidence level $\alpha \in (0,1)$ measures the expected value of $X$ given that $X$ exceeds the quantile of level $\alpha.$ This quantile is referred to as the \textit{value-at-risk} (VaR). Compared to the VaR, the CVaR captures more information about the weight of a distribution's tail, making it a more useful object of study in risk-averse decision making. In practice, the CVaR is usually estimated by averaging observations above the estimated VaR, which we call the \textit{sample average} estimator of the CVaR. When $\alpha$ is close to $1$, these observations can be very scarce in small samples leading to volatile estimates of the CVaR. This work is motivated by a lack of reliable estimators and performance guarantees for the CVaR at these extreme levels. In this paper, we consider estimating the CVaR of \textit{heavy-tailed random variables}, which are ubiquitous in application domains such as finance, insurance, energy, and epidemiology, e.g., \citet{Manz2020}. In this setting, extreme events correspond to very large observations (and hence severe losses), which is in contrast to the light- or short-tailed cases where similar low probability events are closer to the mean. Extreme value theory provides the tools to construct a new CVaR estimator that is appropriate for this setting. By selecting a threshold lower than the VaR, it is possible to approximate the tail distribution of a random variable by using a \textit{generalized Pareto distribution} (GPD) and extrapolating beyond available observations. The estimation of quantities using this approximation is commonly referred to as the \textit{peaks-over-threshold} (POT) approach. An estimator for the CVaR using the POT approach is given in, for example, \citet[Section 7.2.3]{mcneil2005quantitative}, where the CVaR is referred to as the \textit{expected shortfall}. However, this estimator suffers from one of the main drawbacks of the POT approach, which is the difficult bias-variance tradeoff in selecting the threshold. Unless the threshold is chosen very high, the estimator will encounter two sources of potentially significant bias:
\begin{enumerate}[(i)]
\item The deviation between the GPD and the true tail distribution;
\item The bias associated with parameter estimation  using the approximate GPD tail data.
\end{enumerate}
Perhaps even more significantly, the CVaR estimator of \cite{mcneil2005quantitative} comes with no performance guarantees unless one assumes exactness of the GPD approximation and of the empirical distribution function, so it has not been previously possible to determine the precise conditions where using the POT approach for CVaR estimation is actually superior to the more common sample average estimator. The goal of this paper is to make a significant refinement to the existing CVaR estimator based on the POT approach and correct the sources of bias induced by the GPD approximation, resulting in a more accurate estimator that is less sensitive to the choice of threshold, as well as to derive performance guarantees in the form of confidence intervals.

\textbf{Related work}. Threshold selection methods for the GPD approximation have been well-studied, for an overview of such methods, see \cite{scarrott2012review}. A recent and empirically successful threshold selection method is given in \cite{bader2018automated}, and has been applied in \cite{extreme_tail_risk_pot} to estimate the VaR at extreme levels using the POT approach. Estimation of the CVaR using the POT approach has been applied in, for example, \cite{Gilli2006, Marinelli2007, Bah2016, GKILLAS2018109} and \cite{Szubzda2019}. To the best of our knowledge, results in the literature using the POT approach for risk estimation are presented mostly in the form of empirical studies. On the theoretical side, asymptotic analysis of the deviation in (i) with respect to the underlying distribution can be found in \cite{raoult_worms_2003}, and \cite{Beirlant2003OnTR} considers the deviation with respect to quantile approximations based on the GPD tail model. Bias correction methods for (ii) have been developed in, for example, \cite{PENG1998107,GOMESA2002,BEIRLANT20092800} and \cite{nawel}. These works target the estimation of $\xi$, the shape parameter of the GPD. A central idea is to employ the theory of second-order regular variation to establish asymptotically unbiased extensions of the well-known Hill estimator of $\xi$. However, maximum likelihood estimation is considered the most efficient parameter estimation method for the GPD (see \citet[Section 7]{BERMUDEZ}) and targets both $\xi$ and the $\sigma$ (the scale parameter of the GPD). This paper is, to the best of our knowledge, the first to address (ii) for maximum likelihood estimation in view of application. In terms of performance guarantees, \textit{concentration bounds} for the CVaR estimated by sample averaging exist in the literature, which measure the probability of deviation between the CVaR estimate and its true value for a given sample size. While a major benefit with concentration bounds is that they provide a guaranteed bound in finite samples, it is usually not possible or impractical to apply them to heavy-tailed random variables. For example, \cite{brown,thomasandmiller} derive concentration bounds which apply to bounded random variables. The concentration bounds of \cite{ravi_unbounded,ravi_tails,NIPS2019_9347} can be used for heavy-tailed random variables but require distribution-dependent constants, making them impractical. \cite{NIPS2019_9305} give concentration bounds which can apply to heavy-tailed random variables without exact distributional knowledge, but their bound is based on a truncated version of the sample average CVaR estimator and requires selecting parameters based on moment bounds of the underlying random variable. An alternative to concentration bounds is to use estimated \textit{asymptotic confidence intervals}, which typically become good approximations in large sample sizes. Asymptotic confidence intervals for the CVaR using sample average estimation can be found in \cite{TRINDADE20073524, BRAZAUSKAS20083590} and \cite{SUN2010246}. These confidence intervals only apply to random variables with bounded variance, which excludes many heavy-tailed random variables. Therefore, it is often not possible to establish measures of uncertainty for CVaR estimates using either the sample average method or the original POT method in the heavy-tailed domain, and this paper aims to address this problem with a rigorous formulation of a new CVaR estimator using the POT approach. 

\textbf{Major contributions}. First, we derive the \textit{GPD approximation error}, a deterministic quantity measuring the deviation between the GPD approximation of the CVaR and its true value. We then derive bias-corrected maximum likelihood estimators for the GPD parameters, $\xi$ and $\sigma$, using the POT approach, which in turn requires the derivation of a new estimator for a second-order parameter that may be of independent interest. Using our bias correction methods, a new estimator for the CVaR based on the POT approach is derived which we prove is asymptotically unbiased. Using our convergence result for the bias-corrected CVaR estimator, we derive a confidence interval for the CVaR which has asymptotically correct coverage probability.

The remainder of this paper is organized as follows. In \cref{se:prelim}, the VaR and CVaR are formally defined, and the sample average estimator of the CVaR is given. Needed background from extreme value theory and second-order regular variation is discussed and we formalize the notion of heavy-tailed random variables. The CVaR approximation using the POT approach is given. \Cref{se:gpd_approx_error} derives the GPD approximation error for the CVaR and its asymptotic behaviour. In \cref{se:pot_mle}, bias-corrected maximum likelihood estimators for the GPD parameters using the POT approach are derived. These estimators are then used in a CVaR estimator with partial bias correction and its asymptotic normality is derived. \Cref{se:confidence_intervals} establishes an estimator for the GPD approximation error, and our results are consolidated to give the unbiased POT estimator for the CVaR. Its asymptotic normality is derived and a confidence interval is given. \Cref{se:second-order} gives details on second-order parameter estimation, which plays an important role in bias correction. In \cref{se:num_sims}, simulations are shown to provide empirical evidence of the finite sample performance of our estimator on data. \Cref{se:conclusion} concludes, with directions for future work. All proofs are given in \cref{appendix:proofs}.
\section{Preliminaries} \label{se:prelim}
Let $X$ denote a random variable and $F$ its corresponding cumulative distribution function (cdf). In this paper, we adopt the convention that $X$ represents a loss, so larger values of $X$ are less desirable. 
\begin{definition}[Value-at-Risk]
The value-at-risk of $X$ at level $\alpha \in (0,1)$ is
\begin{equation} \label{var_formula}
q_\alpha \triangleq \var(X) = \inf\{x \in \mathbb{R} | F(x) \ge \alpha\}.
\end{equation}
\end{definition}
$\var(X)$ is equivalent to the quantile at level $\alpha$ of $F$. If the inverse of $F$ exists, $\var(X) = F^{-1}(\alpha)$. The VaR can be estimated in the same way as the standard empirical quantile. Let $X_1, \ldots, X_n$ be i.i.d. random variables with common cdf $F$. Let $X_{(1,n)}\leq X_{(2,n)}\ldots\leq X_{(n,n)}$ denote the set of order statistics for the sample of size $n$, i.e., the sample sorted in non-decreasing order. An estimator for the VaR is
\[
\widehat{\textnormal{VaR}}_{n,\alpha}(X) = \\\min\left\{X_{(i,n)} \, \vert \, i=1,\ldots,n; \, \hat{F}_n\left(X_{(i,n)}\right) \geq \alpha\right\} = X_{(m,n)},
\]
where $\hat{F}_n$ denotes the empirical cdf and $m = \lceil\alpha n\rceil$. We now define the CVaR as in \cite{acerbi2002coherence}.\footnote{The expression given in \cite{acerbi2002coherence} is for the \textit{expected shortfall} (ES), but they show that the ES and CVaR are equivalent. They also use different conventions for $X$ and $\alpha$, where smaller values of $X$ represent less desired outcomes and $\alpha$ represents a tail probability. Equation \eqref{cvar_formula} can be derived by replacing their $X$ with $-X$ and their $\alpha$ with $1-\alpha$ in the equations of the former paper.}
\begin{definition} [Conditional Value-at-Risk] \label{cvar_definition}
The conditional value-at-risk of a continuous random variable $X$ at level $\alpha \in (0,1)$ is \begin{equation} \label{cvar_formula}
c_\alpha \triangleq \cvar(X) = \mathbb{E}[X | X \geq \var(X)]=\frac{1}{1-\alpha}\int_{\alpha}^{1}\textnormal{VaR}_\gamma(X) d\gamma.
\end{equation}
\end{definition}
Typical values of $\alpha$ are $0.95$, $0.99$, $0.999$, etc. Without loss of generality, the current work will only consider continuous random variables. Typically, the CVaR is estimated by averaging observations above $\widehat{\textnormal{VaR}}_\alpha(X)$. This estimator is given by
 \begin{equation} \label{sa_cvar}
  \widehat{\textnormal{CVaR}}_{n,\alpha}(X) = \frac{\sum_{i=1}^{n} X_{i} \mathds{1}_{ \{X_i \geq \widehat{\textnormal{VaR}}_{n,\alpha}(X) \} }} {\sum_{j=1}^{n} \mathds{1}_{ \{X_j \geq \widehat{\textnormal{VaR}}_{n,\alpha}(X) \} }}.
\end{equation}
The use of the \cref{sa_cvar} can be problematic when the confidence level $\alpha$ is high due to the scarcity of extreme observations. We now provide tools from extreme value theory to address this problem, which will be needed to give the CVaR estimator based on the POT approach. 

Let
$F^n(x) = \mathbb{P}(\max(X_1, \ldots, X_n) \le x)$
denote the cdf of the sample maxima. Suppose there exists a sequence of real-valued constants $a_n > 0$ and $b_n$, $n=1,2,\ldots$, and a nondegenerate cdf $H$ such that
\begin{equation} \label{limit_dist}
\underset{n \rightarrow \infty}{\lim } F^n \left(a_n x + b_n\right) = H(x),
\end{equation}
for all $x$, where \textit{nondegenerate} refers to a distribution not concentrated at a single point. The class of distributions $F$ that satisfy \eqref{limit_dist} are said to be in the \textit{maximum domain of attraction of H}, denoted $F \in \textnormal{MDA}(H)$. The Fisher–Tippett–Gnedenko theorem (see \citet[Theorem 1.1.3]{dehaan2006}) states that $H$ must then be a \textit{generalized extreme value distribution} (GEVD), given in the following definition.
\begin{definition}[GEVD]
  The generalized extreme value distribution (GEVD) with single
  parameter $\xi\in\mathbb R$ has distribution function
  \begin{equation*}
    H_\xi(x) = \begin{cases}
      \exp \left(-(1+\xi x )^{-1/\xi}\right)
      \quad \text{ if } \xi \neq 0,
      \\ \exp \left(-e^{-x}\right)\quad \text{ if } \xi = 0
    \end{cases}
  \end{equation*}
  over its support, which is $[-1/\xi,\infty)$ if $\xi > 0$, $(-\infty,-1/\xi]$ if $\xi < 0$ or $\mathbb{R}$ if $\xi=0$.
\end{definition} 
If $F \in \textnormal{MDA}(H)$, then there exists a unique $\xi \in \mathbb{R}$ such that $H=H_\xi$. It is important to note that essentially all common distributions used in applications are in MDA$(H_\xi)$ for some value of $\xi$. When $\xi > 0$, $F$ is a \textit{heavy-tailed distribution}. It is useful to characterize heavy tails using the theory of regular variation, which requires the following definition.
\begin{definition}[Regularly varying function]
Let $f$ be a positive, measurable function defined on some neighborhood $[x_0, \infty)$ of $\infty$, for some $x_0 \in \mathbb{R}$. 
If
$$\lim _{x \rightarrow \infty} f(t x)/f(x)=t^{\rho} \quad \text { for all } t>0,$$
then $f$ is called regularly varying (at infinity) with unique \textit{index of regular variation} $\rho \in \mathbb{R}$, and we denote this by $f \in RV_\rho$. If $\rho=0$, then $f$ is called slowly-varying.
\end{definition}
For the remainder of this paper, we focus exclusively on heavy-tailed random variables (or distributions), defined next. We denote the tail distribution $\bar{F} = 1-F$.
\begin{definition}[Heavy-tailed random variable]
Let $X$ be a random variable with cdf $F$. Then $X$ (or $F$) is heavy-tailed if $F \in \textnormal{MDA}(H_\xi)$ with $\xi > 0$.
\end{definition}
If $F$ is heavy-tailed, then moments of order greater than or equal to $1/\xi$ do not exist. Otherwise, $F$ is light-tailed with a tail having exponential decay ($\xi=0$), or the right endpoint of $F$ is finite ($\xi < 0$). If $\xi \ge 1$, then $F$ has infinite mean, and therefore the true CVaR, \cref{cvar_formula}, is also infinite. For the remainder of this paper, we assume the following condition is satisfied.
\begin{assumption}\label{F_assumption}
$F$ is heavy-tailed with $\xi < 1$.
\end{assumption}
When $F \in \textnormal{MDA}(H_\xi)$, there exists a useful approximation of the distribution of sample extremes above a threshold, and we define this distribution next.
\begin{definition}[Excess distribution function] \label{def:edf}
For a given threshold
$u \ge \textnormal{ess inf} \, X$, the excess distribution function is defined as
\[
  F_{u}(y) = \mathbb{P}(X-u \leq y | X > u)
           = \left[F(y+u)-F(u)\right]/\bar{F}(u),
          \qquad y>0.
\]
\end{definition}
Note that the domain of $F_{u}$ is $[0,\infty)$ under \cref{F_assumption}. The $y$-values are referred to as the \textit{threshold excesses}. Given that $X$ has exceeded some high threshold $u$, this function represents the probability that $X$ exceeds the threshold by at most $y$. The Pickands-Balkema-de Haan theorem states that $F_u$ can be well-approximated by the GPD, which we give now.
\begin{theorem}[\cite{pickands1975, balkema1974l}] \label{PBH} Suppose \cref{F_assumption} is satisfied. Then, there exists a positive function $\sigma = \sigma(u)$ such that
  \begin{equation} \label{eq:PBH}
    \underset{u \rightarrow \infty}{\lim} \,\,\underset{0 \leq y \leq \infty}{\sup} \vert F_{u}(y) - G_{\xi, \sigma}(y) \vert = 0,
  \end{equation}
  where $G_{\xi, \sigma}$ is the generalized Pareto distribution, which for $\xi \neq0$ has a cdf given by
    \begin{equation}
    G_{\xi, \sigma}(y) =
      1-\left (1+\frac{\xi y}{\sigma}\right )^{-1/\xi}.
  \end{equation}
\end{theorem}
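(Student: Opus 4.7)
The plan is to reduce the theorem to the Gnedenko characterization of the heavy-tailed maximum domain of attraction, establish pointwise convergence via regular variation, then upgrade to uniform convergence using Potter's bounds. Under \cref{F_assumption}, it is a classical result that $F \in \textnormal{MDA}(H_\xi)$ with $\xi > 0$ is equivalent to $\bar{F} \in RV_{-1/\xi}$, so the tail admits the representation $\bar F(x) = x^{-1/\xi} L(x)$ for some slowly-varying $L$. I would take this characterization as the starting point rather than reprove it.

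First, I would rewrite everything in complementary form. Since $F_u(y) = 1 - \bar F(u+y)/\bar F(u)$ and $G_{\xi,\sigma}(y) = 1 - (1 + \xi y/\sigma)^{-1/\xi}$, the theorem reduces to showing
\[
\sup_{y \geq 0} \left| \frac{\bar F(u+y)}{\bar F(u)} - \left(1 + \frac{\xi y}{\sigma(u)}\right)^{-1/\xi} \right| \longrightarrow 0 \quad \text{as } u \to \infty.
\]
I would take $\sigma(u) = \xi u$, which is the natural scale in the heavy-tailed regime. After substituting $y = \sigma(u) z = \xi u z$, the problem becomes showing $\bar F(u(1 + \xi z))/\bar F(u) \to (1 + \xi z)^{-1/\xi}$. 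For each fixed $z \geq 0$, this is an immediate application of regular variation of $\bar F$ with index $-1/\xi$ evaluated at $t = 1 + \xi z > 0$, which delivers pointwise convergence.

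Second, I would upgrade pointwise to uniform convergence over all of $[0,\infty)$ by splitting the range. On a compact interval $z \in [0, M]$, uniform convergence of $\bar F(ut)/\bar F(u)$ to $t^{-1/\xi}$ on compact subsets of $(0,\infty)$ is the standard uniform convergence theorem for regularly varying functions (Bingham, Goldie and Teugels, Theorem 1.5.2). For the remaining tail $z > M$, I would invoke Potter's bounds: for any small $\epsilon > 0$, there exist $u_0$ and a constant $A > 0$ such that for every $u \geq u_0$ and $z \geq 0$,
\[
\frac{\bar F(u(1+\xi z))}{\bar F(u)} \leq A (1 + \xi z)^{-1/\xi + \epsilon}.
\]
Taking $\epsilon < 1/\xi$, this upper bound together with the explicit expression $(1+\xi z)^{-1/\xi}$ can both be made uniformly smaller than any prescribed tolerance once $M$ is large enough, gluing the two regimes together.

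The main obstacle is this uniformity at infinity. Pointwise convergence of a family of cdfs on an unbounded domain does not automatically yield uniform convergence, and the monotonicity of $F_u$ is not by itself enough: one needs quantitative control on how fast the tails vanish, which is exactly what Potter's bounds supply. An alternative to the Potter-bound argument is the classical Polya-type theorem asserting that pointwise convergence of monotone functions to a continuous monotone limit with matching boundary behaviour is uniform; this avoids explicit tail bounds but still demands a regular variation input to verify the matching at $\infty$, so in practice one cannot entirely bypass the tail analysis.
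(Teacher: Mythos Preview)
The paper does not prove this theorem: it is quoted as the classical Pickands--Balkema--de Haan theorem with citations to \cite{pickands1975} and \cite{balkema1974l}, and no argument appears in \cref{appendix:proofs}. So there is nothing in the paper to compare your attempt against.

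That said, your sketch is correct and is essentially one of the standard proofs in the heavy-tailed case. The reduction to $\bar F \in RV_{-1/\xi}$ via Gnedenko is exactly the right starting point, the choice $\sigma(u)=\xi u$ is the canonical one here (asymptotically equivalent to the paper's later choice $\sigma(u)=a(1/\bar F(u))$, but simpler for this argument), and your two-step upgrade from pointwise to uniform convergence is sound. One remark: the Potter-bound step is not strictly necessary. Since both $y\mapsto \bar F(u+y)/\bar F(u)$ and $y\mapsto (1+\xi y/\sigma(u))^{-1/\xi}$ are monotone non-increasing from $1$ to $0$ and the limit is continuous on $[0,\infty]$, pointwise convergence already implies uniform convergence by the Polya-type argument you mention; you do not need an additional regular-variation input to ``verify the matching at $\infty$'' because both functions vanish there automatically. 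The Potter route works fine too, of course, and is closer in spirit to how one would later extract rates.
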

Using \cref{PBH}, it is quite straightforward to derive approximate formulas for the VaR and CVaR using the definition of the excess cdf and \cref{var_formula,cvar_formula}, for example, see \citet[Section 7.2.3]{mcneil2005quantitative}. Before stating these formulas, we make precise the choice of function $\sigma(u)$ in \cref{PBH}, which we give next after some needed definitions. Let $U = (1/\bar{F})^{-1}$, the functional inverse of $1/\bar{F}$. Assume such $U$ exists and is twice-differentiable. The following functions will become important tools for characterizing the tail behaviour of $F$.
\begin{definition}
 The \textit{first-} and \textit{second-order auxiliary functions} are defined as, respectively, 
\begin{equation}\label{A_equation}
a(t) = tU^{\prime}(t), \quad A(t) = \frac{tU^{\prime\prime}(t)}{U^\prime(t)} - \xi + 1.
\end{equation}
\end{definition}
For the remainder of this paper, let $\sigma(u) = a(1/\bar{F}(u))$. It is proven in \citet[Corollary 1]{raoult_worms_2003}, with different notation, that \cref{eq:PBH} achieves the optimal rate of convergence with $\sigma(u) = a(1/\bar{F}(u))$ when the following condition on $A$ holds, which we assume to be true for the rest of this paper.
\begin{assumption}\label{A_conditions}
If $F \in \text{MDA}(H_\xi)$, the second-order auxiliary function $A$ exists and satisfies the following conditions: 
\begin{enumerate}[(i)]
\item $\lim_{t \rightarrow \infty} A(t)=0$;
\item $A$ is of constant sign in a neighborhood of $\infty$;
\item $\exists \rho \leq 0$ such that $|A| \in R V_{\rho}$.
\end{enumerate}
\end{assumption}
While \cref{A_conditions} may seem restrictive at first glance, it is in fact a very general condition, satisfied by all common distributions that belong to a maximum domain of attraction \citep{mle_holger}. Counterexamples are fairly contrived and rarely seen in practice, e.g., \citet[Exercise 2.7 on p. 61]{dehaan2006}. 

Now, with a precise definition of $\sigma(u)$, we state the approximations for the VaR and CVaR which follow from \cref{PBH}. For the rest of this paper, we shall denote $s_{u,\alpha} = \bar{F}(u)/(1-\alpha)$. 
\begin{corollary}[POT approximations] \label{pot_approximations}
Suppose that \cref{F_assumption} and \cref{A_conditions} are satisfied. Fix $u \in \mathbb{R}$ and let $\sigma=a(1/\bar{F}(u))$. Then, due to $F_u \approx G_{\xi,\sigma}$ by \cref{eq:PBH}, the POT approximations for the VaR and CVaR are given by, respectively,
\begin{equation} \label{cvar_formula_evt}
q_{u,\alpha}=u+\frac{\sigma}{\xi}\left(s_{u,\alpha}^\xi-1\right), \qquad c_{u,\alpha}= u+ \frac{\sigma}{1-\xi}\left(1 + \frac{s_{u,\alpha}^\xi-1}{\xi}\right).
\end{equation}
\end{corollary}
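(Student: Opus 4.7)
The plan is to obtain both formulas by replacing the true excess distribution $F_u$ with its GPD approximation $G_{\xi,\sigma}$ in the relevant identities and then performing a direct algebraic inversion for the VaR and a one-line integration (or mean-excess calculation) for the CVaR. Assumption \ref{F_assumption} ($\xi < 1$) is what guarantees that the GPD has finite mean so that the CVaR expression is well-defined, while the choice $\sigma = a(1/\bar F(u))$ together with Assumption \ref{A_conditions} ensures that the approximation in \cref{eq:PBH} actually applies to our setup.

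For the VaR formula I would start from the tail factorisation $\bar F(u + y) = \bar F(u)\,\bar F_u(y)$, which is immediate from \cref{def:edf}. Setting $y = q_{u,\alpha} - u$ and using the POT approximation $\bar F_u(y) \approx \bar G_{\xi,\sigma}(y) = (1 + \xi y/\sigma)^{-1/\xi}$, the equation $\bar F(q_{u,\alpha}) = 1 - \alpha$ becomes
\[
\Bigl(1 + \tfrac{\xi (q_{u,\alpha}-u)}{\sigma}\Bigr)^{-1/\xi} \;=\; \frac{1-\alpha}{\bar F(u)} \;=\; s_{u,\alpha}^{-1}.
\]
Raising both sides to the power $-\xi$ and solving for $q_{u,\alpha}$ yields $q_{u,\alpha} = u + (\sigma/\xi)(s_{u,\alpha}^\xi - 1)$.

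For the CVaR there are two equivalent routes. Route (a): plug the VaR formula into the integral representation $c_\alpha = (1-\alpha)^{-1}\int_\alpha^1 q_\gamma\, d\gamma$ from \cref{cvar_definition}, then evaluate the elementary integral $\int_\alpha^1 (1-\gamma)^{-\xi}\, d\gamma = (1-\alpha)^{1-\xi}/(1-\xi)$, which is finite precisely because $\xi < 1$, and rearrange. Route (b): use the well-known threshold-stability of the GPD, namely that under $F_u \approx G_{\xi,\sigma}$ the conditional law of $X - q_{u,\alpha}$ given $X > q_{u,\alpha}$ is again GPD with shape $\xi$ and scale $\sigma + \xi(q_{u,\alpha}-u)$, so that the mean-excess identity gives $c_{u,\alpha} = q_{u,\alpha} + [\sigma + \xi(q_{u,\alpha}-u)]/(1-\xi)$; substituting the closed form for $q_{u,\alpha}$ and collecting terms produces the stated expression.

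There is no real obstacle here: the corollary is essentially an algebraic consequence of \cref{PBH} together with \cref{F_assumption}. The only technical point worth being explicit about is the role of $\xi < 1$ in making the relevant integral (equivalently the GPD mean) finite, and the fact that \cref{pot_approximations} is stated as a pair of approximations rather than limits, so no convergence argument is needed at this stage — the quantitative control of the GPD approximation error is deferred to \cref{se:gpd_approx_error}.
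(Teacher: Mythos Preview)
Your proposal is correct and is precisely the standard derivation: the paper does not supply its own proof of this corollary but simply remarks that the formulas are ``quite straightforward to derive \ldots\ using the definition of the excess cdf and \cref{var_formula,cvar_formula}'' and refers the reader to \citet[Section 7.2.3]{mcneil2005quantitative}, where exactly the tail-factorisation/mean-excess argument you outline (your Route~(b)) is carried out. So there is nothing to compare --- you have written out what the paper leaves implicit.
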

The accuracy of the POT approximations depends on how high of a threshold is used. When these approximations are used in statistical estimation, a lower threshold is preferable to make use of as much data as possible, but this can induce a significant bias. To estimate this bias, explicit expressions are required for the approximation error when using \cref{cvar_formula_evt}. In the next section, we derive these expressions.

\section{GPD Approximation Error} \label{se:gpd_approx_error}
When applying the POT approximation for the CVaR, there is a deviation between $c_{u,\alpha}$ and $c_\alpha$ that can be quantified asymptotically. We define this deviation as follows.
\begin{definition} The GPD approximation error (of the CVaR) at level $\alpha$ and threshold $u$ is defined as 
\[
\Egpd \triangleq c_{u,\alpha} - c_\alpha.
\]
\end{definition}
Note that $\Egpd$ is a deterministic quantity. We do not yet consider statistical estimation of any parameters, and this is left for subsequent sections. In this section, the asymptotic behaviour of $\Egpd$ as $u\to\infty$ is derived, which leads to a useful approximation for finite $u$. For the rest of this paper, we shall denote $\tau_u = 1/\bar{F}(u)$.
\begin{theorem}\label{gpd_approximation_error}
Suppose \cref{F_assumption} and \cref{A_conditions} hold. Let $\alpha=\alpha_u = 1-\bar{F}(u)/\beta$, where $\beta>1$ is a constant not depending on $u$. Then,
\[
\frac{\Egpd}{a(\tau_u)A(\tau_u)K_{\xi, \rho}(\beta)} \to 1 \quad \text{as} \quad u \to \infty,
\]
where
\begin{equation} \label{gpd_approx_error_cases}
K_{\xi, \rho}(\beta) =
\begin{cases}
\frac{1}{\rho}\left(\frac{\beta^{\xi}}{\xi(1-\xi)}- \frac{1}{\xi+\rho}\left(\frac{\beta^{\xi+\rho}}{(1-\xi-\rho)} + \frac{\rho}{\xi}\right)\right), \quad \rho<0, \, \xi+\rho \neq 0, \\

\frac{1}{\rho}\left(\frac{\beta^{\xi}}{\xi(1-\xi)} -\log{\beta} +\frac{\xi-1}{\xi} \right), \qquad\qquad\; \rho<0, \, \xi+\rho = 0, \\

\frac{\beta^{\xi}}{\xi(1-\xi)}\left(\frac{1-2\xi}{\xi(1-\xi)} - \log{\beta}\right) + \frac{1}{\xi^2},\qquad\quad \rho=0.
\end{cases}
\end{equation}
\end{theorem}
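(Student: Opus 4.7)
The plan is to rewrite both $c_{u,\alpha_u}$ and $c_\alpha$ as explicit functionals of $U \triangleq (1/\bar{F})^{-1}$ evaluated at $t \triangleq \tau_u$, and then identify their difference through the second-order regular variation expansion of $U$. With $u = U(t)$ and the parameterization $\alpha_u = 1 - \bar{F}(u)/\beta$, one has $s_{u,\alpha_u} = \beta$ and $1/(1-\alpha_u) = \beta t$, so the change of variable $y = 1/(1-\gamma)$ in \cref{cvar_definition} followed by $y = tx$ yields
\[
c_\alpha = \beta\int_\beta^\infty U(tx)\,x^{-2}\,dx,
\]
while simplifying $(1-\xi)^{-1}\bigl(1 + (\beta^\xi-1)/\xi\bigr) = \beta^\xi/(\xi(1-\xi)) - 1/\xi$ in \cref{pot_approximations} (with $\sigma = a(t)$) gives
\[
c_{u,\alpha_u} = U(t) + a(t)\,J_\xi(\beta), \qquad J_\xi(\beta) \triangleq \frac{\beta^\xi}{\xi(1-\xi)} - \frac{1}{\xi}.
\]

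The key analytic input is the standard second-order expansion of $U$ that is equivalent to \cref{A_conditions}: as $t\to\infty$,
\[
\frac{U(tx) - U(t)}{a(t)} = \frac{x^\xi - 1}{\xi} + A(t)\,H_{\xi,\rho}(x) + o(A(t)),
\]
where $H_{\xi,\rho}(x) = \rho^{-1}\bigl((x^{\xi+\rho}-1)/(\xi+\rho) - (x^\xi-1)/\xi\bigr)$ in the generic case $\rho<0$, $\xi+\rho\neq 0$, and the two boundary cases $\rho = 0$ and $\xi+\rho = 0$ are recovered by the natural limiting form of $H_{\xi,\rho}$ (replacing a divided difference that degenerates to $0/0$ with the corresponding $\log x$). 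This expansion can be derived by integrating the companion expansion $a(ty)/a(t) = y^\xi + A(t)\,y^\xi(y^\rho-1)/\rho + o(A(t))$ against $y^{-1}\,dy$ on $[1,x]$, using the identity $(U(tx)-U(t))/a(t) = \int_1^x (a(ty)/a(t))\,y^{-1}\,dy$.

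Substituting this expansion into the integral representation of $c_\alpha$ and using the two elementary identities $\beta\int_\beta^\infty x^{-2}\,dx = 1$ and $\beta\int_\beta^\infty (x^\xi-1)\,x^{-2}/\xi\,dx = J_\xi(\beta)$, the zeroth- and first-order terms cancel exactly against $c_{u,\alpha_u}$, leaving
\[
\Egpd = -a(t)A(t)\,\beta\int_\beta^\infty H_{\xi,\rho}(x)\,x^{-2}\,dx + o(a(t)A(t)).
\]
The three explicit formulas in \cref{gpd_approx_error_cases} then follow from term-by-term integration of $H_{\xi,\rho}(x)/x^2$ on $[\beta,\infty)$, which converges thanks to $\xi<1$ and $\xi+\rho\le\xi<1$; the two boundary cases are obtained either by passing to the parameter limit in the generic expression or by direct integration using $\int_\beta^\infty x^{\xi-2}\log x\,dx$ evaluated by integration by parts.

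The principal technical obstacle is justifying the interchange of limit and integral when substituting the pointwise (locally uniform) second-order expansion of $U$ into an integral over the unbounded range $[\beta,\infty)$. I would license this by invoking the uniform second-order Potter-type bounds that follow from \cref{A_conditions}: for any $\varepsilon,\delta>0$ they dominate the remainder $(U(tx)-U(t))/a(t) - (x^\xi-1)/\xi - A(t)H_{\xi,\rho}(x)$ by $\varepsilon\,|A(t)|\,x^{\xi+\rho+\delta}$ for all sufficiently large $t$ and all $x\ge\beta>1$. Since $\xi+\rho+\delta<1$ for small enough $\delta$, this envelope is integrable against $x^{-2}$ on $[\beta,\infty)$, and dominated convergence closes the argument.
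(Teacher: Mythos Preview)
Your proof is correct and follows essentially the same route as the paper's: both reduce $\Egpd$ to $-a(\tau_u)A(\tau_u)\,\beta\int_\beta^\infty I_{\xi,\rho}(x)\,x^{-2}\,dx$ via the second-order expansion of $U$ (your $H_{\xi,\rho}$ is the paper's $I_{\xi,\rho}$) and justify the limit--integral interchange by the Potter-type uniform bound together with dominated convergence. The only cosmetic difference is that the paper first isolates the VaR-level convergence $(q_\gamma-q_{u,\gamma})/(a(\tau_u)A(\tau_u))\to I_{\xi,\rho}(x)$ and then integrates, whereas you substitute the expansion directly into the CVaR integral and cancel the zeroth- and first-order pieces.
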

In practice, we would typically be interested in the CVaR at a fixed value of $\alpha$, so it may appear unsatisfactory that $\alpha\to1$ in \cref{gpd_approximation_error}. However, a useful approximation in the non-asymptotic setting which holds for large $u$ is $\Egpd \approx a(\tau_u)A(\tau_u) K_{\xi, \rho}(s_{u,\alpha})$, which is valid as long $\alpha > F(u)$. In subsequent sections, we derive estimators for all needed quantities to estimate $c_{u,\alpha}$ and $\epsilon_{u,\alpha}$ (and thus $c_{\alpha}$) from data, namely the parameters $\xi, \sigma, \rho$, and function $A$, leading to an asymptotically unbiased estimator of $c_\alpha$.

\section{POT Estimator with MLE Bias Correction} \label{se:pot_mle}
In this section, we discuss the estimation of $c_{u,\alpha}$ using \cref{pot_approximations} and maximum likelihood. One possible way to do so is by first selecting a threshold $u$, and then estimating the GPD parameters using the threshold excesses above $u$. Let $X_{(1,n)}\leq X_{(2,n)}\ldots\leq X_{(n,n)}$ denote the order statistics for a sample of size $n$. Let $u = \thresh$ for some value of $k=k_n < n$. Then, the threshold excesses $Y_i = X_{(n-k+i,n)} - u, i=1,..,k$ are i.i.d. \citep[Section 3.4]{dehaan2006} and approximately distributed by a GPD (\cref{PBH}). Maximum likelihood estimators (MLEs) are obtained by maximizing the approximate log-likelihood function with respect to $\xi$ and $\sigma$,
\begin{align}
    (\hXi, \hSig) = \argmax_{\xi,\sigma}\sum_{i=1}^{k}\log g_{\xi,\sigma}(Y_i),\label{e1}
\end{align}
where $g_{\xi,\sigma}$ is the probability density function (pdf) of the GPD, which for $\xi \neq 0$ is given by 
\[
g_{\xi, \sigma}(y)=
\frac{1}{\sigma}\left(1+\frac{\xi y}{\sigma}\right)^{-1 / \xi-1}.
\]
Based on partial derivatives of the log-pdf with respect to parameters, the resulting maximum likelihood first-order conditions when $\xi>0$ are given by
\begin{equation} \label{mle_equations}
\begin{dcases}
\frac{1}{k} \sum_{i=1}^{k} \log \left(1+\frac{\xi Y_i}{\sigma}\right)=\xi,\\
\frac{1}{k} \sum_{i=1}^{k} \frac{Y_i}{\sigma+\xi Y_i}=\frac{1}{\xi+1}.
\end{dcases}
\end{equation}
A closed-form solution to \cref{mle_equations} does not exist, but the MLEs can be obtained numerically through standard software packages. See, for example, \cite{grimshaw} for an overview of the commonly implemented algorithm.

While the usual asymptotic theory of maximum likelihood does not apply in the approximate GPD model, the following theorem establishes the fact that the MLEs are asymptotically normal with a biased mean as long as the number of threshold excesses is chosen suitably. We will include a correction for the asymptotic bias in an estimator for the CVaR subsequently. The following theorem is given in \citet[Theorem 3.4.2]{dehaan2006}.
\begin{theorem} \label{mle_normality} Suppose that \cref{F_assumption} and \cref{A_conditions} hold. Then for $k=k_n \to \infty$ and $k/n\to 0$ as $n \to \infty$, if $\lim_{n\to\infty} \sqrt{k} A(n/k) = \lambda < \infty$, then the MLEs satisfy
$$\sqrt{k}(\hXi - \xi, \; \hSig/a(n/k)-1) \overset{d}{\to} N(\lambda b_{\xi,\rho}, \mathbf{\Sigma}),$$
where $N$ denotes the normal distribution and
\begin{equation} \label{bias_and_variance}
\begin{split}
b_{\xi, \rho} &= \left(b^{(1)}_{\xi, \rho}, b^{(2)}_{\xi, \rho}\right) = \frac{[\xi+1, \; -\rho]}{(1-\rho)(1+\xi-\rho)},\\\mathbf{\Sigma} &= 
\left[\begin{array}{cc}
{(1+\xi)^{2}} & {-(1+\xi)} \\
{-(1+\xi)} & {1+(1+\xi)^2}
\end{array}\right].
\end{split}
\end{equation}
\end{theorem}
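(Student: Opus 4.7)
The plan is to adapt the classical MLE asymptotic argument to the fact that the threshold excesses are only \emph{approximately} GPD-distributed, with the misspecification controlled by the second-order auxiliary function $A$. By the standard order-statistic conditioning (see \citet[Section 3.4]{dehaan2006}), given the empirical threshold $u_n = X_{(n-k,n)}$ the excesses $Y_i = X_{(n-k+i,n)} - u_n$ are i.i.d.\ with cdf $F_{u_n}$, and under the rate assumption $\sqrt{k}A(n/k)\to\lambda$ the random quantity $\bar F(u_n)$ is close enough to $k/n$ that we may eventually replace $\tau_{u_n}$ by $n/k$ with $\sqrt{k}$-scaled error negligible for the conclusion.

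Introduce the score $\psi(y;\xi,\sigma) = \bigl(\log(1+\xi y/\sigma)-\xi,\; y/(\sigma+\xi y)-1/(\xi+1)\bigr)^\top$, so that \cref{mle_equations} reads $\tfrac{1}{k}\sum_{i=1}^k \psi(Y_i;\hXi,\hSig)=0$. Expanding around the population value $(\xi,\sigma_0)$ with $\sigma_0 = a(\tau_{u_n})$ yields
\begin{equation*}
0 \;=\; \tfrac{1}{k}\sum_{i=1}^k \psi(Y_i;\xi,\sigma_0) \;+\; J_k\bigl(\hXi - \xi,\; \hSig - \sigma_0\bigr)^\top \;+\; o_p(k^{-1/2}),
\end{equation*}
and a direct calculation at the exact GPD model shows $J_k \pto -I(\xi,\sigma_0)$, where $I^{-1}$, after reparametrizing by $(\xi,\log\sigma)$, equals the matrix $\mathbf{\Sigma}$ in \eqref{bias_and_variance}. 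Thus the theorem reduces to a central limit theorem \emph{with bias} for the score averaged at the true parameters.

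For that CLT I would invoke a second-order Pickands--Balkema--de Haan expansion, available under \cref{A_conditions} (see e.g.\ \citet[Theorem 2.3.9]{dehaan2006}), giving uniformly in $y\ge 0$
\begin{equation*}
F_{u_n}(y) - G_{\xi,\sigma_0}(y) \;=\; A(\tau_{u_n})\,\Psi_{\xi,\rho}(y) \;+\; o\bigl(A(\tau_{u_n})\bigr),
\end{equation*}
with an explicit correction $\Psi_{\xi,\rho}$ whose form splits on $\rho<0$ versus $\rho=0$. Integrating $\psi$ against this expansion produces $\mathbb{E}\psi(Y_1;\xi,\sigma_0) = A(\tau_{u_n})\mu_{\xi,\rho} + o(A(\tau_{u_n}))$ for an explicit constant vector $\mu_{\xi,\rho}$, while $\mathrm{Var}\,\psi(Y_1;\xi,\sigma_0)\to I(\xi,\sigma_0)$ by continuity. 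A Lindeberg CLT for triangular arrays, combined with $\sqrt{k}A(n/k)\to\lambda$, then yields $\sqrt{k}\,\tfrac{1}{k}\sum_{i=1}^k \psi(Y_i;\xi,\sigma_0) \dto N(\lambda\mu_{\xi,\rho},\,I(\xi,\sigma_0))$; multiplying by $-J_k^{-1}$ delivers the claimed joint normality with asymptotic bias $\lambda b_{\xi,\rho} = \lambda I^{-1}\mu_{\xi,\rho}$ and covariance $\mathbf{\Sigma}$.

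The main obstacle is the explicit bias calculation. The integrals $\int \psi\,d\Psi_{\xi,\rho}$ must be evaluated separately for $\rho<0$ and $\rho=0$, and the interchange of limit and integral requires Potter-type bounds on the second-order remainder to control integrability at the upper endpoint, where the logarithmic and rational components of $\psi$ interact delicately with heavy tails. A secondary technical item is handling the random threshold: establishing $a(\tau_{u_n})/a(n/k) = 1 + o_p(k^{-1/2})$, which follows from $\bar F(u_n)$ concentrating at $k/n$ on the correct scale together with the regular variation of $a$, and which is what ultimately legitimizes stating the result with the deterministic normalization $a(n/k)$.
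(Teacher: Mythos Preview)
The paper does not supply its own proof of this theorem; it is quoted directly as \citet[Theorem 3.4.2]{dehaan2006}. Your sketch follows the standard score-expansion route used there, and most of it is on target, but there is one concrete error that would make the argument fail as written.

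You assert that $a(\tau_{u_n})/a(n/k) = 1 + o_p(k^{-1/2})$, claiming this ``follows from $\bar F(u_n)$ concentrating at $k/n$ on the correct scale together with the regular variation of $a$.'' This is false: one has $\sqrt{k}\bigl(n\bar F(u_n)/k - 1\bigr) \dto N(0,1)$ (see e.g.\ \citet[Theorem 3.1]{BEIRLANT20092800}, invoked elsewhere in the paper), and since $a\in RV_\xi$ this yields only $\sqrt{k}\bigl(a(\tau_{u_n})/a(n/k) - 1\bigr) \dto N(0,\xi^2)$ --- an $O_p(k^{-1/2})$ fluctuation, not $o_p(k^{-1/2})$. (The paper itself records exactly this fact as \cref{ank_corr}.) The random threshold therefore contributes a nondegenerate term to the limit of $\sqrt{k}\bigl(\hSig/a(n/k)-1\bigr)$. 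Your identification of $\mathbf{\Sigma}$ with the inverse Fisher information is correspondingly wrong: the exact-GPD inverse information in the $(\xi,\sigma/\sigma_0-1)$ parametrization has $(2,2)$ entry $2(1+\xi)$, whereas $\mathbf{\Sigma}$ has $(2,2)$ entry $1+(1+\xi)^2 = 2(1+\xi)+\xi^2$; the extra $\xi^2$ is precisely the variance injected by the random threshold. To repair the argument you must retain this term, establish its asymptotic independence from the score-CLT piece (this is where the de Haan--Ferreira proof does real work, via a joint representation of the excesses and the threshold through order statistics of a standard Pareto sample), and then add the variances.
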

For the remainder of this paper, let $u=u_n = \thresh$. In the assumption of \cref{mle_normality}, it does not seem possible to give conditions to guarantee $\sqrt{k}A(n/k)\to\lambda<\infty$ in full generality, but a common approach when working with heavy-tailed distributions is to assume that they belong to the Hall class \citep{hallclass}, which nests those most often seen in practice, for example, the Burr, Fr\'echet, Student, Cauchy, Pareto, $F$, stable etc. The Hall class satisfies \cref{A_conditions} with $A(t) = ct^{\rho}$ for some constant $c\in\mathbb{R}$, and so to ensure convergence we only require that $k = O(n^{-2\rho/(1-2\rho)})$.

To obtain an asymptotically unbiased estimator of the CVaR, we will first correct the asymptotic bias in \cref{mle_normality} using consistent estimators for $A(n/k)$ and $b_{\xi,\rho}$ (which requires an estimator for $\rho$). We use the consistent estimator $\hRho$ of \cite{fraga2} to estimate $\rho$, and a new estimator for $A(n/k)$ is given in \cref{A_estimator}, which we denote $\hat{A}_n$. We prove $\hat{A}_n$ is consistent, in the sense that $\hat{A}_n/A(n/k) \pto 1$, in \cref{A_consistency}. We provide details of the estimators $\hRho$ and $\hat{A}_n$in \cref{se:second-order}. To obtain a consistent estimator for $b_{\xi,\rho}$, it suffices to plug in any consistent estimators for $\xi$ and $\rho$ into \cref{bias_and_variance}, which follows from the continuous mapping theorem (see, for example, \citet[Theorem 2.3]{vaart_1998}). Since $\hXi \overset{p}{\to} \xi$ by \cref{mle_normality}, we set
\begin{equation} \label{bias_estimator}
\hat{b}_n = (\hat{b}^{(1)}_n, \hat{b}^{(2)}_n) \triangleq \frac{[\hXi+1, \, -\hRho ]}{(1-\hRho)(1+\hXi-\hRho)}
\end{equation}
as an estimator for $b_{\xi,\rho}$, where $\hat{b}_n \overset{p}{\to} b_{\xi,\rho}$. We now give bias-corrected estimates of the GPD parameters, which we define by
\begin{equation}
    \hat{\xi}_n \triangleq \hXi - \hat{A}_n\hat{b}^{(1)}_n, \quad
\hat{\sigma}_n \triangleq \hSig(1-\hat{A}_n\hat{b}^{(2)}_n), \label{e3}
\end{equation}
The following theorem shows that $\hat{\xi}_n$ and $\hat{\sigma}_n$ are asymptotically normal and centered with the same asymptotic variance $\mathbf{\Sigma}$ as in \cref{bias_and_variance}.
\begin{theorem} \label{gpd_estimators_normality} Suppose that the assumptions of \cref{mle_normality} hold. Then
\[
\sqrt{k}(\hat{\xi}_n - \xi, \; \hat{\sigma}_n/a(n/k)-1) \overset{d}{\to} N(0, \mathbf{\Sigma}).
\]
\end{theorem}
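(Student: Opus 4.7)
The plan is to reduce the statement to an application of Slutsky's theorem, using Theorem \ref{mle_normality} as the source of weak convergence for the uncorrected MLEs and the consistency facts $\hat{A}_n/A(n/k) \pto 1$, $\hat{b}_n \pto b_{\xi,\rho}$ together with $\sqrt{k}\,A(n/k) \to \lambda$ as the source of the probability-limit of the bias-correction term. The core observation is that
\[
\sqrt{k}\,\hat{A}_n \hat{b}_n^{(j)} \;=\; \bigl(\sqrt{k}\,A(n/k)\bigr)\cdot\frac{\hat{A}_n}{A(n/k)}\cdot \hat{b}_n^{(j)} \;\pto\; \lambda\,b_{\xi,\rho}^{(j)} \qquad (j=1,2),
\]
so that the deterministic asymptotic bias $\lambda b_{\xi,\rho}$ from Theorem \ref{mle_normality} is exactly cancelled by the random correction terms.

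First I would handle $\hat{\xi}_n$, which is the easier case because the correction is additive: write
\[
\sqrt{k}\,(\hat{\xi}_n-\xi) \;=\; \sqrt{k}\,(\hXi-\xi)\;-\;\sqrt{k}\,\hat{A}_n\hat{b}_n^{(1)},
\]
apply the display above with $j=1$, and combine with the joint weak convergence of Theorem \ref{mle_normality} via Slutsky to get weak convergence of this marginal to a centered normal with variance $(1+\xi)^2$. Next, for $\hat{\sigma}_n$ the correction is multiplicative, so a small algebraic rearrangement is needed. Setting $S_n \triangleq \hSig/a(n/k)$, I would write
\[
\sqrt{k}\!\left(\frac{\hat{\sigma}_n}{a(n/k)}-1\right) \;=\; \sqrt{k}\,(S_n-1)\;-\;S_n\,\sqrt{k}\,\hat{A}_n\hat{b}_n^{(2)}.
\]
Since $\sqrt{k}(S_n-1)$ is $O_p(1)$ by Theorem \ref{mle_normality}, in particular $S_n \pto 1$; combined with $\sqrt{k}\,\hat{A}_n\hat{b}_n^{(2)} \pto \lambda b_{\xi,\rho}^{(2)}$, the factor $S_n$ can be absorbed as $1+o_p(1)$, yielding
\[
\sqrt{k}\!\left(\frac{\hat{\sigma}_n}{a(n/k)}-1\right) \;=\; \sqrt{k}\,(S_n-1)\;-\;\sqrt{k}\,\hat{A}_n\hat{b}_n^{(2)} + o_p(1).
\]

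Finally, I would assemble the joint statement. Let $Z_n = \sqrt{k}(\hXi-\xi,\, S_n-1)$ and $W_n = \sqrt{k}\,\hat{A}_n(\hat{b}_n^{(1)},\,\hat{b}_n^{(2)})$. By the arguments above, the bias-corrected vector equals $Z_n - W_n + o_p(1)$, where $Z_n \dto N(\lambda b_{\xi,\rho},\mathbf{\Sigma})$ and $W_n \pto \lambda b_{\xi,\rho}$. A direct application of Slutsky's theorem (subtracting a quantity converging in probability to a constant from a weakly convergent sequence) gives
\[
\sqrt{k}\bigl(\hat{\xi}_n-\xi,\; \hat{\sigma}_n/a(n/k)-1\bigr) \;\dto\; N(0,\mathbf{\Sigma}),
\]
as claimed. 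The only mild obstacle is making the multiplicative-correction rearrangement for $\hat{\sigma}_n$ tight enough to preserve the $O_p(1)$ scaling; this is handled by the $S_n = 1+o_p(1)$ step, which rests on the fact that $\sqrt{k}\,\hat{A}_n\hat{b}_n^{(2)}$ is itself $O_p(1)$ (in fact $o_p(1)$-close to the constant $\lambda b_{\xi,\rho}^{(2)}$), so that multiplying it by $o_p(1)$ remains $o_p(1)$.
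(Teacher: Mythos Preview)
Your proposal is correct and follows essentially the same approach as the paper: both proofs write the bias-corrected vector as the uncorrected MLE vector minus a correction that converges in probability to $\lambda b_{\xi,\rho}$, then apply Slutsky. The only cosmetic difference is that the paper keeps the factor $S_n=\hSig/a(n/k)$ inside the correction vector, showing $\sqrt{k}\hat{A}_n\bigl(\hat{b}_n^{(1)},\,S_n\hat{b}_n^{(2)}\bigr)\pto\lambda b_{\xi,\rho}$ directly, whereas you split $S_n=1+o_p(1)$ off first and carry an explicit $o_p(1)$ remainder; the content is the same.
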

Using \cref{gpd_estimators_normality}, a new estimator for $c_{u,\alpha}$ can be constructed from \cref{cvar_formula_evt}, which we then show is asymptotically normal and centered. The only missing requirement is an estimate for $F(u)$, which, with $u=X_{(n-k,n)}$, can be obtained using the empirical distribution function, i.e., $\hat{F}_n(u) = 1-k/n$.
\begin{definition}[POT estimator] \label{def:pot_estimator}Suppose that $(\hat{\xi}_n, \hat{\sigma}_n)$ are obtained from $k$ threshold excesses with $\hat{\xi}_n < 1$. Then, an estimator for $c_{u,\alpha}$ at level $\alpha > 1-k/n$ is
\begin{equation}\label{pot_estimator}
\POT \triangleq \frac{\hat{\sigma}_n}{1-\hat{\xi}_n}\left(1+ \frac{1}{\hat{\xi}_n}\left[\left(\frac{k}{n(1-\alpha)}\right)^{\hat{\xi}_n} - 1 \right]\right) + \thresh.
\end{equation}
\end{definition}
Typically, when the CVaR is estimated using the POT approach in the literature, e.g., \cite{mcneil2005quantitative}, \cref{pot_estimator} is used with $(\hXi, \hSig)$ in place of our estimators $(\hat{\xi}_n, \hat{\sigma}_n)$. Hence, the typical approach introduces two sources of bias with respect to the true CVaR: the bias from the MLEs and the bias from the misspecification of the threshold excesses by the GPD (which can be corrected using the GPD approximation error). We now state the main theorem of this section, in which the asymptotic normality of $\POT$ is derived.
\begin{theorem} \label{thm:pot_normality}
Suppose that the assumptions of \cref{mle_normality} hold. Let $\alpha=\alpha_n=1-(1/\beta)k/n$ where $\beta>1$ is a constant not depending on $n$. Let
\[
d_\beta(x,y) = 
\frac{y}{1-x}\left(1 + \frac{\beta^x - 1}{x}\right).
\]
Then, assuming the asymptotic independence of $(\hat{\xi}_n, \hat{\sigma}_n)$ and the random variable $\sqrt{k}\left((k\tau_u/n)^\xi - 1\right)$,
\begin{equation} \label{pot_normality}
\frac{\sqrt{k}}{a(n/k)}\left(\POT - c_{u,\alpha}\right) \overset{d}{\to} N\left(0, V\right),
\end{equation}
where 
$V=\nabla d_{\beta}(\xi, 1)^\top \mathbf{\Sigma} \nabla d_{\beta}(\xi, 1) +1$
and $\nabla d_{\beta}(\xi, 1)$ denotes the gradient of $d_\beta$ evaluated at $(\xi, 1)$, given at the end of \cref{pot_normality_proof}.
\end{theorem}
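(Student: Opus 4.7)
The plan is to decompose $\tfrac{\sqrt{k}}{a(n/k)}(\POT - c_{u,\alpha_n})$ into two asymptotically independent pieces, one driven by the GPD parameter estimation error and one driven by the randomness of the threshold $u=X_{(n-k,n)}$. The choice $\alpha_n=1-(k/n)/\beta$ forces $k/(n(1-\alpha_n))=\beta$, so by \cref{def:pot_estimator,pot_approximations},
\[
\POT = u + d_\beta(\hat{\xi}_n,\hat{\sigma}_n), \qquad c_{u,\alpha_n} = u + d_{s_{u,\alpha_n}}(\xi,\sigma),
\]
with $\sigma=a(\tau_u)$ and $s_{u,\alpha_n}=\beta/R_n$, where $R_n\triangleq k\tau_u/n$. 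Since $d_\gamma(x,y)=y\,f_\gamma(x)$ with $f_\gamma(x)=(1-x)^{-1}(1+(\gamma^x-1)/x)$ is linear in $y$, dividing by $a(n/k)$ amounts to replacing $\hat{\sigma}_n$ by $\tilde\sigma_n\triangleq\hat{\sigma}_n/a(n/k)$ and $\sigma$ by $\tilde\sigma\triangleq a(\tau_u)/a(n/k)$. Adding and subtracting $d_\beta(\xi,1)$ yields
\[
\frac{\sqrt{k}}{a(n/k)}(\POT - c_{u,\alpha_n}) = \underbrace{\sqrt{k}\bigl[d_\beta(\hat{\xi}_n,\tilde\sigma_n) - d_\beta(\xi,1)\bigr]}_{T_1} + \underbrace{\sqrt{k}\bigl[d_\beta(\xi,1) - d_{s_{u,\alpha_n}}(\xi,\tilde\sigma)\bigr]}_{T_2}.
\]

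\textbf{First piece $T_1$.} Since \cref{gpd_estimators_normality} gives $\sqrt{k}(\hat{\xi}_n-\xi,\tilde\sigma_n-1)\dto N(0,\mathbf{\Sigma})$ and $d_\beta$ is continuously differentiable at $(\xi,1)$, the multivariate Delta method immediately yields $T_1 \dto N\bigl(0,\nabla d_\beta(\xi,1)^\top\mathbf{\Sigma}\,\nabla d_\beta(\xi,1)\bigr)$.

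\textbf{Second piece $T_2$.} Using $s_{u,\alpha_n}^\xi = \beta^\xi R_n^{-\xi}$, direct expansion gives
\[
d_{s_{u,\alpha_n}}(\xi,\tilde\sigma) - d_\beta(\xi,1) = (\tilde\sigma-1)f_\beta(\xi) + \tilde\sigma\,\frac{\beta^\xi(R_n^{-\xi}-1)}{\xi(1-\xi)}.
\]
The regular variation $a\in RV_\xi$ combined with \cref{A_conditions} and $\sqrt{k}A(n/k)=O(1)$ yields the second-order expansion $\tilde\sigma = R_n^\xi + o_p(1/\sqrt{k})$, since the second-order remainder $H_{\xi,\rho}(R_n)$ vanishes in probability as $R_n\to 1$ in probability. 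Substituting and using $R_n^\xi(R_n^{-\xi}-1)=-(R_n^\xi-1)$ collapses the right-hand side to $(R_n^\xi-1)\bigl[f_\beta(\xi)-\beta^\xi/(\xi(1-\xi))\bigr] + o_p(1/\sqrt{k})$, and the bracket telescopes to exactly $-1/\xi$. Hence $T_2 = \sqrt{k}(R_n^\xi-1)/\xi + o_p(1)$. Because $\bar{F}(X_{(n-k,n)})\stackrel{d}{=}U_{(k+1,n)}$, the CLT for intermediate uniform order statistics gives $\sqrt{k}(1/R_n-1)\dto N(0,1)$; a further Delta-method step then yields $\sqrt{k}(R_n^\xi-1)\dto N(0,\xi^2)$, so $T_2\dto N(0,1)$.

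\textbf{Combination and main obstacle.} Under the asymptotic independence assumed in the theorem between $(\hat{\xi}_n,\hat{\sigma}_n)$ and $\sqrt{k}((k\tau_u/n)^\xi-1)$, which drive $T_1$ and $T_2$ respectively, the joint limit of $(T_1,T_2)$ is a product of independent normals, so $T_1+T_2\dto N(0,V)$ with $V=\nabla d_\beta(\xi,1)^\top\mathbf{\Sigma}\,\nabla d_\beta(\xi,1)+1$. The main obstacle is the treatment of $T_2$: one must control the replacement $\tilde\sigma\leftrightarrow R_n^\xi$ at the $o_p(1/\sqrt{k})$ scale via the second-order behaviour of $a$, and then recognize the algebraic cancellation that forces the coefficient of $\sqrt{k}(R_n^\xi-1)$ to be exactly $-1/\xi$ — this is precisely what makes the extra variance contribution equal to $1$ irrespective of $\xi$ and $\beta$.
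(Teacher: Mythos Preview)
Your proposal is correct and follows essentially the same route as the paper: the decomposition $T_1+T_2$ coincides with the paper's $S-R$ (with $T_1=S$, $T_2=-R$), the Delta method handles $T_1$ exactly as in the paper, and your treatment of $T_2$ via the replacement $\tilde\sigma=R_n^\xi+o_p(1/\sqrt{k})$ together with the algebraic collapse to $-1/\xi$ mirrors the paper's Lemmas~A.3--A.4 and the computation $R=\frac{\xi+s_{u,\alpha}^\xi-1}{\xi(1-\xi)}P-\frac{Q}{\xi}$. The only place where the paper is slightly more explicit is in justifying $\tilde\sigma=R_n^\xi+o_p(1/\sqrt{k})$ via the uniform (Drees-type) inequality of \cite[Theorem~2.3.6]{dehaan2006}, which you invoke more informally through ``the second-order remainder $H_{\xi,\rho}(R_n)$ vanishes.''
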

\begin{remark}
The assumption that $(\hat{\xi}_n, \hat{\sigma}_n)$ is asymptotically independent of $\sqrt{k}\left((k\tau_u/n)^\xi - 1\right)$
seems justified by the proofs of \citet[Theorem A.1 and Theorem 3.1]{BEIRLANT20092800}, where a similar asymptotic independence is established for a bias-corrected Hill estimator of $\xi$. We leave the confirmation of conditions under which this holds for future work. 
\end{remark}
\begin{remark}
The conditions of \cref{thm:pot_normality} imply that $\alpha \to 1$, however, this is not very restrictive in a practical setting since finite sample approximations will be valid for any fixed choice of $\alpha$ as long as $\alpha > 1-k/n$, since $\beta$ is arbitrary.
\end{remark}
 Using \cref{pot_normality} combined with an estimator for $\epsilon_{u,\alpha}$, we derive an asymptotically unbiased estimator and confidence interval for the CVaR in the next section.

\section{Unbiased POT Estimator} \label{se:confidence_intervals}
In the previous section, we derived the asymptotic normality of the POT estimator with bias corrected parameters, $\POT$. While $\POT$ is asymptotically unbiased with respect to $c_{u,\alpha}$, we still need to include the GPD approximation error to correct the remaining deviation induced by the GPD model. For a confidence level $\alpha$ and $u=\thresh$, using \cref{gpd_approximation_error} we can derive an estimator for the GPD approximation error, given by
\begin{equation} \label{approx_error_estimator}
\hEgpd \triangleq \hat{\sigma}_n\hat{A}_n\hat{K}_n,
\end{equation}
where $\hat{K}_n = K_{\hat{\xi}_n,\hat{\rho}_n}(k/(n(1-\alpha)))$, defined in \cref{gpd_approx_error_cases} with known values replaced by their respective estimators. We can now define the following estimator for the CVaR.
\begin{definition}[Unbiased POT estimator] \label{def:unbiased_pot_estimator}The unbiased POT estimator is an estimator for the CVaR at level $\alpha > 1-k/n$, which is defined for $\hat{\xi}_n < 1$, and is given by
\begin{equation}\label{unbiased_pot}
\UPOT \triangleq \POT - \hEgpd.
\end{equation}
\end{definition}
Note that $\UPOT$ is asymptotically unbiased with respect to $c_\alpha$, a statement which is made precise in the following theorem.
\begin{theorem} \label{thm:confidence_int} Suppose that the assumptions of \cref{thm:pot_normality} hold. Then,
\begin{equation} \label{eq:pot_convergence}
\frac{\sqrt{k}(\UPOT - c_\alpha)}{\hat{\sigma}_n\sqrt{\hat{V}_n}} \overset{d}{\to} N(0,1),
\end{equation}
Where $\hat{V}_n$ denotes a consistent estimator of $V$, which can be obtained by plugging in $\hat{\xi}$ into the expression for $V$ given in \cref{thm:pot_normality}.
\end{theorem}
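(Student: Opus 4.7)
The plan is to decompose
\begin{equation*}
\UPOT - c_\alpha = (\POT - c_{u,\alpha}) - (\hEgpd - \Egpd),
\end{equation*}
which is immediate from $c_\alpha = c_{u,\alpha} - \Egpd$ and \cref{def:unbiased_pot_estimator}. After multiplication by $\sqrt{k}/a(n/k)$, the first summand converges in distribution to $N(0,V)$ by \cref{thm:pot_normality}, so it suffices to prove that
\begin{equation*}
\frac{\sqrt{k}}{a(n/k)}\bigl(\hEgpd - \Egpd\bigr) \pto 0;
\end{equation*}
Slutsky's theorem will then allow me to replace $a(n/k)$ with $\hat{\sigma}_n$ and $V$ with $\hat{V}_n$ in the denominator.

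The main step is to show that both $\sqrt{k}\Egpd/a(n/k)$ and $\sqrt{k}\hEgpd/a(n/k)$ converge in probability to the common limit $\lambda K_{\xi,\rho}(\beta)$, after which their difference is $o_p(1)$. On the population side, \cref{gpd_approximation_error} gives $\Egpd = a(\tau_u)A(\tau_u)K_{\xi,\rho}(\beta)(1+o(1))$ as $u\to\infty$. Since $u=\thresh \to \infty$ almost surely and $\tau_u/(n/k)\pto 1$ by consistency of the empirical tail at level $k/n$, and since $a \in RV_\xi$ and $|A| \in RV_\rho$ under \cref{A_conditions}, Potter-type bounds give $a(\tau_u)/a(n/k)\pto 1$ and $A(\tau_u)/A(n/k)\pto 1$; combining with the standing hypothesis $\sqrt{k}A(n/k)\to\lambda$ yields the claimed limit. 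For the estimator side, factor
\begin{equation*}
\frac{\sqrt{k}\,\hEgpd}{a(n/k)} = \bigl(\sqrt{k}\,A(n/k)\bigr)\cdot\frac{\hat{A}_n}{A(n/k)}\cdot\frac{\hat{\sigma}_n}{a(n/k)}\cdot\hat{K}_n,
\end{equation*}
and observe that the four factors converge in probability to $\lambda$, to $1$ (by \cref{A_consistency}), to $1$ (by \cref{gpd_estimators_normality}), and to $K_{\xi,\rho}(\beta)$ (by continuous mapping with $\hat{\xi}_n\pto\xi$ and $\hat{\rho}_n\pto\rho$), respectively.

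Assembling these pieces gives $\sqrt{k}(\UPOT - c_\alpha)/a(n/k) \dto N(0,V)$. Because $V$ depends continuously on $\xi$ alone (through $\nabla d_\beta(\xi,1)$ and $\mathbf{\Sigma}$), the continuous mapping theorem yields $\hat{V}_n \pto V$, and $\hat{\sigma}_n/a(n/k)\pto 1$ as already noted; a final invocation of Slutsky's theorem delivers the claimed $N(0,1)$ limit. The principal obstacle is the $o_p(1)$ step for the bias correction: since $\hEgpd$ and $\Egpd$ are individually of order $a(n/k)/\sqrt{k}$, showing that their difference is of strictly smaller order requires carefully propagating the approximation $\tau_u \approx n/k$ through both regularly varying functions $a$ and $|A|$ and then exploiting the fact that $\sqrt{k}A(n/k)$ is only bounded, not vanishing, so that the $o_p(1)$ multipliers from $\hat{A}_n$, $\hat{\sigma}_n$, $\hat{\xi}_n$, and $\hat{\rho}_n$ can be absorbed without destroying the rate.
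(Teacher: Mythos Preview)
Your overall strategy---the decomposition $\UPOT - c_\alpha = (\POT - c_{u,\alpha}) - (\hEgpd - \Egpd)$, followed by showing both $\sqrt{k}\hEgpd/a(n/k)$ and $\sqrt{k}\Egpd/a(n/k)$ converge to the same constant $\lambda K_{\xi,\rho}(\beta)$, and then a Slutsky finish---is exactly the paper's approach, and the treatment of the estimator side via the factorisation $\sqrt{k}A(n/k)\cdot(\hat{A}_n/A(n/k))\cdot(\hat{\sigma}_n/a(n/k))\cdot\hat{K}_n$ is identical to what the paper does.

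The one place where you diverge from the paper, and where your argument is a bit too quick, is the ``population side'' claim $\sqrt{k}\,\Egpd/a(n/k)\pto\lambda K_{\xi,\rho}(\beta)$. You invoke \cref{gpd_approximation_error} to write $\Egpd = a(\tau_u)A(\tau_u)K_{\xi,\rho}(\beta)(1+o(1))$, but \cref{gpd_approximation_error} is stated for the coupling $\alpha=\alpha_u=1-\bar{F}(u)/\beta$, i.e.\ with $s_{u,\alpha}\equiv\beta$ deterministically. In the present setting $\alpha=\alpha_n=1-(1/\beta)k/n$ and $u=\thresh$ is random, so $s_{u,\alpha}=\beta\cdot n\bar{F}(u)/k$ is random and only converges to $\beta$ in probability; \cref{gpd_approximation_error} does not apply verbatim. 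Your Potter-bound step $a(\tau_u)/a(n/k)\pto1$ and $A(\tau_u)/A(n/k)\pto1$ is fine, but the first factor $\Egpd/(a(\tau_u)A(\tau_u))$ still needs a locally-uniform-in-$\beta$ version of \cref{gpd_approximation_error}, which one can extract from the uniform inequality of \cref{Ilimbound} together with dominated convergence and continuity of $K_{\xi,\rho}$ in its argument---but you should say so explicitly. The paper sidesteps this by proving a separate lemma (\cref{gpd_approx_error_emp}) that establishes $\Egpd/(a(n/k)A(n/k))\pto K_{\xi,\rho}(\beta)$ directly: it redoes the integral computation with $a(n/k)A(n/k)$ in the denominator from the outset, which produces three extra remainder terms (arising from $U(\tau_u)-U(n/k)$, $a(\tau_u)(s_{u,\alpha}^\xi-\beta^\xi)$, and $(a(\tau_u)-a(n/k))(\beta^\xi-1)$) that are shown to cancel exactly via \cref{tau_lemma} and \cref{ank_lemma}. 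Your regular-variation route is cleaner once the uniformity issue is acknowledged; the paper's route is more self-contained but heavier.
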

\begin{corollary}
Based on the above limit, an asymptotic confidence interval with level $1-\delta$ for $c_\alpha$ is
\begin{equation} \label{cvar_confidence_interval}
C_\delta^n = \left(\UPOT -z_{\delta/2}\hat{\sigma}_n\sqrt{\hat{V}_n/k},\; \UPOT + z_{\delta/2}\hat{\sigma}_n\sqrt{\hat{V}_n/k}\right),
\end{equation}
where $z_{\delta/2}$ satisfies $\mathbb{P}(Z > z_{\delta/2}) = \delta/2$ with $Z \sim N(0,1)$. \Cref{cvar_confidence_interval} has asymptotically correct coverage probability, i.e., $\Prob(c_\alpha \in C_\delta^n) \to 1-\delta$ as $n \to \infty$.
\end{corollary}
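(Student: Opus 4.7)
The plan is to deduce the coverage statement directly from \cref{thm:confidence_int} by treating the studentized quantity $T_n \triangleq \sqrt{k}(\UPOT - c_\alpha)/(\hat{\sigma}_n\sqrt{\hat{V}_n})$ as a pivot. First I would record the deterministic identity: the event
\[
\left\{-z_{\delta/2} \le T_n \le z_{\delta/2}\right\}
\]
coincides with $\{c_\alpha \in C_\delta^n\}$, by simple algebraic rearrangement (multiply through by the positive quantity $\hat{\sigma}_n\sqrt{\hat{V}_n/k}$ and subtract $\UPOT$, flipping inequalities to isolate $c_\alpha$). Because $\hat{\sigma}_n/a(n/k)\pto 1$ with $a(n/k)>0$, and $\hat{V}_n\pto V>0$, the multiplicative factor is eventually strictly positive with probability one, so the rearrangement is valid on an event whose probability tends to one.

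Second, \cref{thm:confidence_int} gives $T_n \dto Z$ with $Z\sim N(0,1)$. Because the standard normal CDF is continuous at $\pm z_{\delta/2}$, the Portmanteau theorem yields
\[
\Prob\!\left(-z_{\delta/2} \le T_n \le z_{\delta/2}\right) \;\longrightarrow\; \Prob\!\left(-z_{\delta/2} \le Z \le z_{\delta/2}\right).
\]
By definition of $z_{\delta/2}$ and the symmetry of $Z$, the right-hand side equals $1-\delta$. Combining this with the identification of events from the previous step gives $\Prob(c_\alpha \in C_\delta^n) \to 1-\delta$, which is the desired coverage.

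There is no substantive obstacle: the corollary is a routine consequence of a convergence-in-distribution statement combined with the continuous mapping theorem. The only points requiring a moment of care are (a) confirming that $\hat{\sigma}_n\sqrt{\hat{V}_n/k}>0$ on a sequence of events of probability tending to one, so that the inequalities can be inverted without changes of direction, and (b) that the endpoints $\pm z_{\delta/2}$ are continuity points of the limiting normal distribution, so that strict versus non-strict inequalities produce the same limit. Both are immediate from the consistency statements already in place and the continuity of the standard normal CDF, respectively, so the entire argument is a short verification rather than a new development.
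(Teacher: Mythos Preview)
Your proposal is correct and matches the paper's approach: the paper does not even supply a separate proof for this corollary, stating it as an immediate consequence of \cref{thm:confidence_int}, and your argument is precisely the standard pivot-inversion verification that justifies this. The two minor points you flag---eventual positivity of $\hat{\sigma}_n\sqrt{\hat{V}_n/k}$ and continuity of the normal cdf at $\pm z_{\delta/2}$---are exactly the ones needed to make the deduction rigorous, so nothing is missing.
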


Our confidence interval enables quantifying the level of uncertainty in $\UPOT$. The correct coverage probability of $C_\delta^n$ is a property that has not been previously possible with other CVaR estimators based on the POT approach, where the error from the GPD approximation and the empirical distribution function is ignored. In the next section we give estimators for the second-order parameters $\rho$ and $A(n/k)$ which are needed to compute $\UPOT$ from data. 
\section{Estimation of Second-order Parameters} \label{se:second-order}
\subsection{Estimation of \texorpdfstring{$\rho$}{rho}}\label{se:second-order-rho}
The parameter $\rho$ controls the rate of convergence in \cref{limit_dist} \citep{jackknife_second_order}. The smaller in magnitude the value of $\rho$, the more bias exists in the largest observations from a sample with respect to the GEVD. Therefore, estimates of $\rho$ can be used to control the bias associated with estimates of $\xi$. In our experiments in the next section, we choose the $\rho$ estimator of \cite{fraga2} combined with the adaptive selection of tuning parameters given in \cite{bias_reduce_rho}. Let
\begin{gather*}
M_n^{(j)}(m) = \frac{1}{m}\sum_{i=1}^m[\log X_{(n-i+1,n)} - \log X_{(n-m,n)}]^j,\\
T_n^{(\tau)}(m) = \frac{(M_n^{(1)}(m))^\tau - (M_n^{(2)}(m)/2)^{\tau/2}}{(M_n^{(2)}(m)/2)^{\tau/2} - (M_n^{(3)}(m)/6)^{\tau/3}}, \quad \tau\in\mathbb{R},
\end{gather*}
with the notation $a^{b\tau}=b\log a$ if $\tau=0$. Then, an estimator for $\rho$ is given by \citet[Equation 2.18]{fraga2},
\begin{equation}\label{rho_estimator}
\hat{\rho}_n = \frac{3 (T_n^{(\tau)}(m) - 1)}{T_n^{(\tau)}(m)-3}.
\end{equation}
The number of upper order statistics chosen to estimate $\rho$ is usually much larger than the choice used to estimate $(\xi,\sigma)$, i.e., $m > k$. It is shown in \cite{fraga2} that $\hat{\rho}_n$ is consistent under certain conditions. Let $A_0$ denote the function satisfying the second-order condition of \citet[Equation 2.3.22]{dehaan2006}. If $m=m_n\to\infty, m/n \to 0$ and $\sqrt{m}A_0(n/m) \to \infty$ as $n\to\infty$, then $\hat{\rho}_n \overset{p}{\to} \rho$. The estimator $\hat{\rho}_n$ has an asymptotic bias, and the reduction of this bias is dependant on the choice  of $m$ as well as the tuning parameter $\tau$. Fortunately, the adaptive algorithm given \citet[Section 4.1]{bias_reduce_rho} provides an effective method of bias correction by choosing $m$ and $\tau$ via the most stable sample path of $\hat{\rho}_n$. Details of the full estimation procedure are given in \cref{ada_rho}.

\subsection{Estimation of \texorpdfstring{$A(n/k)$}{A(n/k)}}\label{se:second-order-A}
Currently, no estimators for $A(n/k)$ exist in the literature (to the best of our knowledge). As part of a secondary contribution of this paper, we derive an estimator for $A(n/k)$ in order to estimate $\UPOT$ from i.i.d. samples. Following the formulation of \cite{nawel}, we can adapt their estimator for $A_0(n/k)$ to non-truncated data. Then, using the relation between $A_0$ and $A$ in \citet[Table 3.1]{dehaan2006}, an estimator for $A(n/k)$ is
\begin{equation}\label{A_estimator}
\hat{A}_n \triangleq \frac{(\hXi+\hRho)(1-\hRho)^2(\hat{M}_n^{(2)} - 2(\hat{M}_n^{(1)})^2)}{2\hXi\hRho \hat{M}_n^{(1)}},
\end{equation}
where we define $\hat{M}_n^{(j)} \triangleq M_n^{(j)}(k)$. The proof that $\hat{A}_n$ is consistent is given in \cref{A_consistency}.

\section{Numerical Experiments} \label{se:num_sims}
In this section, we investigate the finite sample performance of $\UPOT$ (denoted \textbf{UPOT} in this section) compared with the sample average estimator (\cref{sa_cvar}), and POT estimator with no bias correction, i.e., \cref{pot_estimator} with $(\hat{\xi}_n,\hat{\sigma}_n)$ replaced by $(\hXi,\hSig)$. Denote these estimators as  \textbf{SA} and \textbf{BPOT}, respectively. First, in the theoretical setting, we compare the exact values of the asymptotic variance of \textbf{UPOT} and \textbf{SA} at different values of $\alpha$ and sample sizes on the Fr\'echet distribution. This analysis provides justification for the cases where \textbf{UPOT} is expected to perform better than \textbf{SA} on data. Next, we assess the statistical accuracy of the three estimation methods at different sample sizes among several classes of heavy-tailed distributions. Finally, we assess the accuracy of the asymptotic confidence interval given in \cref{cvar_confidence_interval} on finite samples by using the empirical coverage probability.

\subsection{Comparison of Asymptotic Variance} \label{asymp_var_compare}
In this section, the magnitude of the asymptotic variance (AVAR) of \textbf{UPOT} and \textbf{SA} are compared. Since both estimators are asymptotically unbiased and assuming they are both efficient, the mean squared error of each estimator approaches the AVAR in large samples (by the Cram\'er-Rao lower bound). Hence, this comparison gives evidence of the distributional properties and level of $\alpha$ where \textbf{UPOT} results in lower error than \textbf{SA}. The comparison is made on the Fr\'echet distribution with single parameter $\gamma,$ which has $\xi=1/\gamma$, $\rho=-1$ (see \cref{se:frechet}). We compute $V/k$ (given in \cref{thm:pot_normality}) with $n=10000,20000,\ldots,100000$ and set $k=\lceil n^{2/3}\rceil$ to satisfy the assumption of \cref{mle_normality}. An expression for the AVAR of \textbf{SA} is given in, for example, \cite{TRINDADE20073524}, and we provide the details of this calculation for the Fr\'echet distribution in \cref{asymp_sa_frec}. To the best of our knowledge, the AVAR of \textbf{SA} can only be derived for distributions with a bounded second moment, which corresponds to distributions with $\xi < 1/2$ (or $\gamma>2$ in the Fr\'echet case). The AVAR of \textbf{SA} and \textbf{UPOT} is compared for the Fr\'echet distribution with $\gamma=2.25, 2.5, 3, 4$ and $\alpha=0.99, 0.999$ in \cref{fig:asympvar}. The results indicate that \textbf{UPOT} is preferable for high values of $\alpha$ and low values of $\gamma$. Increasing $\alpha$ would lead to lower sample availability in \textbf{SA}, and thus higher variance, while \textbf{UPOT} is unaffected. Decreasing $\gamma$ is equivalent to increasing $\xi$ and thus increasing tail thickness. This increases the AVAR of \textbf{SA} since extreme observations are much further from the mean but not readily observed. Based on evidence from the Fr\'echet distribution, it is reasonable to extrapolate that \textbf{UPOT} should always perform better than \textbf{SA} on heavy-tailed distributions with $\xi\ge 1/2$ at high values of $\alpha$.

\begin{figure*}[ht!]
    \centering
        \includegraphics[width=\textwidth]{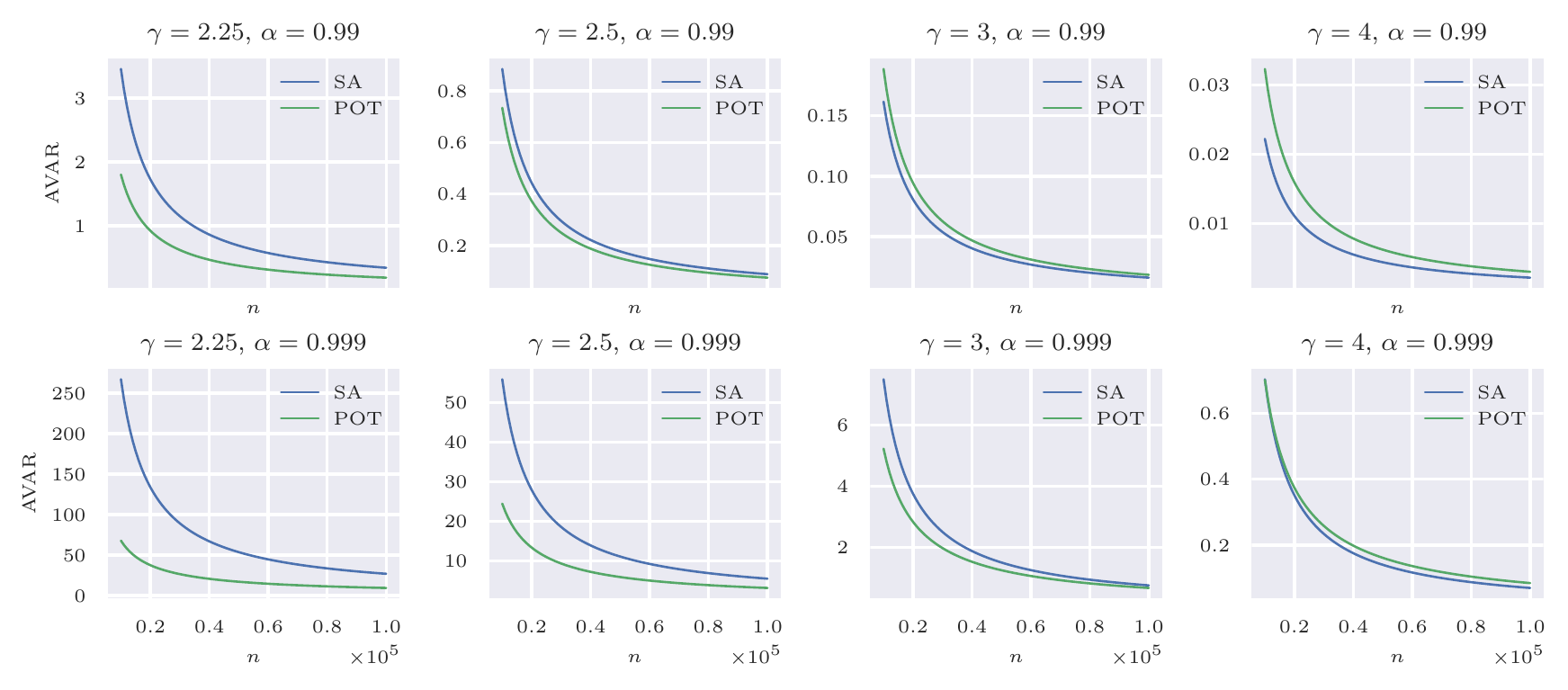}
    \caption{Asymptotic variance of the SA CVaR estimator (blue) and unbiased POT estimator (green) for the Fr\'echet($\gamma$) distribution at $\alpha=0.99,\,0.999$.}
\label{fig:asympvar}
\end{figure*}

\subsection{Error Analysis of CVaR Estimators} \label{se:error_analysis}
In the experiments that follow, samples are generated from the Burr, Fr\'echet, and half-$t$ distributions, which provide a good characterization of heavy-tailed phenomena with finite mean. Relevant details for each distribution class are provided in \cref{appendix:dists}. The estimation performance of \textbf{SA}, \textbf{BPOT}, and \textbf{UPOT} are compared via the root-mean-square error (RMSE) and absolute bias on five examples from each distribution class, shown in \cref{fig:compare}. We fix $\alpha=0.998$ as an example of an extreme risk level. Experiments are conducted as follows. Generate $N=1000$ random samples of size $50000$ from each distribution. For each sample, the CVaR is estimated using the three methods at subsample sizes $n=5000,10000,\ldots,50000$. In practice, it can be difficult to choose the number of threshold excesses $k$, and so we apply the ordered goodness-of-fits tests of \cite{bader2018automated} to
choose the optimal threshold. This threshold selection procedure, which we employ in both \textbf{BPOT} and \textbf{UPOT}, is given in detail in \cref{auto_thresh}. The average threshold selected (in terms of the percentile of a given sample) was between 0.80 and 0.96 in all simulations performed. The complete algorithm for \textbf{UPOT} is summarized in \cref{upot_algo}. 

\textbf{Discussion.} The chosen Burr distributions allow us to investigate the effect of varying $\rho$ while keeping a fixed $\xi$. In this case, we set $\xi=2/3$ while $\rho=-0.25, -0.33, -0.44, -1.33, -2.22$ in the respective Burr distributions. In general, when $\rho$ approaches $0$, the distribution's tail deviates more severely from a strict Pareto model, and therefore we see the largest bias and RMSE occur in \textbf{BPOT} in the Burr(0.38, 4) and Burr(0.5, 3) models, while the bias-correction of \textbf{UPOT} leads to the most substantial performance gain. As a non-parametric estimator, \textbf{SA} is less affected by changes in the value of $\rho$, outperforming the POT estimators in terms of bias on some Burr distributions. However, as alluded to in \cref{asymp_var_compare}, high values of $\xi$ leads to high variance in observations, typically causing poor performance of \textbf{SA} in terms of RMSE. This effect is similarly observed in the Fr\'echet simulations, where \textbf{SA} has relatively low bias. The Fr\'echet distribution always has $\rho=-1$, a property shared with the GPD, giving its tail a similar shape. Therefore, the bias-correction of \textbf{UPOT} is less significant, but still provides a noticeable performance gain over \textbf{BPOT}. The results of the half-$t$ simulations are similar to the Fr\'echet, but we note a larger bias in \textbf{BPOT} due to the fact that the half-$t$ distribution has a $\rho$ value that varies with its parameter. Like in the Burr simulations, \textbf{SA} is unaffected by different values of $\rho$ and obtains good performance in terms of bias in the half-$t$ simulations, except when $\xi$ is largest in the half-$t$(1.5) model. Finally, we note that \textbf{UPOT} consistently had the lowest RMSE in all simulations except in a few cases at a sample size of $5000$. Next, the finite sample performance of the \textbf{UPOT} confidence interval is investigated.

\begin{figure*}[ht!]
    \centering
        \includegraphics[width=\textwidth]{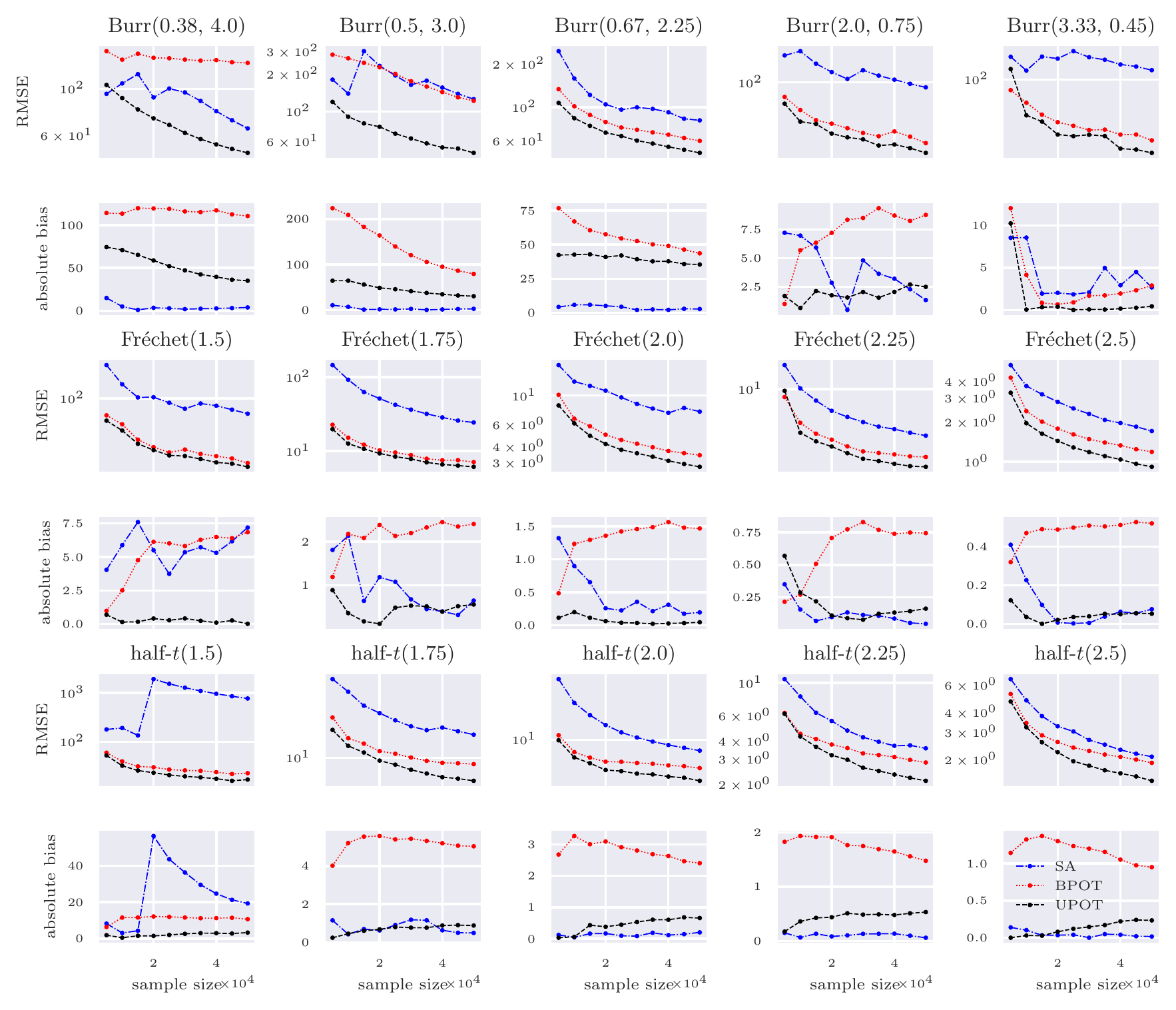}
    \caption{RMSE of and absolute bias of estimating CVaR$_{0.998}$ using \textbf{UPOT} (black), \textbf{BPOT} (red), and \textbf{SA} (blue).}
\label{fig:compare}
\end{figure*}

\subsection{Coverage Probability of the Asymptotic Confidence Interval}
The accuracy of the confidence interval given in \cref{cvar_confidence_interval} is assessed by its empirical coverage probability for each distribution using the same simulated data from \cref{se:error_analysis}. Let $C^n_{i,\delta}$ denote the confidence interval computed for a sample of size $n$ for sample $i$, $i=1,\ldots,N$. Then, the empirical coverage probability is defined as
\[
\hat{P}_\delta^n(N) = \frac{1}{N}\sum_{i=1}^N \mathds{1}_{\left\{c_\alpha \in C^n_{i,\delta}\right\}}.
\]
Plots of the coverage probability at each sample size for each distribution are shown in \cref{fig:coverage}. We set $\delta=0.05$ and compute the coverage probability at sample sizes $n=5000,10000,\ldots,50000$. The final value of each distribution's coverage probability at $n=50000$ is reported in \cref{se:num_results}. Most of the distributions tested achieve nearly the correct coverage of 0.95, sometimes surpassing it in some cases, and this is due to the estimated confidence interval being wider than its true asymptotic counterpart. The coverage is worst in the Burr(0.38, 4) distribution, achieving a final coverage probability of just 0.73. The small magnitude of $\rho$ in this distribution causes slow convergence of the tail to the GPD, and hence a relatively high average threshold percentile of 0.96 was chosen by the threshold selection procedure. This high threshold increases the variance of parameter estimation which explains the poor coverage. 
\begin{figure*}[ht!]
    \centering
        \includegraphics[width=\textwidth]{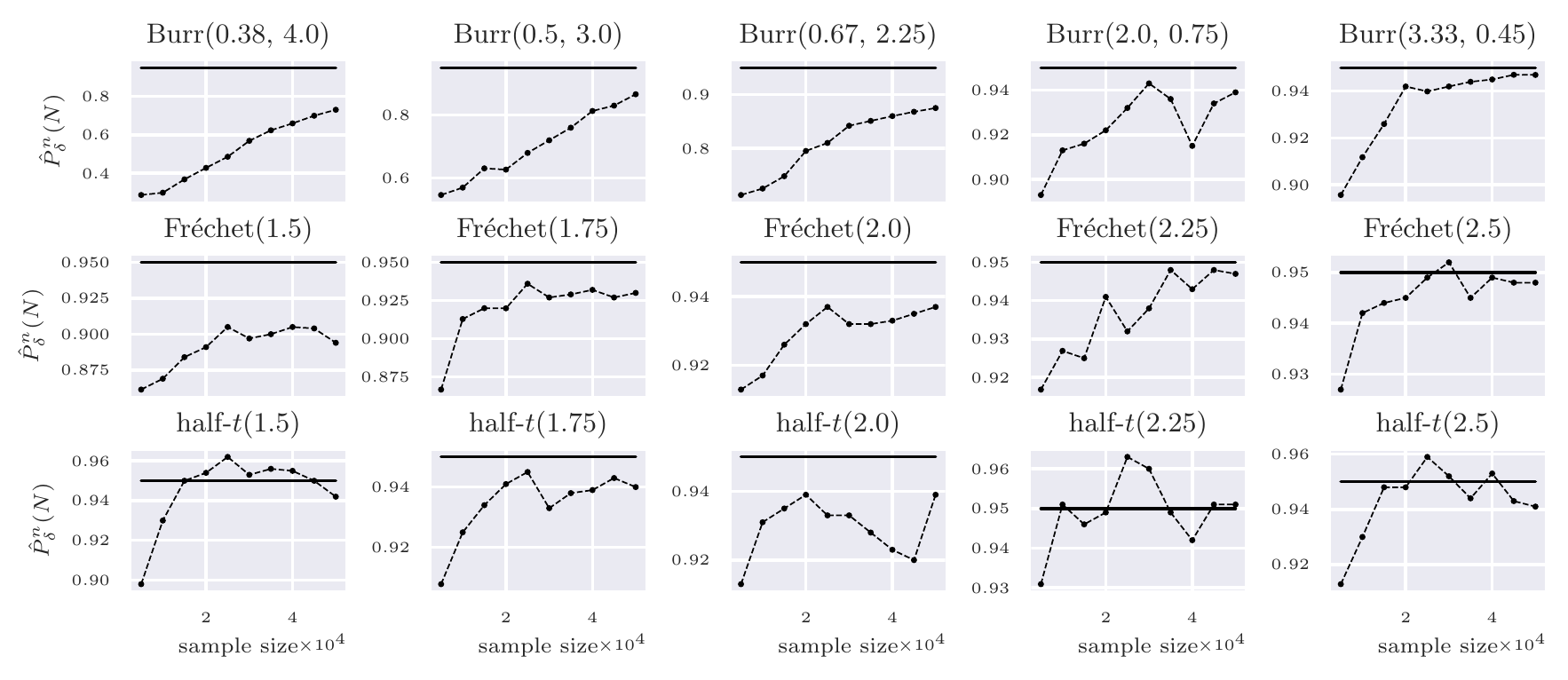}
    \caption{Coverage probabilities with $\alpha=0.998, \delta = 0.05$. The solid line indicates the theoretical coverage, i.e., $1-\delta = 0.95$.}
\label{fig:coverage}
\end{figure*}

\section{Conclusion} \label{se:conclusion}
We have studied the asymptotic properties of a new CVaR estimator based on the peaks-over-threshold approach. Using extreme value theory and second-order regular variation, we derived estimators for the bias induced by the approximate GPD model of the threshold excesses and the bias from maximum likelihood estimators of the GPD parameters. Using these results, we proved that our estimator is asymptotically normal and unbiased (up to some technical conditions). This convergence result allowed us to derive confidence intervals for the CVaR, enabling us to measure the level of uncertainty in our estimator. We compared the magnitudes of the asymptotic variance of our CVaR estimator with that of the sample average CVaR estimator, demonstrating a significant improvement in asymptotic performance for some cases. An empirical study showed that our CVaR estimator can lead to a significant performance improvement in heavy-tailed distributions when compared to the sample average estimator and the existing peaks-over-threshold estimator. Finally, we investigated the finite-sample performance of the asymptotic confidence interval, and found that good coverage probability is achieved in reasonable sample sizes. While our evidence suggests that our CVaR estimator is most effective in the heavy-tailed domain, it would also be instructive to perform the same theoretical analysis for light-tailed distributions. Doing so would allow our CVaR estimator to be robust to situations where it is not possible to make any assumptions about the underlying data distribution.

\bibliographystyle{apalike}
\bibliography{Bib}

\appendix
\section{Proofs} \label{appendix:proofs}
We first recall the stochastic order notation (e.g., \citet[Section 2.2]{vaart_1998}), which will be used throughout subsequent proofs.

\begin{definition}[Stochastic $o$ and $O$ symbols]
Let $X_n, R_n$ denote sequences of random variables. Then,
\begin{align*}
X_n &= o_p(R_n) \quad \textnormal{ means } \quad \forall\varepsilon>0, \, \lim_{n\to\infty} \Prob(|X_n/R_n| > \varepsilon) = 0, \\
X_n &= O_p(R_n) \quad \textnormal{ means } \quad \forall\varepsilon>0, \, \exists M,N > 0, \, \forall n > N, \, \Prob(|X_n/R_n| > M) < \varepsilon.
\end{align*}
The often used notation $X_n=o_p(1)$ means that $X_n$ converges to zero in probability, and $X_n=O_p(1)$ means that $X_n$ is bounded in probability.
\end{definition}
\subsection{Proof of Theorem 3.1}
We first state the following lemma, which is equivalent to \citet[Proposition 1]{Beirlant2003OnTR} with different notation.
\begin{lemma} \label{Ilimbound} Suppose assumption 2.1 and assumption 2.2 hold. Then 
$\forall \varepsilon>0, \; \exists t_0, \; \forall t, x$ such that $t \ge t_0$ and $tx \ge t_0$,
\begin{equation} \label{Iboundineq}
(1-\varepsilon) e^{-\varepsilon|\log x|} \leq\left[\frac{U(tx)-U(t)}{a(t)}-\frac{x^{\xi}-1}{\xi}\right] /\left[A\left({t}\right) I_{\xi, \rho}(x)\right] \leq(1+\varepsilon) e^{\varepsilon|\log x|},
\end{equation}
where
\[
A(t) = \frac{tU^{\prime\prime}(t)}{U^\prime(t)} - \xi + 1 \quad \textnormal{ and } \quad 
I_{\xi,\rho}(x) = 
\begin{cases}
\frac{1}{\rho}\left(\frac{x^{\xi+\rho}-1}{\xi+\rho} - \frac{x^{\xi}-1}{\xi}\right), \quad \rho<0, \, \xi+\rho \neq 0, \\
\frac{1}{\rho}\left(\log{x} - \frac{x^{\xi}-1}{\xi}\right), \quad\;\;\;\, \rho<0, \, \xi+\rho = 0, \\
\frac{1}{\xi}\left(x^{\xi}\log{x} - \frac{x^{\xi}-1}{\xi}\right), \;\;\; \rho=0.
\end{cases}
\]
\end{lemma}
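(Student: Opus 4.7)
The statement is the second-order extended Potter inequality for the tail quantile function $U$, which is essentially a restatement of Proposition 1 of Beirlant et al.\ (2003) in the notation of Assumption 2.2. The plan is to re-derive it by combining an integral representation of $U$ via $a$ with Potter-type bounds for the regularly varying functions $a \in RV_\xi$ and $|A| \in RV_\rho$.

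The first step is to convert Assumption 2.2 on $A(t) = tU''(t)/U'(t) - \xi + 1$ into a second-order condition on $a$. Since $a(t) = tU'(t)$, a direct calculation gives $ta'(t)/a(t) = \xi + A(t)$. Integrating the logarithmic derivative between $t$ and $tx$ yields the exact identity
$$\frac{a(tx)}{a(t)} = x^\xi \exp\!\left(\int_1^x \frac{A(tu)}{u}\,du\right).$$
Similarly, integrating $U'(s) = a(s)/s$ gives the clean decomposition
$$\frac{U(tx) - U(t)}{a(t)} - \frac{x^\xi - 1}{\xi} = \int_1^x \left[\frac{a(ts)}{a(t)} - s^\xi\right]\frac{ds}{s},$$
so everything reduces to uniformly controlling the bracketed term.

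The second step is to identify the pointwise limit. Since $A(tu)/A(t) \to u^\rho$ and $A(t) \to 0$, expanding the exponential gives $a(ts)/a(t) - s^\xi = s^\xi A(t)\,\psi_\rho(s) + o(A(t))$, where $\psi_\rho(s) = (s^\rho - 1)/\rho$ for $\rho < 0$ and $\psi_0(s) = \log s$. Integrating $s^{\xi - 1}\psi_\rho(s)$ over $[1,x]$ then reproduces exactly the piecewise definition of $I_{\xi,\rho}(x)$ in the lemma: the three cases correspond to whether $\xi + \rho$ vanishes and whether $\rho = 0$, which determine whether the resulting integrals produce logarithms or pure power terms. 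In particular, for $\rho < 0,\ \xi + \rho \neq 0$, one obtains $\rho^{-1}[(x^{\xi + \rho} - 1)/(\xi + \rho) - (x^\xi - 1)/\xi]$, and the other two cases are obtained by taking appropriate limits.

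The third and most delicate step is to promote pointwise convergence to the uniform Potter-type envelope $(1 \pm \varepsilon)e^{\pm\varepsilon|\log x|}$. For this I would invoke the first- and second-order Potter bounds for $|A| \in RV_\rho$, which state that for any $\varepsilon' > 0$ there exists $t_0$ with $|A(tu)/A(t) - u^\rho| \le \varepsilon'\, u^\rho e^{\varepsilon'|\log u|}$ uniformly for $t, tu \ge t_0$. Substituting this estimate into the exponent of $a(ts)/a(t)$, applying $|e^y - 1 - y| \le \tfrac{1}{2}y^2 e^{|y|}$ to linearize the exponential (with $y = \int_1^s A(tu)/u\,du = O(A(t))$), and integrating the resulting envelopes against $ds/s$ on $[1,x]$ (or $[x,1]$ when $x < 1$) produces the claimed sandwich bound. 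The main obstacle is bookkeeping: several multiplicative error factors must be combined so that their product collapses to $(1 \pm \varepsilon)e^{\pm\varepsilon|\log x|}$ rather than a weaker envelope. This is achieved by choosing the auxiliary tolerance $\varepsilon'$ small in terms of $\varepsilon$ and enlarging $t_0$ so that $|A(t)|$, together with all Potter remainders, is absorbed into $\varepsilon$.
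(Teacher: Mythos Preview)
Your proposal is mathematically sound and outlines the standard route to proving Drees-type second-order Potter inequalities from scratch: the integral representation via $a(t)=tU'(t)$, the first-order expansion of the exponential, and the uniform control via Potter bounds for $|A|\in RV_\rho$ are all correct and combine as you describe. This is essentially how such inequalities are established in, e.g., de Haan and Ferreira, Theorem~2.3.6.

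However, the paper takes a much shorter and genuinely different route. Rather than re-deriving the inequality, the paper observes that the statement is already Proposition~1 of Beirlant et al.\ (2003), only written in different variables: Beirlant et al.\ work with $V(t)=(\bar F)^{-1}(e^{-t})$, $\tilde A(t)=V''(\log t)/V'(\log t)-\xi$, and $\tilde I_{\xi,\rho}(x)=I_{\xi,\rho}(e^x)$, and state the bound for $V(t+x)-V(t)$. The paper's proof consists entirely of the change of variables $V(\log t)=U(t)$, from which $V'(\log t)=a(t)$ and $\tilde A(t)=A(t)$ follow by direct differentiation, and then substituting $(\log t,\log x)$ for $(t,x)$ in Beirlant's inequality to recover the form stated in the lemma. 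What the paper's approach buys is brevity and a clean attribution; what your approach buys is self-containment and an explicit audit trail for how the three cases of $I_{\xi,\rho}$ arise from the integration, at the cost of the nontrivial bookkeeping you flag in Step~3.
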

\begin{proof}
In \cite{Beirlant2003OnTR}, the statement is given as $\forall \varepsilon>0, \; \exists t_0, \; \forall t, x$ such that $t \ge t_0$ and $t+x \ge t_0$,
\begin{equation}\label{beirlant}
(1-\varepsilon) e^{-\varepsilon|x|} \leq\left[\frac{V(t+x)-V(t)}{V^{\prime}(t)}-\frac{e^{\xi x}-1}{\xi}\right] /\left[\tilde{A}\left(e^{t}\right) \tilde{I}_{\xi, \rho}(x)\right] \leq(1+\varepsilon) e^{\varepsilon|x|},
\end{equation}
where 
\[
V(t)=(\bar{F})^{-1}\left(e^{-t}\right), \qquad \tilde{A}(t)=\frac{V^{\prime \prime}(\log t)}{V^{\prime}(\log t)}-\xi, \qquad \tilde{I}_{\xi, \rho}(x) = I_{\xi, \rho}(e^x).
\]
Then, for $t \ge 1$,
\[
V(\log t) = (\bar{F})^{-1}(1/t) = (1/\bar{F})^{-1}(t) = U(t),
\]
and
\[
V^\prime(\log t) = tU^\prime(t) = a(t), \quad V^{\prime\prime}(\log t) = t^2 U^{\prime\prime}(t) + tU^{\prime}(t) \quad \Rightarrow \quad \tilde{A}(t) = A(t).
\]
Since $\log$ is strictly increasing, \cref{beirlant} holds with $\log t$ and $\log x$ where $\log t \ge t_0$ and $\log tx \ge t_0$. Substituting expressions in \cref{beirlant}, we get \cref{Iboundineq}.
\end{proof}
The following corollary will also be used in the main proof of this section.
\begin{corollary} \label{I_limit_cor} An immediate consequence of \cref{Ilimbound} is for all $x>0,$
\begin{equation}\label{I_limit}
\lim _{t \rightarrow \infty} \frac{\frac{U(t x)-U(t)}{a(t)}-\frac{x^{\xi}-1}{\xi}}{A(t)} = I_{\xi, \rho}(x).
\end{equation}
\end{corollary}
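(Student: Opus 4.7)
The plan is to derive the limit directly from the two-sided bound in \cref{Ilimbound} via a sandwich argument, since the hard analytic work establishing the uniform estimate is already packaged in that lemma. My target is to show that the central ratio in \cref{Iboundineq} tends to $1$ as $t \to \infty$; multiplying this through by $A(t) I_{\xi,\rho}(x)$ then yields \cref{I_limit}.

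First I would fix $x > 0$, which freezes $|\log x|$ as a constant independent of $t$ and $\varepsilon$. Given any $\varepsilon > 0$, \cref{Ilimbound} supplies a $t_0$ such that the conditions $t \ge t_0$ and $tx \ge t_0$ both hold whenever $t \ge t_0/\min(1,x)$; hence \cref{Iboundineq} applies for all sufficiently large $t$. Denoting the central quotient in \cref{Iboundineq} by $R(t,x)$, taking $\liminf$ and $\limsup$ as $t \to \infty$ yields
\[
(1-\varepsilon)\, e^{-\varepsilon|\log x|} \le \liminf_{t \to \infty} R(t,x) \le \limsup_{t \to \infty} R(t,x) \le (1+\varepsilon)\, e^{\varepsilon|\log x|}.
\]
Since $x$ (and thus $|\log x|$) is fixed while $\varepsilon > 0$ is arbitrary, sending $\varepsilon \downarrow 0$ forces both outer bounds to $1$, so the limit $\lim_{t\to\infty} R(t,x) = 1$. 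Rearranging (i.e., multiplying by $A(t) I_{\xi,\rho}(x)$, which is nonzero for $x \ne 1$ and $t$ large by \cref{A_conditions}) gives \cref{I_limit}.

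The only subtle case is $x = 1$, where $I_{\xi,\rho}(1) = 0$ and the ratio in \cref{Iboundineq} is formally $0/0$, so the sandwich does not apply directly. But at $x = 1$ the numerator of \cref{I_limit} is identically zero for every $t$ (since $U(t) - U(t) = 0$ and $(1^\xi - 1)/\xi = 0$), so the ratio equals $0$ trivially and matches $I_{\xi,\rho}(1) = 0$. I do not anticipate any genuine obstacle: once \cref{Ilimbound} is in hand, the corollary is essentially a one-line sandwich, consistent with the excerpt describing it as an \emph{immediate} consequence.
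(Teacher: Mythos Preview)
Your sandwich argument is exactly the ``immediate consequence'' the paper intends; the paper offers no explicit proof beyond that phrase and a pointer to \citet[Theorem 2.3.12]{dehaan2006}. One small wording slip: in the final step you want to multiply $R(t,x)$ by the constant $I_{\xi,\rho}(x)$ (so that $R(t,x)\,I_{\xi,\rho}(x)$ is precisely the ratio in \cref{I_limit}), not by $A(t)\,I_{\xi,\rho}(x)$, but this does not affect the validity of the argument.
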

\cref{I_limit_cor} can also be found in \citet[Theorem 2.3.12]{dehaan2006}.
Before proving our main result of this section, we first recall the dominated convergence theorem which will be needed later.
\begin{theorem}[Dominated convergence theorem]\label{thm:dom}
Let $\{f_n\}_{n=1}^\infty$ be a sequence of real-valued functions defined on $S \subset \mathbb{R}$ such that $\forall x \in S, \, \lim_{n\to\infty} f_n(x) \to f(x)$. If $\forall x \in S, \, n,$
\[
|f_n(x)| \le g(x)
\]
for some integrable (i.e., the integral is finite over $S$) function $g$, then
\[
\lim_{n\to\infty} \int_S f_n(x)dx = \int_S \lim_{n\to\infty} f_n(x)dx = \int_S f(x)dx.
\]
\end{theorem}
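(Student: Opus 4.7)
The plan is to prove the Dominated Convergence Theorem via a standard two-sided application of Fatou's lemma. First I would handle preliminaries: since each $f_n$ is real-valued and $f_n(x) \to f(x)$ pointwise on $S$, the limit $f$ is measurable as the pointwise limit of measurable functions, and passing to the limit in $|f_n(x)| \le g(x)$ yields $|f(x)| \le g(x)$, so by monotonicity of the integral $f$ is integrable over $S$. This sets up the two sides of the squeeze.

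The core of the proof exploits the hypothesis $|f_n| \le g$ by rewriting it as the pair of nonnegativity conditions $g + f_n \ge 0$ and $g - f_n \ge 0$ on $S$. This gives two sequences of nonnegative measurable functions to which Fatou's lemma applies. For the first, I would write
\[
\int_S (g + f) \, dx = \int_S \liminf_{n \to \infty} (g + f_n) \, dx \le \liminf_{n \to \infty} \int_S (g + f_n) \, dx = \int_S g \, dx + \liminf_{n \to \infty} \int_S f_n \, dx.
\]
Since $\int_S g \, dx$ is finite by assumption, subtracting it from both sides yields $\int_S f \, dx \le \liminf_{n \to \infty} \int_S f_n \, dx$.

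Symmetrically, applying Fatou's lemma to $g - f_n \ge 0$ gives
\[
\int_S (g - f) \, dx \le \liminf_{n \to \infty} \int_S (g - f_n) \, dx = \int_S g \, dx - \limsup_{n \to \infty} \int_S f_n \, dx,
\]
where I used $\liminf(-a_n) = -\limsup(a_n)$ on the right. Rearranging gives $\limsup_{n \to \infty} \int_S f_n \, dx \le \int_S f \, dx$. Chaining the two inequalities yields
\[
\limsup_{n \to \infty} \int_S f_n \, dx \le \int_S f \, dx \le \liminf_{n \to \infty} \int_S f_n \, dx,
\]
which forces equality throughout and hence the existence of $\lim_{n \to \infty} \int_S f_n \, dx = \int_S f \, dx$.

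The only real obstacle is that Fatou's lemma is invoked as a black box; if it is not assumed available, I would prove it as a subsidiary lemma by applying the monotone convergence theorem to the nondecreasing sequence $h_n(x) = \inf_{k \ge n} f_k(x)$, which converges pointwise up to $\liminf_n f_n(x)$ and satisfies $\int h_n \le \int f_n$ by monotonicity. Beyond that, the proof is routine: the finiteness of $\int g$ is essential for cancelling the $\int g$ terms (otherwise the arithmetic $\infty - \infty$ is undefined), but this is guaranteed by the hypothesis that $g$ is integrable, so no further technical difficulty arises.
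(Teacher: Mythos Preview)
Your proof is correct and is the standard textbook argument for the Dominated Convergence Theorem via the two-sided application of Fatou's lemma. However, the paper does not actually prove this statement: it merely \emph{recalls} the Dominated Convergence Theorem as a known tool, stated without proof, in order to apply it in the proof of Theorem~3.1. So there is no ``paper's own proof'' to compare against; your argument simply supplies what the paper takes for granted. Nothing in your write-up is wrong or missing, though note that the paper's statement does not explicitly assume measurability of the $f_n$ (it is implicit in the phrase ``integrable''), so your remark that $f$ is measurable as a pointwise limit of measurable functions is filling in a small gap in the hypotheses as stated.
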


\vspace{0.5cm}

\textbf{Proof of Theorem 3.1.} We use \cref{I_limit_cor} to derive a convergence result for the approximation error of the VaR, i.e, $q_\alpha - q_{u,\alpha}$. Then, using \cref{Ilimbound} and \cref{thm:dom}, we will be able to derive the convergence of $\epsilon_{u,\alpha}$.

For any $p \in (0,1)$ and $y \in \dom F$ such that $F(y) = p$,
\[\left(\frac{1}{\bar{F}}\right)(y) = \frac{1}{\bar{F}(y)} = \frac{1}{1-p},
\]
which implies that
\[U\left(\frac{1}{1-p}\right) = \left(\frac{1}{\bar{F}}\right)^{-1}(1/(1-p)) = y.
\]
Hence,
\[U(1/(1-\alpha)) = U(\tau_u\beta) = q_\alpha \quad \textnormal{ and } \quad U(\tau_u) = u.
\]
Then, from the definition of $q_{u,\alpha}$ we get
\[
q_{u,\alpha} = u+\frac{\sigma(u)}{\xi}\left(\beta^\xi - 1\right) = U(\tau_u) + \frac{a(\tau_u)}{\xi}(\beta^\xi - 1).
\]
Setting $D_u(\beta)=(q_{\alpha}-q_{u,\alpha})/(a(\tau_u) A(\tau_u))$, it then follows from the previous two equations and \cref{I_limit_cor} with $t=\tau_u$, $x=\beta$ that 
\begin{equation}\label{q_limit}
D_u(\beta) = \frac{U(\tau_u\beta) - U(\tau_u) - \frac{a(\tau_u)}{\xi}(\beta^\xi - 1)}{a(\tau_u)A(\tau_u)} =  \frac{\frac{U(\tau_u\beta) - U(\tau_u)}{a(\tau_u)} - \frac{\beta^\xi-1}{\xi}}{A(\tau_u)}  \to I_{\xi,\rho}(\beta) \quad \textnormal{ as } u \to \infty.
\end{equation}
From the definition of the GPD approximation error and the CVaR, for a fixed $u,\alpha$,
\begin{equation}\label{gpd_ae_integral}
\frac{\epsilon_{u,\alpha}}{a(\tau_u)A(\tau_u)} = \frac{c_{u,\alpha} - c_{\alpha}}{a(\tau_u)A(\tau_u)} = -\frac{1}{1-\alpha}\int_\alpha^1 \frac{q_{\gamma}-q_{u,\gamma}}{a(\tau_u)A(\tau_u)} d\gamma = -\beta\int_\beta^\infty \frac{D_u(x)}{x^2} dx,
\end{equation}
where $D_u(x)=(q_{\gamma}-q_{u,\gamma})/(a(\tau_u) A(\tau_u))$ and we have used the substitution $x=\bar{F}(u)/(1-\gamma)$.
We now apply the dominated convergence theorem to get the limiting behaviour of \cref{gpd_ae_integral} as $u\to\infty$. From \cref{Ilimbound}, $\forall \varepsilon > 0, \, \exists u_0$ such that $\forall u \geq u_0, \, x \in [\beta, \infty)$,
\[
\left|\frac{D_u(x)}{x^2}\right| \le (1+\varepsilon) x^{\varepsilon-2}I_{\xi,\rho}(x).
\]
$(1+\varepsilon) x^{\varepsilon-2}I_{\xi,\rho}(x)$ is integrable over $[\beta, \infty)$ as long as $\varepsilon < 1-\xi$. Since $\xi<1$, let $\varepsilon=(1-\xi)/2$. Then \cref{thm:dom} can be applied to $D_u(x)/x^2$. Setting $K_{\xi,\rho}(\beta) = -\beta\int_\beta^\infty [I_{\xi,\rho}(x)/x^2]dx$, it follows that
\begin{align*}
\lim_{u\to\infty}\frac{\epsilon_{u,\alpha}}{a(\tau_u)A(\tau_u)} &=\lim_{u\to\infty}-\beta\int_\beta^\infty \frac{D_u(x)}{x^2} dx\\ &= -\beta\int_\beta^\infty \lim_{u\to\infty}\frac{D_u(x)}{x^2} dx\\ &= -\beta\int_\beta^\infty \frac{I_{\xi,\rho}(x)}{x^2} dx\\ &= K_{\xi,\rho}(\beta),
\end{align*}
where the last integral can be computed explicitly to obtain \cref{gpd_approx_error_cases}.
\qed

\subsection{Proof of Theorem 4.2}
First recall Slutsky's lemma (see, for example, \citet[Lemma 2.8]{vaart_1998}).
\begin{lemma}[Slutsky]\label{slutsky}
Let $X_n, \, X, \, Y_n$ be random vectors or variables. If $X_n\dto X$ and $Y_n\pto c$ for a constant $c$, then
\begin{enumerate}[(i)]
    \item $X_{n}+Y_{n} \dto X+c$;
    \item $X_{n} Y_{n} \dto X c$;
    \item $X_{n} / Y_{n} \dto X / c$ provided $c \neq 0.$
\end{enumerate}
\end{lemma}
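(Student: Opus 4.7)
The plan is to reduce all three conclusions to a single joint convergence statement, $(X_n, Y_n) \dto (X, c)$, and then invoke the continuous mapping theorem. This factorization is clean because addition and multiplication are continuous everywhere on $\mathbb{R}^2$, and division is continuous on $\{y \neq 0\}$, so once joint convergence is in hand each of (i)--(iii) is immediate from continuous mapping.

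The substantive step is establishing joint convergence. The key observation is that when one component of the limit is a constant, marginal convergence upgrades to joint convergence. I would work with the Portmanteau characterization: it suffices to show $\limsup_n \Prob((X_n, Y_n) \in F) \le \Prob((X, c) \in F)$ for every closed $F \subset \mathbb{R}^2$. Fix $\varepsilon > 0$ and decompose
\[
\Prob((X_n, Y_n) \in F) \le \Prob\bigl(X_n \in F^\varepsilon\bigr) + \Prob\bigl(|Y_n - c| \ge \varepsilon\bigr),
\]
where $F^\varepsilon = \{x : (x, y) \in F \text{ for some } y \text{ with } |y - c| < \varepsilon\}$. The second term vanishes since $Y_n \pto c$. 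For the first, I would apply Portmanteau in the reverse direction (using $X_n \dto X$) to its closure to obtain $\limsup_n \Prob(X_n \in F^\varepsilon) \le \Prob(X \in \overline{F^\varepsilon})$, then let $\varepsilon \downarrow 0$. The decreasing intersection $\bigcap_{\varepsilon > 0} \overline{F^\varepsilon}$ equals the closed slice $\{x : (x, c) \in F\}$ because $F$ is closed, yielding the Portmanteau bound and hence joint convergence.

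With joint convergence secured, the final step is to apply continuous mapping. For $g(x, y) = x + y$ and $g(x, y) = xy$, continuity on all of $\mathbb{R}^2$ gives (i) and (ii) with no restrictions on $c$. For (iii), I would use the stronger form of continuous mapping that allows discontinuities on a set of measure zero under the limit law: since $g(x, y) = x/y$ is continuous on $\{y \neq 0\}$ and the limit $(X, c)$ places full mass there whenever $c \neq 0$, conclusion (iii) follows.

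The main obstacle is the slice-identification at the end of the joint-convergence step: verifying that $\bigcap_{\varepsilon > 0} \overline{F^\varepsilon}$ reduces to $\{x : (x, c) \in F\}$ requires a careful topological argument using closedness of $F$ and the fact that the slicing point $c$ is deterministic. An alternative is to bypass Portmanteau entirely and use the Skorokhod representation theorem to realize $X_n \to X$ almost surely on a common probability space, then combine with $Y_n \pto c$ via a subsequence argument; this gives joint almost-sure convergence along subsequences and hence joint convergence in distribution. Either route works, but the Portmanteau approach is self-contained and avoids invoking heavier machinery.
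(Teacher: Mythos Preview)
Your proof is correct and follows the standard route (joint convergence via Portmanteau, then continuous mapping), which is precisely the argument in the reference the paper cites \cite[Lemma~2.8]{vaart_1998}. The paper itself does not supply a proof of this lemma; it merely recalls the statement and defers to van der Vaart, so there is no in-paper argument to compare against. One small remark: the lemma as stated allows $X_n$ and $Y_n$ to be random \emph{vectors}, whereas your write-up is phrased for $\mathbb{R}^2$; the same argument goes through verbatim in $\mathbb{R}^d \times \mathbb{R}^m$ with the obvious interpretation of the operations, so this is cosmetic rather than a gap.
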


\vspace{0.5cm}

\textbf{Proof of Theorem 4.2.}
First note that since $\hat{A}_n$ and $\hat{b}_n$ are consistent, i.e.,
\[
\frac{\hat{A}_n}{A(n/k)} \pto 1, \qquad \hat{b}_n \pto b_{\xi,\rho}, 
\]
then by \cref{slutsky}, the fact that $\hSig/a(n/k) \pto 1$ (which follows from \cref{mle_normality}), and the assumption of \cref{mle_normality} that $\lim_{n\to\infty}\sqrt{k}A(n/k) \to \lambda < \infty$,
\begin{align*}
\sqrt{k}\hat{A}_n\left(\hat{b}^{(1)}_n, \;\frac{\hSig}{a(n/k)}\hat{b}^{(2)}_n\right) &= \sqrt{k}A(n/k)\frac{\hat{A}_n}{A(n/k)}\left(\hat{b}^{(1)}_n, \;\frac{\hSig}{a(n/k)}\hat{b}^{(2)}_n\right)\\
&\pto \lambda \left(b_{\xi,\rho}^{(1)}, \; b_{\xi,\rho}^{(2)}\right) \\&= \lambda b_{\xi,\rho}.
\end{align*}
Then, by expanding terms and applying \cref{slutsky} once again,
\begin{align*}
\sqrt{k}(\hat{\xi}_n - \xi, \; \hat{\sigma}_n/a(n/k)-1) &= \sqrt{k}(\hXi - \hat{A}_n\hat{b}^{(1)}_n - \xi, \;  \hSig(1-\hat{A}_n\hat{b}^{(2)}_n)/a(n/k) - 1) \\
&= \sqrt{k}(\hXi - \xi, \; \hSig/a(n/k)-1) - \sqrt{k}\hat{A}_n\left(\hat{b}^{(1)}_n, \;\frac{\hSig}{a(n/k)}\hat{b}^{(2)}_n\right) \\
&\dto N(\lambda b_{\xi,\rho}, \mathbf{\Sigma}) - \lambda b_{\xi,\rho} \\
&= N(0, \mathbf{\Sigma}).
\end{align*}
\qed
\subsection{Proof of Theorem 4.3}\label{pot_normality_proof}
We first give the delta method, which can be found in, for example, \citet[Appendix B.3.4.1]{remillard2016statistical}.
\begin{theorem}[Delta method] \label{delta_method}
Let $\hat{\theta}_n \in \mathbb{R}^m$ be a random vector based on a sample of size $n$. Suppose that $h:\mathbb{R}^m \mapsto \mathbb{R}$ is such that for $i=1,\ldots,m$, $\frac{\partial h}{\partial \theta_i}$ exists and is continuous in a neighborhood of $\theta$. If $\sqrt{n}(\hat{\theta}_n - \theta) \dto N(0, V)$, then
$$\sqrt{n}(h(\hat{\theta}_n) - h(\theta)) \dto N(0, \nabla h(\theta)^\top V \nabla h(\theta)),$$
where $\nabla h(\theta)$ is the gradient of $h$ evaluated at $\theta$.
\end{theorem}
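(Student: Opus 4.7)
The plan is to perform a first-order Taylor expansion of $h$ around $\theta$ and then separate the linear term from a remainder, using continuity of the partial derivatives to control the remainder and \cref{slutsky} to combine everything.

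First I would extract the stochastic boundedness built into the hypothesis. Since $\sqrt{n}(\hat{\theta}_n - \theta) \dto N(0, V)$, the sequence $\sqrt{n}(\hat{\theta}_n - \theta)$ is tight, so $\hat{\theta}_n - \theta = O_p(n^{-1/2})$ and in particular $\hat{\theta}_n \pto \theta$. Let $U$ be a convex open neighborhood of $\theta$ on which all partial derivatives of $h$ exist and are continuous. Then $\Prob(\hat{\theta}_n \in U) \to 1$, and on this event the integral form of the mean value theorem gives
\[
h(\hat{\theta}_n) - h(\theta) = \int_0^1 \nabla h\bigl(\theta + t(\hat{\theta}_n - \theta)\bigr)^\top (\hat{\theta}_n - \theta)\, dt = \nabla h(\theta)^\top(\hat{\theta}_n - \theta) + R_n,
\]
where
\[
R_n = \int_0^1 \bigl[\nabla h\bigl(\theta + t(\hat{\theta}_n - \theta)\bigr) - \nabla h(\theta)\bigr]^\top (\hat{\theta}_n - \theta)\, dt.
\]

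Next I would bound the remainder. Because $\nabla h$ is continuous at $\theta$ and $\hat{\theta}_n \pto \theta$, for every $\varepsilon > 0$ the quantity $\sup_{t \in [0,1]} \|\nabla h(\theta + t(\hat{\theta}_n - \theta)) - \nabla h(\theta)\|$ is eventually less than $\varepsilon$ with arbitrarily high probability. Combined with $\|\hat{\theta}_n - \theta\| = O_p(n^{-1/2})$, the Cauchy–Schwarz bound on the integrand yields $\|R_n\| = o_p(1) \cdot O_p(n^{-1/2}) = o_p(n^{-1/2})$, hence $\sqrt{n}\, R_n = o_p(1)$.

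Finally, multiplying the Taylor expansion by $\sqrt{n}$ gives
\[
\sqrt{n}\bigl(h(\hat{\theta}_n) - h(\theta)\bigr) = \nabla h(\theta)^\top \sqrt{n}(\hat{\theta}_n - \theta) + \sqrt{n}\, R_n.
\]
Applying the continuous mapping theorem to the linear functional $z \mapsto \nabla h(\theta)^\top z$ and the assumption $\sqrt{n}(\hat{\theta}_n - \theta) \dto N(0, V)$ shows that the first term converges in distribution to $N(0, \nabla h(\theta)^\top V \nabla h(\theta))$. Since $\sqrt{n}\, R_n \pto 0$, \cref{slutsky}(i) gives the claimed limit. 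The only delicate step I anticipate is the remainder bound: because the hypothesis provides only continuity (not boundedness) of the partial derivatives, the argument must be localized to a convex neighborhood and relies on uniform-in-$t$ control of the integrand, but the bookkeeping is routine once $\hat{\theta}_n \pto \theta$ is in hand.
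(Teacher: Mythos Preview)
Your argument is correct and is the standard textbook proof of the delta method. Note, however, that the paper does not actually supply its own proof of this statement: it simply quotes the result from \citet[Appendix B.3.4.1]{remillard2016statistical} and uses it as a tool in the proofs of \cref{tau_lemma} and \cref{thm:pot_normality}. So there is no ``paper's proof'' to compare against; your write-up is exactly the kind of derivation one would find in the cited reference (Taylor expansion with integral remainder, continuity of $\nabla h$ to force the remainder to be $o_p(n^{-1/2})$, then \cref{slutsky}).
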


Next, we prove some useful lemmas which will be used in the proof of \cref{thm:pot_normality}.
\begin{lemma} \label{tau_lemma}
Let $X_1,\ldots,X_n$ be an i.i.d. sample with common cdf $F$, and suppose $k=k_n\to\infty$ and $k/n\to 0$ as $n\to\infty$. With $u=\thresh$ and $\xi \in \mathbb{R}$,
\[
\sqrt{k}\left((k\tau_u/n)^\xi - 1\right) \dto N(0, \xi^2).
\]
\begin{proof}
Letting $h_\xi(x) = x^{-\xi},$
\[
\sqrt{k}\left((k\tau_u/n)^\xi - 1\right) = \sqrt{k}\left((n\bar{F}(u)/k)^{-\xi} - 1\right) = \sqrt{k}\left(h_\xi(n\bar{F}(u)/k) - h_\xi(1)\right).
\]
From \citet[Theorem 3.1]{BEIRLANT20092800}, we know that $\sqrt{k} (n\bar{F}(u)/k - 1) \dto N(0,1).$ Hence, by \cref{delta_method},
\[
\sqrt{k}\left((k\tau_u/n)^\xi - 1\right) \dto N(0, h_\xi^\prime(1) \cdot 1 \cdot h_\xi^\prime(1)) = N(0, \xi^2).
\]
\end{proof}
\end{lemma}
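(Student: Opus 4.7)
\textbf{Proof proposal for Lemma \ref{tau_lemma}.}
The plan is to reduce the claim to a one-dimensional delta method applied to the known asymptotic normality of the empirical tail probability at the threshold order statistic. First, I would rewrite the quantity inside the statement in a form amenable to standard tail-empirical-process results. Since $\tau_u = 1/\bar{F}(u)$, we have
\[
\frac{k\tau_u}{n} \;=\; \frac{k}{n\bar{F}(u)} \;=\; \Bigl(\frac{n\bar{F}(u)}{k}\Bigr)^{-1},
\]
so $(k\tau_u/n)^\xi = h_\xi\bigl(n\bar{F}(u)/k\bigr)$ where $h_\xi(x) = x^{-\xi}$, and trivially $h_\xi(1)=1$. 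Hence proving the lemma amounts to establishing
\[
\sqrt{k}\Bigl(h_\xi\bigl(n\bar{F}(u)/k\bigr) - h_\xi(1)\Bigr) \dto N(0,\xi^2).
\]

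The key ingredient is the asymptotic normality $\sqrt{k}\bigl(n\bar{F}(\thresh)/k - 1\bigr) \dto N(0,1)$, which I would invoke from \citet[Theorem 3.1]{BEIRLANT20092800}. (If one wanted to prove this directly, the route is to use the probability integral transform: letting $U_i = 1-\bar{F}(X_i)$, the $U_i$ are i.i.d.\ uniform on $(0,1)$ and $\bar{F}(\thresh) = 1-U_{(n-k,n)} = U^{\star}_{(k+1,n)}$, the $(k+1)$-th smallest of $1-U_i$, so by standard uniform order-statistic asymptotics $\sqrt{k}(n U^{\star}_{(k+1,n)}/k - 1) \dto N(0,1)$ as $k,n/k\to\infty$.) Given this convergence, I would then apply the delta method (\cref{delta_method}) with $h_\xi$, which is continuously differentiable on a neighborhood of $1$ with $h_\xi'(1) = -\xi$. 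The delta method yields
\[
\sqrt{k}\bigl(h_\xi(n\bar{F}(u)/k) - 1\bigr) \dto N\bigl(0,\, (h_\xi'(1))^2\bigr) \;=\; N(0,\xi^2),
\]
which is exactly the claim.

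The main obstacle in this argument is not the delta-method step itself, which is mechanical, but rather the auxiliary convergence $\sqrt{k}(n\bar{F}(\thresh)/k - 1) \dto N(0,1)$. This expresses the fact that the random threshold $\thresh$ sits at an empirical tail probability close to $k/n$ with Gaussian fluctuations of order $1/\sqrt{k}$, and it requires the joint regime $k\to\infty$, $k/n\to 0$ so that the $(k+1)$-th upper uniform order statistic is in the central-limit regime. Once this result is quoted, the remainder is a routine one-line application of the delta method combined with Slutsky's lemma to handle any $o_p(1)$ remainder.
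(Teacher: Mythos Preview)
Your proposal is correct and follows essentially the same approach as the paper: rewrite $(k\tau_u/n)^\xi$ as $h_\xi(n\bar F(u)/k)$ with $h_\xi(x)=x^{-\xi}$, invoke \citet[Theorem 3.1]{BEIRLANT20092800} for $\sqrt{k}(n\bar F(u)/k-1)\dto N(0,1)$, and apply the delta method using $h_\xi'(1)=-\xi$. Your parenthetical remark on deriving the auxiliary convergence via uniform order statistics is additional detail the paper omits, but the core argument is identical.
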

\begin{corollary} \label{su_corr}
Let $\alpha=\alpha_n=1-(1/\beta)k/n$ where $\beta>1$ is a constant not depending on $n$. Then,
\[
\sqrt{k}(s_{u,\alpha}^\xi - \beta^\xi) \dto N(0, \xi^2\beta^{2\xi}).
\]
\begin{proof}
\[
s_{u,\alpha} = \bar{F}(u)\left(\frac{n}{k}\right)\frac{k}{n(1-\alpha)} = \beta n\bar{F}(u)/k = \frac{\beta n}{k\tau_u},
\]
and so
\begin{equation}\label{su_beta_form}
\sqrt{k}(s_{u,\alpha}^\xi - \beta^\xi) = \beta^\xi\sqrt{k} ((k\tau_u /n)^{-\xi} - 1) = -\beta^\xi (n\bar{F}(u)/k)^\xi \sqrt{k} ((k\tau_u /n)^{\xi} - 1).
\end{equation}
\citet[Theorem 3.1]{BEIRLANT20092800} implies that $n\bar{F}(u)/k \pto 1$. Hence, \cref{tau_lemma} with \cref{su_beta_form} implies
\[
\sqrt{k}(s_{u,\alpha}^\xi - \beta^\xi) \dto N(0, \xi^2\beta^{2\xi}).
\]
\end{proof}
\end{corollary}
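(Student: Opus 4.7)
The plan is to reduce the statement to a direct application of \cref{tau_lemma} by rewriting $s_{u,\alpha}^\xi$ in a form involving $(k\tau_u/n)^{-\xi}$. Since $\alpha_n = 1-(1/\beta)k/n$ gives $1-\alpha_n = k/(\beta n)$, I would first compute
\[
s_{u,\alpha} \;=\; \frac{\bar{F}(u)}{1-\alpha_n} \;=\; \frac{\beta\, n\bar{F}(u)}{k} \;=\; \frac{\beta}{k\tau_u/n},
\]
using $\tau_u = 1/\bar{F}(u)$. Raising to the power $\xi$ yields $s_{u,\alpha}^\xi = \beta^\xi (k\tau_u/n)^{-\xi}$, so the quantity of interest becomes
\[
\sqrt{k}\bigl(s_{u,\alpha}^\xi - \beta^\xi\bigr) \;=\; \beta^\xi\,\sqrt{k}\bigl((k\tau_u/n)^{-\xi} - 1\bigr).
\]

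Next, I would invoke \cref{tau_lemma} with the parameter $-\xi$ substituted for $\xi$. The lemma is stated for arbitrary $\xi \in \mathbb{R}$, and its asymptotic variance $\xi^2$ is invariant under sign change, so this substitution gives
\[
\sqrt{k}\bigl((k\tau_u/n)^{-\xi}-1\bigr) \;\dto\; N(0,\xi^2).
\]
Multiplying by the deterministic constant $\beta^\xi$ and applying \cref{slutsky} then produces the desired conclusion
\[
\sqrt{k}\bigl(s_{u,\alpha}^\xi-\beta^\xi\bigr)\;\dto\; N\!\bigl(0, \beta^{2\xi}\xi^2\bigr).
\]

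An equivalent route that avoids re-invoking \cref{tau_lemma} with a flipped exponent would use the identity
\[
(k\tau_u/n)^{-\xi} - 1 \;=\; -(k\tau_u/n)^{-\xi}\bigl[(k\tau_u/n)^{\xi} - 1\bigr],
\]
combining \cref{tau_lemma} in its original form with the consistency $k\tau_u/n \pto 1$ (which itself follows from $n\bar{F}(u)/k \pto 1$, cited within the proof of \cref{tau_lemma}) via \cref{slutsky}. Either path reduces the corollary to a change-of-variable exercise layered on top of the preceding lemma; there is no analytic obstacle. The only point that deserves a line of justification is verifying that \cref{tau_lemma} genuinely covers the full real line for $\xi$, so that substituting $-\xi$ is legitimate.
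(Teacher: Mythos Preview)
Your proposal is correct and matches the paper's proof almost exactly: the paper carries out precisely your ``equivalent route,'' factoring $(k\tau_u/n)^{-\xi}-1 = -(n\bar{F}(u)/k)^{\xi}\bigl[(k\tau_u/n)^{\xi}-1\bigr]$ and combining \cref{tau_lemma} with $n\bar{F}(u)/k \pto 1$ via Slutsky. Your primary route (applying \cref{tau_lemma} directly with $-\xi$) is a legitimate and slightly cleaner shortcut, since the lemma is indeed stated for arbitrary real exponent.
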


\begin{lemma} \label{ank_lemma}
Suppose that the assumptions of \cref{mle_normality} hold. Then as $n\to\infty$,
\begin{equation} \label{eq:ank_lemma}
\sqrt{k}\left(\frac{a(\tau_u)}{a(n/k)} - (k\tau_u/n)^\xi\right) = o_p(1).
\end{equation}
\begin{proof}
Under \cref{F_assumption} and \cref{A_conditions}, the following uniform inequality from \citet[Theorem 2.3.6]{dehaan2006} holds: for any $\varepsilon, \delta > 0$ there exists $t_0 = t_0(\varepsilon, \delta)$ such that for all $t, tx \ge t_0$,
\begin{equation} \label{second_order_lemma}
\left|\frac{\frac{a(t x)}{a(t)}-x^{\xi}}{A(t)}-x^{\xi} \frac{x^{\rho}-1}{\rho}\right| \leq \varepsilon x^{\xi+\rho} \max \left(x^{\delta}, x^{-\delta}\right).
\end{equation}
Hence, with $t=n/k$ and $x=k\tau_u/n$, for any $\varepsilon, \delta > 0$ and with large enough $n$,
\begin{align*}
\sqrt{k}\left(\frac{a(\tau_u)}{a(n/k)} - (k\tau_u/n)^\xi\right) &= \sqrt{k}A(n/k) \left[\frac{\frac{a(\tau_u)}{a(n/k)} - (k\tau_u/n)^\xi}{A(n/k)} - (k\tau_u/n)^\xi\frac{(k\tau_u/n)^\rho - 1}{\rho}\right] \\&\qquad+ \sqrt{k}A(n/k)(k\tau_u/n)^\xi\frac{(k\tau_u/n)^\rho - 1}{\rho}\\ 
&\le \sqrt{k}A(n/k) \left\vert\frac{\frac{a(\tau_u)}{a(n/k)} - (k\tau_u/n)^\xi}{A(n/k)} - (k\tau_u/n)^\xi\frac{(k\tau_u/n)^\rho - 1}{\rho}\right\vert \\&\qquad+ \sqrt{k}A(n/k)(k\tau_u/n)^\xi\frac{(k\tau_u/n)^\rho - 1}{\rho} \\
&\leq \sqrt{k}A(n/k)\varepsilon (k\tau_u/n)^{\xi+\rho} \max \left((k\tau_u/n)^{\delta}, (k\tau_u/n)^{-\delta}\right) \\
&\qquad+\sqrt{k}A(n/k)(k\tau_u/n)^\xi\frac{(k\tau_u/n)^\rho - 1}{\rho}.
\end{align*}
Since $k\tau_u/n \pto 1$ and $\sqrt{k}A(n/k) \to \lambda < \infty$ as $n\to\infty$ (by the assumption of \cref{mle_normality}), and since $\varepsilon$ can be made arbitrarily small as $n\to\infty$, both terms tend to $0$ in probability as $n\to\infty,$ hence \cref{eq:ank_lemma} follows.
\end{proof}
\end{lemma}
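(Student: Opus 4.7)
\textbf{Proof plan for \cref{ank_lemma}.} The strategy is to apply the uniform second-order inequality for the auxiliary function $a(\cdot)$ (de Haan--Ferreira, Theorem 2.3.6), specialized at the random argument $x = k\tau_u/n$, and then exploit the two facts that $\sqrt{k}A(n/k)$ stays bounded (by the assumption of \cref{mle_normality}) and that $x \pto 1$ (via \citet[Theorem 3.1]{BEIRLANT20092800}, which asserts $n\bar{F}(u)/k = k\tau_u/n \pto 1$).

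First, I would rewrite the quantity of interest as
\[
\sqrt{k}\!\left(\tfrac{a(\tau_u)}{a(n/k)} - x^{\xi}\right)
= \sqrt{k}\,A(n/k)\cdot\frac{a(\tau_u)/a(n/k) - x^{\xi}}{A(n/k)},
\]
and then add and subtract the leading second-order term $x^{\xi}(x^\rho-1)/\rho$, obtaining the decomposition
\[
\sqrt{k}\!\left(\tfrac{a(\tau_u)}{a(n/k)} - x^{\xi}\right)
= \underbrace{\sqrt{k}\,A(n/k)\!\left[\tfrac{a(\tau_u)/a(n/k) - x^{\xi}}{A(n/k)} - x^{\xi}\tfrac{x^\rho-1}{\rho}\right]}_{T_1}
+ \underbrace{\sqrt{k}\,A(n/k)\,x^{\xi}\tfrac{x^\rho-1}{\rho}}_{T_2}.
\]

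For $T_1$, the plan is to apply the uniform inequality (stated as \cref{second_order_lemma} in the excerpt) with $t=n/k$: for any $\varepsilon,\delta>0$ there is an $n_0$ such that on the event $\{t, tx \geq t_0\}$ the bracket is bounded in absolute value by $\varepsilon\,x^{\xi+\rho}\max(x^\delta, x^{-\delta})$. Since $x \pto 1$, for any $\eta>0$ the random $x$ lies in a compact neighborhood of $1$ with probability at least $1-\eta$ for $n$ large, on which $x^{\xi+\rho}\max(x^\delta, x^{-\delta})$ is uniformly bounded; combined with $\sqrt{k}A(n/k)\to\lambda<\infty$ and the freedom to take $\varepsilon$ arbitrarily small, this gives $T_1 = o_p(1)$. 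For $T_2$, note that $(x^\rho-1)/\rho \to 0$ as $x\to 1$ (this is $\log x$ when $\rho=0$, otherwise by continuity of $x\mapsto (x^\rho-1)/\rho$ at $x=1$), so by the continuous mapping theorem the factor $x^{\xi}(x^\rho-1)/\rho \pto 0$, while $\sqrt{k}A(n/k)$ is bounded; by Slutsky's lemma $T_2 = o_p(1)$ as well.

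The main obstacle is justifying the use of the uniform bound at the random point $x=k\tau_u/n$: strictly speaking \cref{second_order_lemma} is a deterministic uniform inequality valid for $t, tx \geq t_0$, so I would formalize the argument by fixing arbitrary $\varepsilon', \eta > 0$, choosing a neighborhood $[1-r, 1+r]$ of $1$ on which the bound $\varepsilon\,x^{\xi+\rho}\max(x^\delta,x^{-\delta})$ is at most $2\varepsilon$, and then using $\Prob(x\in[1-r,1+r]) \to 1$ together with $k\to\infty$ (so that $tx\geq t_0$ eventually on this event) to conclude that $\Prob(|T_1|>\varepsilon')\to 0$. Once both $T_1$ and $T_2$ are shown to be $o_p(1)$, the lemma follows by additivity.
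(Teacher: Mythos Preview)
Your proposal is correct and follows essentially the same approach as the paper: the same uniform second-order inequality from de Haan--Ferreira (Theorem 2.3.6) is applied with $t=n/k$ and $x=k\tau_u/n$, the same add-and-subtract decomposition into $T_1$ and $T_2$ is used, and both pieces are shown to be $o_p(1)$ via $\sqrt{k}A(n/k)\to\lambda$ and $x\pto 1$. If anything, your treatment is slightly more careful than the paper's in making explicit why the deterministic uniform bound applies at the random argument $x$ (one minor slip: $n\bar F(u)/k$ and $k\tau_u/n$ are reciprocals, not equal, though both converge to $1$ so the conclusion is unaffected).
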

The following corollary is an immediate result by combining \cref{tau_lemma} and \cref{ank_lemma}.
\begin{corollary} \label{ank_corr}
Suppose that the assumptions of \cref{mle_normality} hold. Then,
\[
\sqrt{k}\left(\frac{a(\tau_u)}{a(n/k)} - 1\right) \dto N(0, \xi^2).
\]
\end{corollary}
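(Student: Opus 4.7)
The plan is to derive the corollary directly from the two preceding results by a simple additive decomposition followed by an application of Slutsky's lemma. Specifically, I would add and subtract $(k\tau_u/n)^\xi$ inside the quantity of interest, writing
\[
\sqrt{k}\left(\frac{a(\tau_u)}{a(n/k)} - 1\right) = \sqrt{k}\left(\frac{a(\tau_u)}{a(n/k)} - (k\tau_u/n)^\xi\right) + \sqrt{k}\left((k\tau_u/n)^\xi - 1\right).
\]

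Next, I would invoke \cref{ank_lemma} to conclude that the first summand on the right-hand side is $o_p(1)$, i.e., it converges to zero in probability. Then I would invoke \cref{tau_lemma} to conclude that the second summand converges in distribution to $N(0, \xi^2)$. Applying \cref{slutsky} (part (i)) to the sum of a term converging in probability to zero and a term converging in distribution to a centered normal with variance $\xi^2$ immediately yields the desired conclusion.

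Since both of the heavy-lifting results have already been established (the probabilistic fluctuation $\sqrt{k}(n\bar{F}(u)/k-1)\dto N(0,1)$ is propagated through the smooth map $x\mapsto x^{-\xi}$ via the delta method to give \cref{tau_lemma}, and the uniform second-order bound of \citet[Theorem 2.3.6]{dehaan2006} is used in \cref{ank_lemma} to absorb the discrepancy $a(\tau_u)/a(n/k) - (k\tau_u/n)^\xi$ at rate $\sqrt{k}A(n/k)\to\lambda<\infty$), there is no substantial obstacle here. The entire content of the corollary is captured by the telescoping identity above together with the two lemmas; so the proof reduces to roughly one line of algebra plus a single citation of Slutsky.
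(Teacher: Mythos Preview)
Your proposal is correct and matches the paper's own treatment exactly: the paper states that the corollary is an immediate result of combining \cref{tau_lemma} and \cref{ank_lemma}, which is precisely the additive decomposition plus Slutsky argument you describe.
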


\vspace{0.5cm}

\textbf{Proof of Theorem 4.3.}
With $\alpha=1-(1/\beta)k/n$ and $u=\thresh$,
\[
\frac{\hat{c}^{(n)}_{\alpha}}{a(n/k)} = 
\frac{\hat{\sigma}_n/a(n/k)}{1-\hat{\xi}_n}\left(1+ \frac{\beta^{\hat{\xi}_n} - 1}{\hat{\xi}_n}\right) + \frac{u}{a(n/k)} = d_\beta(\hat{\xi}_n, \hat{\sigma}_n/a(n/k)) + \frac{u}{a(n/k)},
\]
and recalling that $s_{u,\alpha} = \bar{F}(u)/(1-\alpha)$,
\begin{align*}
\frac{c_{u,\alpha}}{a(n/k)} &=
\frac{\sigma(u)/a(n/k)}{1-{\xi}}\left(1+ \frac{s_{u,\alpha}^{{\xi}} - 1}{{\xi}}\right) + \frac{u}{a(n/k)} \\
&= \frac{1}{1-{\xi}}\left(1+ \frac{s_{u,\alpha}^{{\xi}} - 1}{{\xi}}\right)  + \frac{\sigma(u)/a(n/k) - 1}{1-{\xi}}\left(1+ \frac{s_{u,\alpha}^{{\xi}} - 1}{{\xi}}\right) + \frac{u}{a(n/k)}.
\end{align*}
Then for the first term,
\begin{align*}
\frac{1}{1-{\xi}}\left(1+ \frac{s_{u,\alpha}^{{\xi}} - 1}{{\xi}}\right)
&= \frac{1}{1-{\xi}}\left(1+ \frac{s_{u,\alpha}^{{\xi}} - 1 +\beta^\xi -\beta^\xi}{{\xi}}\right) \\
&= \frac{1}{1-{\xi}}\left(1+ \frac{\beta^\xi-1}{{\xi}} +\frac{s_{u,\alpha}^{{\xi}} -\beta^\xi}{{\xi}}\right) \\
&= d_\beta(\xi, 1) + \frac{s_{u,\alpha}^{{\xi}} -\beta^\xi}{\xi (1-\xi)}.
\end{align*}
 Recalling that $\sigma(u)=a(\tau_u)$ and combining the previous expressions,
\begin{multline} \label{delta_plus_remainder}
\frac{\sqrt{k}}{a(n/k)}\left(\hat{c}^{(n)}_{\alpha} - c_{u,\alpha}\right) = \sqrt{k}\left(d_\beta(\hat{\xi}_n, \hat{\sigma}_n/a(n/k)) - d_\beta(\xi, 1)\right) \\- \sqrt{k}\left(\frac{a(\tau_u)/a(n/k) - 1}{1-{\xi}}\left(1+ \frac{s_{u,\alpha}^{{\xi}} - 1}{{\xi}}\right) + \frac{s_{u,\alpha}^{{\xi}} -\beta^\xi}{\xi (1-\xi)}\right) = S -R,
\end{multline}
where we have denoted each term by $S$ and $R$, respectively.
By the delta method (\cref{delta_method}),
\[
S \dto N(0, \nabla d_{\beta}(\xi, 1)^\top \mathbf{\Sigma} \nabla d_{\beta}(\xi, 1)).
\] For the second term, let
\[
P = \sqrt{k}\left(\frac{a(\tau_u)}{a(n/k)} - (k\tau_u/n)^\xi\right) \quad \textnormal{ and } \quad Q = \sqrt{k}\left((k\tau_u/n)^\xi - 1\right).
\]
Then $\sqrt{k}(a(\tau_u)/a(n/k) - 1) = P+Q$ and from \cref{su_beta_form}, $\sqrt{k}(s_{u,\alpha}^\xi -\beta^\xi) = -s_{u,\alpha}^\xi Q$. Hence,
\[
R = \frac{1}{\xi(1-\xi)}\left[(P+Q)(\xi+s_{u,\alpha}^\xi-1) -s_{u,\alpha}^\xi Q\right]
= \frac{(\xi+s_{u,\alpha}^\xi-1)}{\xi(1-\xi)}P - \frac{Q}{\xi}.
\]
\cref{su_corr} implies that $s_{u,\alpha}^\xi=\beta^\xi +o_p(1)$ as $n\to\infty$ and from \cref{ank_lemma} we know that $P=o_p(1)$ as $n\to\infty.$ By \cref{tau_lemma}, $Q\dto N(0, \xi^2)$, and so $R\dto N(0,1).$ By the assumption that $(\hat{\xi}_n, \hat{\sigma}_n)$ and $Q$ are asymptotically independent, $R$ and $S$ are as well. Hence, \cref{delta_plus_remainder} implies 
\[
\frac{\sqrt{k}}{a(n/k)}\left(\POT - c_{u,\alpha}\right) \dto N\left(0, \nabla d_{\beta}(\xi, 1)^\top \mathbf{\Sigma} \nabla d_{\beta}(\xi, 1) + 1\right) = N(0, V),
\]
where
\[
\nabla d_{\beta}(\xi, 1) = 
\left[\
\frac{\partial d_{\beta}}{\partial x}(\xi, 1),
\frac{\partial d_{\beta}}{\partial y}(\xi, 1)
\right]^\top,
\]
with
\[
\frac{\partial d_{\beta}}{\partial x}(\xi, 1) = \frac{\beta^\xi(2\xi + \xi(1-\xi)\log\beta -1)}{\xi^2(1-\xi)^2} + \frac{1}{\xi^2}, \quad
\frac{\partial d_{\beta}}{\partial y}(\xi, 1) = \frac{\beta^{\xi} +\xi - 1}{\xi(1-\xi)}.
\]
\qed

\subsection{Proof of Theorem 5.1}
We first prove the following lemma, which shows that when $\tau_u$ is replaced by $n/k$ in \cref{gpd_approximation_error}, the asymptotic behaviour of $\epsilon_{u,\alpha}$ is the same (in probability).
\begin{lemma}\label{gpd_approx_error_emp}
Suppose that the assumptions of theorem 4.1 hold. Let $\alpha=1-(1/\beta)k/n$ where $\beta > 1$ is a constant not depending on $n$. Then,
\[
\frac{\Egpd}{a(n/k)A(n/k)K_{\xi, \rho}(\beta)} \pto 1.
\]
\begin{proof}
We follow the same line of reasoning as in the proof of \cref{gpd_approximation_error}. First, we have
\begin{align*}
q_{u,\alpha} &= u + \frac{\sigma(u)}{\xi}(s_{u,\alpha}^\xi - 1)\\
&= U(\tau_u) + \frac{a(\tau_u)}{\xi}(s_{u,\alpha}^\xi - 1)\\
&= U(n/k) + \frac{a(n/k)}{\xi}(\beta^\xi - 1) + U(\tau_u) - U(n/k) + \frac{1}{\xi}\left[a(\tau_u)(s_{u,\alpha}^\xi - 1) - a(n/k)(\beta^\xi - 1)\right] \\
&= U(n/k) + \frac{a(n/k)}{\xi}(\beta^\xi - 1) + U(\tau_u) - U(n/k) \\ &\qquad\qquad\qquad\qquad\qquad\qquad\;\;+ \frac{1}{\xi}\left[a(\tau_u)(s_{u,\alpha}^\xi - \beta^\xi) + (a(\tau_u) - a(n/k))(\beta^\xi - 1)\right].
\end{align*}
Given that $q_\alpha = U(1/(1-\alpha)) = U(n\beta/k)$,
\begin{align*}\label{q_expanded}
\frac{q_\alpha - q_{u,\alpha}}{a(n/k)A(n/k)} &= \frac{\frac{U(n\beta/k)-U(n/k)}{a(n/k)} - \frac{\beta^\xi-1}{\xi}}{A(n/k)} \\ &\quad -\frac{U(\tau_u) - U(n/k)}{a(n/k)A(n/k)} - \frac{a(\tau_u)(s_{u,\alpha}^\xi - \beta^\xi)}{\xi a(n/k) A(n/k)} - \left(\frac{a(\tau_u)}{a(n/k)}-1\right)\frac{\beta^\xi-1}{\xi A(n/k)} \\
&= I - II - III - IV.
\end{align*}
In what follows, terms $I$-$IV$ will be analyzed separately then finally combined.

$I$: By \cref{I_limit_cor}, with $t=n/k$ and $x=\beta$ we know that term $I$ tends to $I_{\xi,\rho}(\beta)$ as $n\to\infty.$ 

$II$: Under \cref{F_assumption} and \cref{A_conditions}, the following uniform inequality from \citet[Theorem 2.3.6]{dehaan2006} holds: for any $\varepsilon, \delta > 0$ there exists $t_0 = t_0(\varepsilon, \delta)$ such that for all $t, tx \ge t_0$,
\begin{equation} \label{U_uniform}
\left|\frac{\frac{U(tx)-U(t)}{a(t)}-\frac{x^\xi-1}{\xi}}{A(t)}-\frac{x^{\xi+\rho}-1}{\xi+\rho}\right| \leq \varepsilon x^{\xi+\rho} \max \left(x^{\delta}, x^{-\delta}\right).
\end{equation}
We can write $II$ as
\[
II = \frac{\frac{U(\tau_u)-U(n/k)}{a(n/k)}-\frac{(k\tau_u/n)^\xi-1}{\xi}}{A(n/k)} + \frac{(k\tau_u/n)^\xi-1}{\xi A(n/k)}.
\]
Hence, with $t=n/k$ and $x=k\tau_u/n$, the first term tends to $0$ in probability using \cref{U_uniform} and essentially the same arguments as in the proof of \cref{ank_lemma}. So, by 
the assumption of \cref{mle_normality} that $\sqrt{k}A(n/k) \to \lambda < \infty$ as $n\to\infty$ and \cref{tau_lemma},
\[
II = \frac{Q}{\xi \sqrt{k} A(n/k)} + o_p(1) = \frac{Z}{\lambda} + o_p(1), \quad n\to\infty,
\]
where $Q = \sqrt{k}((k\tau_u/n)^\xi - 1)$ and $Z$ denotes a standard normal random variable.

$III$: \cref{ank_corr} implies that $a(\tau_u)/a(n/k) \pto 1$ and by \cref{su_beta_form}, $s_{u,\alpha}^\xi - \beta^\xi = -s_{u,\alpha}^\xi Q/\sqrt{k}$. \Cref{su_corr} implies that $s_{u,\alpha}^\xi\pto\beta^\xi$, and so
\[
III = \frac{a(\tau_u)(s_{u,\alpha}^\xi - \beta^\xi)}{\xi a(n/k) A(n/k)} =  \frac{-s_{u,\alpha}^\xi Q}{\xi \sqrt{k} A(n/k)}(1+o_p(1)) = -\frac{\beta^\xi}{\lambda}Z + o_p(1), \quad n\to\infty.
\]

$IV$: With $P=\sqrt{k}\left(\frac{a(\tau_u)}{a(n/k)} - (k\tau_u/n)^\xi\right)$ and applying \cref{ank_lemma},
\[
IV = \left(\frac{a(\tau_u)}{a(n/k)}-1\right)\frac{\beta^\xi-1}{\xi A(n/k)} = (P+Q)\frac{\beta^\xi-1}{\xi \sqrt{k} A(n/k)} = \frac{\beta^\xi-1}{\lambda}Z + o_p(1), \quad n\to\infty.
\]
Now combining all terms, as $n\to\infty$,
\begin{multline*}
\frac{q_\alpha - q_{u,\alpha}}{a(n/k)A(n/k)} = I - II - III - IV \\= I_{\xi,\rho}(\beta) - \frac{Z}{\lambda} + \frac{\beta^\xi }{\lambda}Z - \frac{\beta^\xi-1}{\lambda}Z + o_p(1) = I_{\xi,\rho}(\beta) +o_p(1), \quad n\to\infty.
\end{multline*}
Hence, following the same reasoning as in the proof of \cref{gpd_approximation_error},
\begin{multline*}
\frac{\epsilon_{u,\alpha}}{a(n/k)A(n/k)} = \frac{c_{u,\alpha} - c_{\alpha}}{a(n/k)A(n/k)} \\= -\frac{1}{1-\alpha}\int_\alpha^1 \frac{q_{\gamma}-q_{u,\gamma}}{a(n/k)A(n/k)} d\gamma \pto -\beta\int_\beta^\infty \frac{I_{\xi,\rho}(x)}{x^2} dx
= K_{\xi,\rho}(\beta).
\end{multline*}
\end{proof}
\end{lemma}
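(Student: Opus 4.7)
My plan is to reduce this to Theorem \ref{gpd_approximation_error}, exploiting the fact that $u = \thresh$ satisfies $n\bar F(u)/k \pto 1$, equivalently $k\tau_u/n \pto 1$. This is the standard concentration of the empirical tail probability at a high order statistic (provable via the probability integral transform, under which $\bar F(\thresh)$ is Beta-distributed and centered near $k/n$); since Corollary \ref{su_corr} already uses it, I treat it as available.

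First, I would decompose
\[
\frac{\Egpd}{a(n/k)A(n/k)K_{\xi, \rho}(\beta)} = \frac{\Egpd}{a(\tau_u)A(\tau_u)K_{\xi, \rho}(s_{u,\alpha})}\cdot\frac{K_{\xi,\rho}(s_{u,\alpha})}{K_{\xi,\rho}(\beta)}\cdot\frac{a(\tau_u)}{a(n/k)}\cdot\frac{A(\tau_u)}{A(n/k)},
\]
where under $\alpha = 1-(1/\beta)k/n$ the quantity $s_{u,\alpha} = \beta/(k\tau_u/n)$ is random and converges in probability to $\beta$. The four factors can then be handled separately: the second and third factors tend to 1 in probability by uniform convergence of the regularly varying functions $a\in RV_\xi$ and $|A|\in RV_\rho$ (Assumption \ref{A_conditions}) composed with $k\tau_u/n\pto 1$; the fourth factor tends to 1 by continuity of $K_{\xi,\rho}$ at $\beta>1$ and the continuous mapping theorem; and the first factor tends to 1 by Theorem \ref{gpd_approximation_error} applied along random sequences, which I would justify by inspecting the dominated convergence step in that proof and observing that the integrand is dominated uniformly in $u$ once $u$ is large, so the limit extends to any sequence $u_n\to\infty$ satisfying $s_{u_n,\alpha_n}\to\beta$ in probability.

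The main obstacle will be making this first factor precise: Theorem \ref{gpd_approximation_error} is stated as a deterministic limit along the family $\alpha_u = 1-\bar F(u)/\beta$, for which $s_{u,\alpha}\equiv\beta$ exactly, whereas here $\alpha$ is deterministic in $n,k$ and the relation only holds in the limit. If the pathwise subsequencing argument proves awkward, I would instead redo the derivation of Theorem \ref{gpd_approximation_error} from scratch, normalizing throughout by $a(n/k)A(n/k)$. Concretely, using $q_\alpha = U(n\beta/k)$ and $q_{u,\gamma} = U(\tau_u) + (a(\tau_u)/\xi)(s_{u,\gamma}^\xi-1)$, one splits $(q_\gamma-q_{u,\gamma})/(a(n/k)A(n/k))$ into a principal term of the form $[(U((n/k)x)-U(n/k))/a(n/k) - (x^\xi-1)/\xi]/A(n/k)$ (which tends to $I_{\xi,\rho}(x)$ by Corollary \ref{I_limit_cor}) plus perturbation terms driven by $k\tau_u/n-1$, $s_{u,\gamma}-\beta x$ (after the substitution $x=\bar F(u)/(1-\gamma)$), and $a(\tau_u)/a(n/k)-1$. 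Each perturbation is shown to be $o_p(1)$ using the second-order uniform inequality for $U$ and $a$ (\citet[Theorem 2.3.6]{dehaan2006}), Lemma \ref{ank_lemma}, and the assumption $\sqrt{k}A(n/k)\to\lambda<\infty$ to absorb inverse $A(n/k)$ factors. Finally, the integral in $\gamma$ is handled by the same dominated convergence argument as in the proof of Theorem \ref{gpd_approximation_error}, yielding the limit $-\beta\int_\beta^\infty I_{\xi,\rho}(x)/x^2\,dx = K_{\xi,\rho}(\beta)$ in probability, which is exactly the desired statement.
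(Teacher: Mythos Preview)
Your fallback approach --- redoing the proof of Theorem \ref{gpd_approximation_error} with normalization $a(n/k)A(n/k)$ and splitting $(q_\alpha - q_{u,\alpha})/(a(n/k)A(n/k))$ into a principal term plus perturbations, then integrating via dominated convergence --- is exactly what the paper does. One correction, though: the perturbation terms are \emph{not} individually $o_p(1)$. In the paper's decomposition the three perturbations (coming from $U(\tau_u)-U(n/k)$, from $s_{u,\alpha}^\xi-\beta^\xi$, and from $a(\tau_u)/a(n/k)-1$) each contribute a piece of the form $Q/(\sqrt{k}A(n/k))$ with $Q=\sqrt{k}((k\tau_u/n)^\xi-1)$, so each is $O_p(1)$ (of size a constant times $Z/\lambda$ when $\lambda\neq 0$); it is only their \emph{sum} that is $o_p(1)$, via the algebraic cancellation of coefficients $-1+\beta^\xi-(\beta^\xi-1)=0$. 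You would discover this when you carry out the computation, but it is the crux of why the randomness of $u=\thresh$ washes out after normalizing by the deterministic $a(n/k)A(n/k)$, so flag it now.

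Your primary factorization approach is a legitimate alternative the paper does not take; it can be made rigorous via the uniform two-sided bound of Lemma \ref{Ilimbound} together with $s_{u,\alpha}\pto\beta$ and continuity of the integral in its lower limit. The paper's direct route is arguably cleaner precisely because it exposes the cancellation above explicitly, whereas in the factorization the same cancellation is hidden inside the claim that the first factor tends to $1$ along a random $(u,\alpha)$ sequence with $s_{u,\alpha}\neq\beta$.
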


\vspace{0.5cm}

\textbf{Proof of Theorem 5.1.}
First,
\[
\hat{c}^{(n)}_{\epsilon,\alpha} - c_\alpha =
\hat{c}^{(n)}_{\alpha} - \hEgpd - c_\alpha = 
\hat{c}^{(n)}_{\alpha} -c_{u,\alpha}- \hEgpd + c_{u,\alpha} - c_\alpha = 
\hat{c}^{(n)}_{\alpha} -c_{u,\alpha}- \hEgpd + \epsilon_{u,\alpha}.
\]
Hence,
\begin{equation}\label{conf_int_terms}
\frac{\sqrt{k}(\hat{c}^{(n)}_{\epsilon,\alpha} - c_\alpha)}{\hat{\sigma}_n} = 
\frac{\sqrt{k}(\hat{c}^{(n)}_{\alpha} -c_{u,\alpha})}{\hat{\sigma}_n} - \frac{\sqrt{k}(\hEgpd - \epsilon_{u,\alpha})}{\hat{\sigma}_n}.
\end{equation}
For the first term on the right-hand side of \cref{conf_int_terms},
\begin{equation}\label{term1_convergence}
\frac{\sqrt{k}(\hat{c}^{(n)}_{\alpha} -c_{u,\alpha})}{\hat{\sigma}_n} =
\frac{a(n/k)}{\hat{\sigma}_n}\frac{\sqrt{k}(\hat{c}^{(n)}_{\alpha} -c_{u,\alpha})}{a(n/k)}
\dto N(0,V),
\end{equation}
which follows from \cref{thm:pot_normality} and applying \cref{slutsky} with the fact that $\hat{\sigma}_n/a(n/k)\pto1$.

For the second term, first recall that
\[
\frac{\hEgpd}{a(n/k)A(n/k)} = \frac{\hat{\sigma}_n\hat{A}_n\hat{K}_n}{a(n/k)A(n/k)} \pto K_{\xi, \rho}(\beta),
\]
which follows from \cref{slutsky} and the continuous mapping theorem. Then, under the assumption that $\sqrt{k}A(n/k) \to \lambda < \infty$ ($n\to\infty$), it follows from \cref{slutsky} and \cref{gpd_approx_error_emp} that
\begin{multline}\label{term2_convergence}
\frac{\sqrt{k}(\hEgpd - \epsilon_{u,\alpha})}{\hat{\sigma}_n}
=\frac{a(n/k)\sqrt{k}A(n/k)}{\hat{\sigma}_n} \left(\frac{\hEgpd - \epsilon_{u,\alpha}}{a(n/k)A(n/k)}\right) \\=  \lambda (1+o_p(1)) \left(K_{\xi,\rho}(\beta) - K_{\xi, \rho}(\beta) + o_p(1)\right) = o_p(1), \quad n\to\infty.
\end{multline}
Combining the convergence in \cref{term1_convergence} and \cref{term2_convergence} with \cref{conf_int_terms}, it follows that
\[
\frac{\sqrt{k}(\hat{c}^{(n)}_{\epsilon,\alpha} - c_\alpha)}{\hat{\sigma}_n} \dto N(0,V),
\]
and hence, 
\[
\frac{\sqrt{k}(\hat{c}^{(n)}_{\epsilon,\alpha} - c_\alpha)}{\hat{\sigma}_n\sqrt{\hat{V}_n}}  = 
\frac{\sqrt{k}(\hat{c}^{(n)}_{\epsilon,\alpha} - c_\alpha)}{\hat{\sigma}_n\sqrt{V}} \frac{\sqrt{V}}{\sqrt{\hat{V}_n}}\dto N(0,1),
\]
which follows from the fact that $\hat{V}_n \pto V$ (from the continuous mapping theorem) and \cref{slutsky}.
\qed

\subsection{Consistency of \textit{A(n/k)} Estimator}\label{A_consistency}
In \cite{nawel}, an estimator for $A_0(n/k)$ is given,\footnote{The results of \cite{nawel} are presented in the truncated data setting, where for a sample $(X_i, Y_i), \, i=1,\ldots,n$ from a couple of independent random variables $(X,Y)$, $X_i$ is only observed when $X_i \le Y_i$. Their results can be adapted to the non-truncation setting by assuming that $\Prob(X\le Y) = 1$.} where the function $A_0$ satisfies the second-order condition of \citet[Theorem 2.3.9]{dehaan2006}, where for all $x>0,$
\begin{equation} \label{second_order_condition}
\lim _{t \rightarrow \infty} \frac{\frac{U(t x)}{U(t)}-x^\xi}{A_0(t)}=x^{\xi} \frac{x^{\rho}-1}{\rho}.
\end{equation}
Note that under \cref{F_assumption} and \cref{A_conditions}, \cref{second_order_condition} is satisfied. The relation between the function $A$ defined in \cref{A_equation} and $A_0$ is given in \citet[Table 3.1]{dehaan2006}, where
\begin{equation} \label{A_relation}
A = \frac{\xi +\rho}{\xi} A_0.
\end{equation}
We shall use this relation and an estimator for $A_0(n/k)$ to derive an estimator for $A(n/k)$. To prove consistency of the forthcoming estimator, we start with the following relation from \cite{nawel}:
\begin{equation} \label{A0_conv}
\lim_{t\to\infty}\frac{A_0(t)}{R(t)} = 1,
\end{equation}
where 
\[
R(t) = \frac{(1-\rho^2)(M^{(2)}(t) - 2(M^{(1)}(t))^2)}{2\rho M^{(1)}(t)}, \quad M^{(j)}(t) = t \int_{U(t)}^{\infty} \log ^{j}\left(x / U(t)\right) d F(x).
\]
This leads to an estimator for $A_0(n/k)$ \citet[p. 7]{nawel},
\[
\hat{A}_0^{(n)} = \frac{(1-\hat{\rho}_n^2)(\hat{M}_n^{(2)} - 2(\hat{M}_n^{(1)})^2)}{2\hat{\rho}_n \hat{M}_n^{(1)}},
\]
where $\hat{M}_n^{(j)}$ is an estimator for $M^{(j)}(n/k)$, given by 
\[
\hat{M}_n^{(j)} = \frac{1}{k}\sum_{i=1}^k[\log X_{(n-i+1,k)} - \log \thresh]^j,
\]
which is also given in \cref{se:second-order-A}.
Note that $\hat{M}_n^{(1)}$ is the well-known Hill estimator of $\xi$. $\hat{M}_n^{(j)}$ is consistent for $j=1,2$ under the conditions of the following lemma.
\begin{lemma} \label{M_lemma}
Suppose that \cref{F_assumption} holds. If $k = k_n \to \infty, \, k/n \to 0$ as $n\to\infty$, then
\[
\frac{\hat{M}_n^{(j)}}{M^{(j)}(n/k)} \pto 1, \qquad j=1,2.
\]
\begin{proof}
From \citet[equation 1.9]{nawel},
\[
M^{(1)}(t) \to \xi \quad \textnormal{ and } \quad M^{(2)}(t) \to 2\xi^2, \quad t\to\infty.
\]
By \citet[Theorem 3.2.2]{dehaan2006}, $\hat{M}_n^{(1)} \pto \xi$, and by \citet[Equation 3.5.7]{dehaan2006}, $\hat{M}_n^{(2)} \pto 2\xi^2$. Hence, by \cref{slutsky},
\[
\frac{\hat{M}_n^{(1)}}{M^{(1)}(n/k)} \to 1 \quad \textnormal{ and } \quad \frac{\hat{M}_n^{(2)}}{M^{(2)}(n/k)} \to 1, \quad n\to\infty.
\]
\end{proof}
\end{lemma}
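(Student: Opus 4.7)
The plan is to reduce the ratio $\hat{M}_n^{(j)}/M^{(j)}(n/k)$ to two separately handled limits — a stochastic one $\hat{M}_n^{(j)} \pto c_j$ and a deterministic one $M^{(j)}(n/k) \to c_j$, both with the same positive constant $c_j$ — and then combine them by Slutsky's lemma (\cref{slutsky}). Concretely, I aim to show $c_1 = \xi$ and $c_2 = 2\xi^2$, after which the conclusion $\hat{M}_n^{(j)}/M^{(j)}(n/k) \pto 1$ is immediate from the quotient part of \cref{slutsky}, since $\xi > 0$ by \cref{F_assumption}.

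For the stochastic piece, I would recognize $\hat{M}_n^{(1)} = \frac{1}{k}\sum_{i=1}^{k}[\log X_{(n-i+1,n)} - \log X_{(n-k,n)}]$ as precisely the classical Hill estimator built from the top $k$ order statistics. Under \cref{F_assumption} (heavy tails, $\xi > 0$) together with the intermediate-sequence requirement $k=k_n \to \infty$, $k/n \to 0$, Hill's estimator is consistent for $\xi$; this is \citet[Theorem 3.2.2]{dehaan2006}. The quantity $\hat{M}_n^{(2)}$ is the second-moment analogue used to define the Dekkers--Einmahl--de Haan moment estimator, and its consistency $\hat{M}_n^{(2)} \pto 2\xi^2$ under the same assumptions is \citet[Equation 3.5.7]{dehaan2006}. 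So in either case the sample quantity converges in probability to the target $c_j$.

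For the deterministic piece, starting from $M^{(j)}(t) = t\int_{U(t)}^{\infty}\log^{j}(x/U(t))\,dF(x)$, I would substitute $x = U(ty)$ and use the identity $t\bar F(U(ty)) = 1/y$ (valid since $U = (1/\bar F)^{-1}$) to rewrite the integral as $\int_{1}^{\infty} \log^{j}(U(ty)/U(t))\,y^{-2}\,dy$. Because $U \in RV_{\xi}$ at infinity, the integrand converges pointwise to $\log^{j}(y^{\xi})\,y^{-2}$ as $t \to \infty$, and a standard Potter-type bound dominates it uniformly in $t$ by an integrable envelope. Dominated convergence then yields $M^{(1)}(t) \to \xi \int_{1}^{\infty} y^{-2}\log y\,dy \cdot \xi / \xi = \xi$ and $M^{(2)}(t) \to 2\xi^2$, by direct evaluation. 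Since $k/n \to 0$ forces $n/k \to \infty$, the sequence $M^{(j)}(n/k)$ inherits these limits. Alternatively, this convergence is exactly \citet[Equation 1.9]{nawel}, which can be cited directly.

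The main obstacle, such as it is, is bookkeeping rather than probability: one must carefully match the indexing of order statistics and the normalization of logarithms in our definition of $\hat{M}_n^{(j)}(m)$ (with $m=k$) against the conventions of \citet{dehaan2006}, and confirm that the $U$-based integral definition of $M^{(j)}(t)$ truly produces the advertised limits. Once these conventions line up, the argument is just: numerator $\pto c_j$, denominator $\to c_j > 0$, hence ratio $\pto 1$ by Slutsky.
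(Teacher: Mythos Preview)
Your proposal is correct and follows essentially the same route as the paper: cite \citet[Theorem 3.2.2 and Equation 3.5.7]{dehaan2006} for $\hat{M}_n^{(j)} \pto c_j$, cite \citet[Equation 1.9]{nawel} for $M^{(j)}(n/k) \to c_j$ with $c_1=\xi$, $c_2=2\xi^2$, and conclude via \cref{slutsky}. Your additional sketch of the deterministic limit via the substitution $x=U(ty)$ and Potter bounds is a nice self-contained alternative to the bare citation, but it is not needed beyond what the paper already does.
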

From \cref{A_relation}, an estimator for $A(n/k)$ is 
\[
\hat{A}_n \triangleq \frac{(\hXi+\hRho)(1-\hRho)^2(\hat{M}_n^{(2)} - 2(\hat{M}_n^{(1)})^2)}{2\hXi\hRho \hat{M}_n^{(1)}},
\]
which is consistent under the conditions of the following lemma.
\begin{lemma}
Suppose that the assumptions of \cref{mle_normality} hold. If $\hat{\rho}_n \pto \rho$,
\[
\frac{\hat{A}_n}{A(n/k)} \pto 1.
\]
\end{lemma}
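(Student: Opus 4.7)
The plan is to reduce the claim to a consistency result for the $A_0$-estimator via the identity $A = ((\xi+\rho)/\xi)A_0$ from \cref{A_relation}, and then chain three elementary convergences through the continuous mapping theorem and Slutsky's lemma. I would begin by writing $\hat{A}_n$ as a continuous function of $(\hXi,\hRho,\hat{M}_n^{(1)},\hat{M}_n^{(2)})$ and matching it against the plug-in $((\hXi+\hRho)/\hXi)\hat{A}_0^{(n)}$, yielding
\[
\frac{\hat{A}_n}{A(n/k)} = \underbrace{\frac{\hXi+\hRho}{\hXi}\cdot\frac{\xi}{\xi+\rho}}_{(\mathrm{I})} \cdot \underbrace{\frac{\hat{A}_0^{(n)}}{A_0(n/k)}}_{(\mathrm{II})}\cdot (1+o_p(1)),
\]
where the trailing factor absorbs any remaining continuous function of $\hRho$ whose probability limit is one under $\hRho \pto \rho$.

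For factor (I), the continuous mapping theorem gives convergence to one in probability: $\hXi \pto \xi$ by \cref{mle_normality}, $\hRho \pto \rho$ by hypothesis, and the denominators $\xi$ and $\xi+\rho$ are nonzero under \cref{F_assumption} and \cref{A_conditions}. For factor (II), I would further decompose
\[
\frac{\hat{A}_0^{(n)}}{A_0(n/k)} = \frac{\hat{A}_0^{(n)}}{R(n/k)}\cdot\frac{R(n/k)}{A_0(n/k)},
\]
where the second factor converges to one deterministically by \cref{A0_conv}.

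The hard part is showing $\hat{A}_0^{(n)}/R(n/k) \pto 1$. Although \cref{M_lemma} ensures $\hat{M}_n^{(j)}/M^{(j)}(n/k) \pto 1$ for $j=1,2$, both $M^{(2)}(n/k) - 2(M^{(1)}(n/k))^2$ and its sample analogue vanish as $n\to\infty$ (since $M^{(1)}(n/k) \to \xi$ and $M^{(2)}(n/k) \to 2\xi^2$), so naive ratio convergence of the $\hat M$'s does not control the difference. The natural route is to invoke the sharper rates $\hat{M}_n^{(j)} - M^{(j)}(n/k) = O_p(1/\sqrt{k})$ coming from the standard asymptotics of the Hill-type moment estimators (see \citet[Theorem 3.2.2 and Equation 3.5.7]{dehaan2006}), combined with the hypothesis $\sqrt{k}A(n/k)\to\lambda<\infty$ from \cref{mle_normality}, which forces $R(n/k) = \Theta(1/\sqrt{k})$. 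A careful expansion tracking the cancellation of the leading $2\xi^2$ term in the numerator then shows that the stochastic fluctuation of $\hat{M}_n^{(2)}-2(\hat{M}_n^{(1)})^2$ about $M^{(2)}(n/k)-2(M^{(1)}(n/k))^2$ is of strictly smaller order than $R(n/k)$ itself, giving $\hat{A}_0^{(n)}/R(n/k) \pto 1$. This is essentially the argument carried out in \cite{nawel}, specialized to the non-truncated setting. Finally, assembling (I), (II), and the $(1+o_p(1))$ factor through Slutsky's lemma yields $\hat{A}_n/A(n/k) \pto 1$, completing the proof.
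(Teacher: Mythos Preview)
Your decomposition via $A = ((\xi+\rho)/\xi)A_0$, the split of $\hat{A}_0^{(n)}/A_0(n/k)$ through $R(n/k)$, and the reduction to the ratio of $\hat{M}_n^{(2)}-2(\hat{M}_n^{(1)})^2$ over its population analogue is exactly the structure the paper uses. The gap is in the justification of the last step.

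You claim that the $O_p(1/\sqrt{k})$ rates of the $\hat{M}_n^{(j)}$'s, together with $R(n/k)=\Theta(1/\sqrt{k})$, force the fluctuation of $\hat{M}_n^{(2)}-2(\hat{M}_n^{(1)})^2$ about $M^{(2)}(n/k)-2(M^{(1)}(n/k))^2$ to be of \emph{strictly smaller} order than $R(n/k)$. This is false. Linearizing gives
\[
\hat{M}_n^{(2)}-2(\hat{M}_n^{(1)})^2-\bigl[M^{(2)}(n/k)-2(M^{(1)}(n/k))^2\bigr]
=\bigl[\hat{M}_n^{(2)}-M^{(2)}(n/k)\bigr]-4\xi\bigl[\hat{M}_n^{(1)}-M^{(1)}(n/k)\bigr]+o_p(1/\sqrt{k}),
\]
and under the approximate exponential model for the log-spacings, the leading linear combination $\hat{M}_n^{(2)}-4\xi\hat{M}_n^{(1)}$ has asymptotic variance $4\xi^4/k$, not zero; there is no cancellation. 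Since the standing hypothesis is $\sqrt{k}A(n/k)\to\lambda<\infty$, the denominator $R(n/k)\sim A_0(n/k)$ is $O(1/\sqrt{k})$ (and $o(1/\sqrt{k})$ when $\lambda=0$, so your $\Theta$ claim already fails there). A numerator that is genuinely $O_p(1/\sqrt{k})$ divided by a denominator of the same or smaller order yields at best $O_p(1)$, not $1+o_p(1)$.

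The paper does not attempt this rate comparison. Instead it invokes the result of \cite{Gomes2002Semiparametric} that
\[
\frac{\hat{M}_n^{(2)}-2(\hat{M}_n^{(1)})^2}{A_0(n/k)}\ \pto\ \frac{2\xi\rho}{(1-\rho)^2},
\]
i.e., the quantity $\hat{M}_n^{(2)}-2(\hat{M}_n^{(1)})^2$ normalized by $A_0(n/k)$ is itself consistent for a nonzero constant. Combined with $A_0(n/k)/R(n/k)\to 1$ and $M^{(1)}(n/k)\to\xi$, this directly gives the ratio you need. That external convergence-in-probability statement is precisely the piece your $O_p$-heuristic cannot manufacture; you should cite it (or the analogous statement from \cite{nawel}) explicitly rather than argue from individual rates.
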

\begin{proof}
By theorem 4.1, $\hXi \pto \xi$, and so by \cref{A0_conv} and \cref{A_relation},
\[
\frac{\hat{A}_n}{A(n/k)} = \frac{\hXi+\hRho}{\hXi}\cdot\frac{\xi}{\xi+\rho}\cdot \frac{\hat{A}_0^{(n)}}{A_0(n/k)}
= (1+o_p(1))\frac{\hat{A}_0^{(n)}R(n/k)}{R(n/k)A_0(n/k)} =
(1+o_p(1))\frac{\hat{A}_0^{(n)}}{R(n/k)},
\]
as $n\to\infty$. By \cref{M_lemma},
\begin{align*}
\frac{\hat{A}_0^{(n)}}{R(n/k)} &= \frac{\rho (1-\hat{\rho}_n^2)  M^{(1)}(n/k)}{\hat{\rho}_n (1-\rho^2)\hat{M}_n^{(1)}}\cdot \frac{\hat{M}_n^{(2)} - 2(\hat{M}_n^{(1)})^2}{M^{(2)}(n/k) - 2(M^{(1)}(n/k))^2} \\ &= (1+o_p(1))\frac{\hat{M}_n^{(2)} - 2(\hat{M}_n^{(1)})^2}{M^{(2)}(n/k) - 2(M^{(1)}(n/k))^2},
\end{align*}
as $n\to\infty$. From \citet[p. 389]{Gomes2002Semiparametric}, 
\[
\frac{\hat{M}_n^{(2)} - 2(\hat{M}_n^{(1)})^2}{A_0(n/k)} \pto \frac{2\xi\rho}{(1-\rho)^2}.
\]
Hence,
\[
\frac{\hat{M}_n^{(2)} - 2(\hat{M}_n^{(1)})^2}{M^{(2)}(n/k) - 2(M^{(1)}(n/k))^2} = (1+o_p(1)) \frac{2\xi\rho}{(1-\rho)^2} \cdot \frac{A_0(n/k)}{M^{(2)}(n/k) - 2(M^{(1)}(n/k))^2}, \quad n\to\infty.
\]
Finally, combining \cref{A0_conv} with the fact that $M^{(1)}(n/k) \to \xi$ as $n\to\infty,$
\[
\frac{A_0(n/k)}{M^{(2)}(n/k) - 2(M^{(1)}(n/k))^2} = (1+o(1))\frac{(1-\rho)^2}{2\rho M^{(1)}(n/k)} = (1+o(1))\frac{(1-\rho)^2}{2\rho \xi}, \quad n\to\infty,
\]
and thus
\[
\frac{\hat{M}_n^{(2)} - 2(\hat{M}_n^{(1)})^2}{M^{(2)}(n/k) - 2(M^{(1)}(n/k))^2} \pto 1 \qquad \textnormal{which implies} \qquad \frac{\hat{A}_n}{A(n/k)} \pto 1.
\]
\end{proof}

\section{Estimation Algorithms}
\subsection{Adaptive \texorpdfstring{$\rho$}{rho} Estimation}\label{ada_rho}
The $\rho$ estimator given in \cref{se:second-order-rho},
\[
\hat{\rho}_n = \frac{3 (T_n^{(\tau)}(m) - 1)}{T_n^{(\tau)}(m)-3},
\]
requires the choice of two parameters: a sample fraction $m$, and tuning parameter $\tau$. Depending on the underlying distribution, the reliability of $\hat{\rho}_n$ can be very sensitive to the choice $m$ and $\tau$. The adaptive algorithm of \cite[Section 4.1]{bias_reduce_rho} provides an automated way to select these parameters. We present a slightly modified version of their algorithm here, which we use in our experiments. 

\begin{algorithm}[H]
\SetKwInput{Input}{Input}
\SetKwInput{Output}{Output}
\Indm
  \Input{An i.i.d. sample $X_1,..,X_n$, test parameters $\tau_1,\ldots\tau_q$, test sample fractions $m_1,\ldots,m_r$, precision $p$.}
   \Output{$\hat{\rho}_n$}
\Indp
\BlankLine
\For{$i=1,\ldots q$}{
\For{$j=1,\ldots r$}{
Compute $\hat{\rho}^{(\tau_i)}_n(m_j)$ using \cref{rho_estimator}, rounded to $p$ decimal places
}
Set $m^{(\tau_i)}_\textnormal{min}$, $m^{(\tau_i)}_\textnormal{max}$ to be the minimum and maximum $m$ values associated with the longest run of consecutive equal $\hat{\rho}^{(\tau_i)}_n$ values\\
Set $l^{(\tau_i)} = m^{(\tau_i)}_\textnormal{max} - m^{(\tau_i)}_\textnormal{min} + 1$, the length of the largest run
}
Set $k = \argmax_{i=1,\ldots,q}l^{(\tau_i)}$\\
Set $\hat{\rho}_n$ to the median of $\hat{\rho}^{(\tau_k)}_n(m^{(\tau_k)}_\textnormal{min}),\hat{\rho}^{(\tau_k)}_n(m^{(\tau_k)}_\textnormal{min}+1),\ldots,\hat{\rho}^{(\tau_k)}_n(m^{(\tau_k)}_\textnormal{max})$

\caption{Adaptive algorithm for $\rho$ estimation (ADARHO)}
\label{algo:ada_rho}
\end{algorithm}

\subsection{Automated Threshold Selection}\label{auto_thresh}
The method of \cite{bader2018automated} is as follows. Consider a
fixed set of thresholds $u_1 < \ldots < u_l$, where for each $u_i$ we
have $k_i$ excesses.  The sequence of null hypotheses for each
respective test $i$, $i=1,\ldots,l$, is given by
\begin{center}
\begin{tabular}{cc}
$H_0^{(i)}:$ & The distribution of the $k_i$ excesses above $u_i$ follows the GPD.
\end{tabular}
\end{center}
For each threshold $u_i$, let $\hat{\theta}_i = (\hat{\xi}^{(n)}_{u_i}, \hat{\sigma}^{(n)}_{u_i})$ denote the MLEs computed from the $k_i$ excesses above $u_i$. The Anderson-Darling (AD) test statistic
comparing the empirical threshold excesses distribution with the GPD
is then calculated. Let $y_{(1)} < ... < y_{(k_i)}$ denote the ordered threshold excesses for test $i$, and apply the transformation $z_{(j)} = G_{\hat{\theta}_i}(y_{(j)})$, $j=1,\ldots k_i$, where $G$ denotes the cdf of the GPD. The AD statistic for test $i$ is then
\begin{equation} \label{ad_statistic}
A_{i}^{2}=-k_i-\frac{1}{k_i} \sum_{j=1}^{k_i}(2 j-1)\left[\log \left(z_{(j)}\right)+\log \left(1-z_{(k_i+1-j)}\right)\right].
\end{equation}
Corresponding $p$-values for each test statistic can then be found by
referring to a lookup table (e.g., \citet{stephens}) or computed on-the-fly. Using the $p$-values $p_1, \ldots, p_l$ calculated for each test, the ForwardStop
rule of \cite{gsell} is used to choose the threshold. This is done by calculating
\begin{equation}
\label{BaderIndex}
\hat{w}_F=\max \bigg\{w \in I \, \bigg\vert \, -\frac{1}{w} \sum_{i=1}^{w} \log \left(1-p_{i}\right) \leq \gamma\bigg\},
\end{equation}
where $\gamma$ is a chosen significance parameter and $I\subseteq \{1,\ldots,l\}$, $I\neq \emptyset$. Under this rule, the threshold $u_v$ is chosen, where $v=\min\{w\in I \, \vert \, w > \hat{w}_F\}$. If no $\hat{w}_F$ exists, then no rejection is made and $u_{\min(I)}$ is chosen. If $\hat{w}_F = \max(I)$, then $u_{\max(I)}$ is chosen. The overall procedure is summarized in \Cref{algo:auto_thresh}.
\begin{remark}
In the threshold selection procedure of \cite{bader2018automated}, $\hat{w}_F$ is given with $I = \{1,\ldots,l\}$, but we make the modification that $I$ is an arbitrary index set in view of CVaR estimation: since $c_{u,\alpha}$ tends to infinity when $\xi$ tends to $1$, in order to ensure reasonable estimates of the CVaR we use a cutoff parameter $\xi_{max} < 1$, where the MLE $\hat{\xi}^{(n)}_{u_i}$ and corresponding threshold $u_i$ are discarded if $\hat{\xi}^{(n)}_{u_i} > \xi_{max}$.
\end{remark}
\begin{remark}
Instead of choosing the candidate thresholds $u_1,\ldots,u_l$ directly, it is usually more convenient to choose \textit{threshold percentiles} $q_1,\ldots, q_l$ and compute $u$ values via the empirical quantile function, i.e., $u_i = \hat{F}^{-1}_n(q_i)$.  
\end{remark}

\begin{algorithm}[H]
\SetKwInput{Input}{Input}
\SetKwInput{Output}{Output}
\Indm
  \Input{An i.i.d. sample $X_1,..,X_n$, significance parameter $\gamma$, threshold percentiles  $0 < q_1,\ldots, q_l < 1$, cutoff $\xi_{max} < 1$.}
   \Output{$(\hXi, \hSig), u$ if $I\neq \emptyset$, otherwise return NaN}
\Indp
\BlankLine
$I \leftarrow \emptyset$ \\
\For{$i=1,\ldots l$}{
Set $u_i = \hat{F}^{-1}_n(q_i)$\\
Compute $(\hat{\xi}^{(n)}_{u_i}, \hat{\sigma}^{(n)}_{u_i})$ from $k_i$ threshold excesses using maximum likelihood\\
\If{$\hat{\xi}^{(n)}_{u_i} \leq \xi_{max}$}{
Compute $A_i^2$ using \cref{ad_statistic}\\
Set $p_i$ to $p$-value for $A_i^2$ using lookup table\\
$I \leftarrow I \cup \{i\}$
}
}
\If{$I \neq \emptyset$}{
Set $W = \{w \in I \, \vert \, -\frac{1}{w} \sum_{i=1}^{w} \log \left(1-p_{i}\right) \leq \gamma\}$\\
\eIf{$W \neq \emptyset$}{
Compute $\hat{w}_F$ using \cref{BaderIndex} \\
\eIf{$\hat{w}_F=\max(I)$} {
$v \leftarrow \max(I)$
} 
{
$v \leftarrow \min\{w\in I \, \vert \, w > \hat{w}_F\}$
}
}
{
$v \leftarrow \min(I)$\\
}
$u \leftarrow u_v$\\
$(\hXi, \hSig) \leftarrow (\hat{\xi}^{(n)}_{u_v}, \hat{\sigma}^{(n)}_{u_v})$
}
\caption{Automated threshold selection (AUTOTHRESH)}
\label{algo:auto_thresh}
\end{algorithm}

\subsection{Algorithm to Compute the Unbiased POT Estimator}\label{upot_algo}
This section provides the algorithm used to compute UPOT in its entirety, which makes use of both \cref{algo:ada_rho} and \cref{algo:auto_thresh}. In our experiments, we set $\tau_1=-1.5, \tau_2=-1.25, \ldots, \tau_{13}=1.5$, $m_1=100,m_2=200,\ldots,m_r=n-1$, $p=1$ in \cref{algo:ada_rho}, and $\gamma=0.1$, $q_1=0.79,q_2=0.80,\ldots,q_{20}=0.98$, $\xi_{max} = 0.9$ in \cref{algo:auto_thresh}. Assume these choices of values in the following algorithm.
\begin{algorithm}[H]
\SetKwInput{Input}{Input}
\SetKwInput{Output}{Output}
\Indm
  \Input{An i.i.d. sample $X_1,..,X_n$, confidence level $\alpha$}
  \Output{$\UPOT$}
\Indp
\BlankLine
$\mathbf{x} \leftarrow$ AUTOTHRESH($X_1,..,X_n$) \\
\If{$\mathbf{x} \textnormal{ is not NaN}$} {
$(\hXi, \hSig), u \leftarrow \mathbf{x}$ \\
$\hat{\rho_n} \leftarrow$ ADARHO($X_1,..,X_n$) \\
Compute $\hat{b}_n$ using \cref{bias_estimator} \\
Compute $\hat{A}_n$ using \cref{A_estimator} and the $k$ threshold excesses above $u$\\
$(\hat{\xi}_n, \hat{\sigma}_n) \leftarrow (\hXi - \hat{A}_n\hat{b}^{(1)}_n, \;  \hSig(1-\hat{A}_n\hat{b}^{(2)}_n))$ \\
Compute $\hEgpd$ using \cref{approx_error_estimator} \\
Compute $\POT$ using \cref{pot_estimator} \\
$\UPOT \leftarrow \POT - \hEgpd$\\
}
\caption{Unbiased peaks-over-threshold CVaR estimator (UPOT)} \label{algo:UPOT}
\end{algorithm}

\begin{remark}
It may happen that \Cref{algo:UPOT} fails if AUTOTHRESH returns NaN, in which case no suitable estimates of $\xi$ are found. This is an indication that the underlying data distribution does not satisfy the condition $\xi < 1$ and the CVaR does not exist. To make \Cref{algo:UPOT} robust, the sample average estimate is used as a fallback when the latter occurs. We report the failure rate of UPOT during experiments in \cref{table1}.
\end{remark}

\section{Examples of Heavy-tailed Distributions} \label{appendix:dists}
\subsection{Burr}
The Burr distribution with parameters $c, d$ has cdf given by
\[
F_{c,d}(x)=1-\left(1+x^{c}\right)^{-d}, \quad c, d, x > 0.
\]
The CVaR for the Burr distribution can be derived from its expression  for the conditional moment given in \citet[Section 2.2]{burr_properties}. If $X \sim \textrm{Burr}(c, d)$,
\begin{equation}\label{burr_cvar}
\cvar(X) = \frac{d[(1/q_\alpha)^c]^{d-1/c}}{(1-\alpha)(d-1/c)} \,_2F_1\left(d-\frac{1}{c}, 1+d, d-\frac{1}{c}+1; -\frac{1}{q_\alpha}\right), \quad cd > 1,
\end{equation}
where $\,_2F_1$ denotes the hypergeometric function and $q_\alpha = F^{-1}_{c,d}(\alpha)$. Values of $\xi, \rho$ and functions $a$ and $A$ are given by
$$
\xi=\frac{1}{cd}, \quad \rho=-\frac{1}{d}, \quad a(t) = \frac{t^{1/d}}{cd}\left(t^{1/d}-1\right)^{1/c-1}, \quad A(t) = \frac{1-c}{cd(t^{1/d}-1)},
$$
where $a$ and $A$ are defined for $t\ge1$.

\subsection{Fr\'echet} \label{se:frechet}
The Fr\'echet distribution with parameter $\gamma$ has cdf given by
\[
F_\gamma(x)=e^{-x^{-\gamma}}, \quad \gamma, x > 0.
\]
If $X \sim \textrm{Fr\'echet}(\gamma)$,
\begin{equation}\label{frec_cvar}
\cvar(X) = (1-\alpha)^{-1}\left[\Gamma\left({\gamma-1}/{\gamma}\right) \right.\\- \left.\Gamma\left({\gamma-1}/{\gamma}, -\log(\alpha)\right)\right], \quad \gamma > 1,
\end{equation}
where $\Gamma(\cdot)$ and $\Gamma(\cdot,\cdot)$ denote the gamma and upper incomplete gamma functions, respectively. Values of $\xi, \rho$ and functions $a$ and $A$ are given by
$$
\xi=\frac{1}{\gamma}, \quad \rho=-1, \quad a(t) = \frac{\log{\left(\frac{t}{t-1}\right)}^{-1-\frac{1}{\gamma}}}{\gamma(t-1)}, \quad A(t) = -\frac{1+\gamma+\gamma t \log(1-\frac{1}{t})}{\gamma(1-t)\log(1-\frac{1}{t})} - \frac{1}{\gamma},
$$
where $a$ and $A$ are defined for $t\ge1$.
\subsubsection{Asymptotic variance of SA estimator for Fr\'echet distribution}\label{asymp_sa_frec}
An expression for the asymptotic variance (AVAR) of the SA estimator is given in \cite[Theorem 2]{TRINDADE20073524}. Let $Z$ be a continuous random variable such that $\mathbb{E}[Z^2]$ is finite. Then, for a  confidence level $\alpha$,
\[
\sqrt{n}\left(\cvar(Z) - \widehat{\textnormal{CVaR}}_{n,\alpha}(Z)\right) \dto N(0, \theta^2),
\]
where $\widehat{\textnormal{CVaR}}_{n,\alpha}(Z)$ is the SA estimator given in \cref{sa_cvar} and 
\[
\theta^2 = \frac{\textnormal{Var}\left([Z-q_\alpha]^+\right)}{(1-\alpha)^2},
\]
and $[x]^+ = \max\{0, x\}$.
If $Z \sim \textrm{Fr\'echet}(\gamma)$, the condition that $\mathbb{E}[Z^2]$ is finite is equivalent to $\gamma > 2$. By the law of total expectation,
\[
\mathbb{E}[[Z-q_\alpha]^+] = \mathbb{P}(Z\le q_\alpha) \mathbb{E}[0] + \mathbb{P}(Z > q_\alpha)\mathbb{E}[Z-q_\alpha | Z > q_\alpha] = (1-e^{-q_\alpha^{-\gamma}})\mathbb{E}[Z-q_\alpha | Z > q_\alpha].
\]
The distribution of the conditional random variable on the right hand side has the same form as the excess distribution function, given in \cref{def:edf}. Let
\[
F_{\alpha,\gamma}(z) = \mathbb{P}(Z-q_\alpha \le z | Z > q_\alpha) = \frac{F_\gamma(z+q_\alpha) - F_\gamma(q_\alpha)}{1-F_\gamma(q_\alpha)} = \frac{e^{-(z+q_\alpha)^{-\gamma}}-e^{-q_\alpha^{-\gamma}}}{1-e^{-q_\alpha^{-\gamma}}},
\]
\[
f_{\alpha,\gamma}(z) = F^\prime_{\alpha,\gamma}(z) = \frac{\gamma (z+q_\alpha)^{-\gamma-1}e^{-(z+q_\alpha)^{-\gamma}}}{1-e^{-q_\alpha^{-\gamma}}}, \quad z>0.
\]
Hence,
\[
\mathbb{E}[[Z-q_\alpha]^+] = \int_0^\infty \gamma (z+q_\alpha)^{-\gamma-1} z e^{-(z+q_\alpha)^{-\gamma}} dz = \int_{q_\alpha}^\infty t^{-1/\gamma} e^{-t} dt = \Gamma(1-1/\gamma, q_\alpha), \quad \gamma >1,
\]
where $t=(z+q_\alpha)^{-\gamma}$ and $\Gamma(\cdot,\cdot)$ denotes the upper incomplete gamma function. With a similar calculation, the second moment is
\[
\mathbb{E}[([Z-q_\alpha]^+)^2] = \Gamma(1-2/\gamma, q_\alpha), \quad \gamma > 2.
\]
Finally, we can compute the AVAR of the SA estimator for the Fr\'echet distribution, which is
\[
\frac{\theta^2}{n} = \frac{\mathbb{E}[([Z-q_\alpha]^+)^2] - \mathbb{E}[[Z-q_\alpha]^+]^2}{n(1-\alpha)^2} = \frac{\Gamma(1-2/\gamma, q_\alpha) - \Gamma(1-1/\gamma, q_\alpha)^2}{n(1-\alpha)^2}, \quad \gamma > 2.
\]
\subsection{Half-\textit{t}}
If $X$ follows the $t$ distribution with $\nu$ degrees of freedom, then $|X|$ follows the half-$t$ distribution, which has cdf given by
\[
F_\nu (x)= 2-\mathcal{I}_{t(x)}\left(\frac{\nu}{2}, \frac{1}{2}\right), \quad \nu > 0, x \ge 0,
\]
where $t(x) = \frac{\nu}{x^2 + \nu}$ and $\mathcal{I}_t(a,b)$ is the regularized incomplete Beta function. The CVaR for the half-$t$ distribution can be derived from the expression for the CVaR of the $t$-distribution given in \cite[Proposition 12]{cvarfordists}. If $X\sim \textnormal{half-}t(\nu)$, then 
\[
\cvar(X) = 2\frac{\nu + q_\alpha}{(\nu-1)(1-\alpha)}g_\nu(q_\alpha), \quad \nu > 1,
\]
where $g_\nu$ is the probability density function of the standardized $t$-distribution, and $q_\alpha=T^{-1}\left(\frac{\alpha+1}{2}\right)$ where $T^{-1}$ is the inverse of the cdf of standardized $t$-distribution. The half-$t$ distribution is in $\textnormal{MDA}(H_\xi)$ with $\xi=1/\nu$, and has $\rho=-2/\nu$ (\citet[Remark 2.1]{bias_reduce_rho}). It does not seem possible to compute closed-form expressions for the functions $a$ and $A$ for the half-$t$ distribution.

\section{Numerical Results} \label{se:num_results}
\begin{table}[H] 
\caption{Data for all distributions used in experiments. $\cvar$ denotes the exact CVaR value for $\alpha=0.998$. Given at a sample size $n=50000$, UPOT, BPOT, and SA denote the average estimated CVaR values across $N=1000$ independent runs, and TP denotes the average threshold percentile chosen by \cref{algo:auto_thresh}. FR denotes the failure rate, the number of independent runs where \cref{algo:auto_thresh} returned NaN, i.e., where no suitable estimate of $\xi$ could be obtained and no CVaR estimate could be produced by the POT methods. This value is given at a sample size of $n=5000$ since very few failures occurred beyond this sample size. CP denotes the coverage probability achieved by our confidence interval at a sample size $n=50000$. }  
\centering
\begin{tabular}{lrrrrrrr} \\ \toprule
& $\cvar$ & UPOT & BPOT & SA & TP & FR & CP\\ \midrule
Burr(0.38, 4.0) & 124.87 & 89.83 & 235.70 & 121.03 & 0.96 & 2 & 0.73\\
Burr(0.5, 3.0) & 166.18 & 135.62 & 245.74 & 163.39 & 0.92 & 1 & 0.87\\
Burr(0.67, 2.25) & 175.93 & 140.53 & 219.55 & 173.19 & 0.84 & 0 & 0.88\\
Burr(2.0, 0.75) & 188.98 & 191.48 & 180.22 & 190.34 & 0.80 & 0 & 0.94\\
Burr(3.33, 0.45) & 190.15 & 189.71 & 187.26 & 192.83 & 0.80 & 4 & 0.95\\
Fr\'echet(1.5) & 188.96 & 188.94 & 182.11 & 181.76 & 0.80 & 4 & 0.89\\
Fr\'echet(1.75) & 81.32 & 81.88 & 78.91 & 81.97 & 0.80 & 2 & 0.93\\
Fr\'echet(2.0) & 44.71 & 44.76 & 43.25 & 44.52 & 0.80 & 1 & 0.94\\
Fr\'echet(2.25) & 28.49 & 28.66 & 27.75 & 28.45 & 0.80 & 2 & 0.95\\
Fr\'echet(2.5) & 20.02 & 20.07 & 19.49 & 19.94 & 0.80 & 1 & 0.95\\
half-$t$(1.5) & 156.58 & 159.92 & 145.89 & 175.91 & 0.81 & 0 & 0.94\\
half-$t$(1.75) & 74.52 & 75.40 & 69.49 & 74.02 & 0.82 & 2 & 0.94\\
half-$t$(2.0) & 44.70 & 45.36 & 42.29 & 44.50 & 0.83 & 0 & 0.94\\
half-$t$(2.25) & 30.74 & 31.27 & 29.26 & 30.68 & 0.84 & 1 & 0.95\\
half-$t$(2.5) & 23.10 & 23.34 & 22.15 & 23.12 & 0.85 & 1 & 0.94\\
\bottomrule  \label{table1} \end{tabular} \end{table}

\begin{table}[H]
\caption{Error values at sample size $n=50000.$}  
\centering
\begin{tabular}{lrrrrrr} \\ \toprule
& \multicolumn{3}{c}{RMSE} & \multicolumn{3}{c}{Bias}\\
\cmidrule(lr){2-4} \cmidrule(lr){5-7}
& UPOT & BPOT & SA & UPOT & BPOT & SA\\ \midrule
Burr(0.38, 4.0) & 48.56 & 134.15 & 64.04 & -35.03 & 110.83 & -3.84\\
Burr(0.5, 3.0) & 47.71 & 121.18 & 124.71 & -30.56 & 79.56 & -2.78\\
Burr(0.67, 2.25) & 48.88 & 58.97 & 81.34 & -35.41 & 43.62 & -2.75\\
Burr(2.0, 0.75) & 17.48 & 22.27 & 88.48 & 2.50 & -8.76 & 1.36\\
Burr(3.33, 0.45) & 13.83 & 19.40 & 128.88 & -0.44 & -2.89 & 2.67\\
Fr\'echet(1.5) & 19.47 & 21.31 & 69.35 & -0.02 & -6.85 & -7.19\\
Fr\'echet(1.75) & 6.10 & 7.07 & 24.25 & 0.56 & -2.41 & 0.65\\
Fr\'echet(2.0) & 2.71 & 3.36 & 7.45 & 0.05 & -1.47 & -0.20\\
Fr\'echet(2.25) & 1.50 & 1.90 & 3.21 & 0.16 & -0.75 & -0.05\\
Fr\'echet(2.5) & 0.92 & 1.18 & 1.69 & 0.05 & -0.52 & -0.08\\
half-$t$(1.5) & 16.78 & 22.68 & 765.05 & 3.34 & -10.69 & 19.33\\
half-$t$(1.75) & 6.11 & 8.72 & 16.40 & 0.89 & -5.03 & -0.50\\
half-$t$(2.0) & 3.58 & 4.92 & 7.62 & 0.66 & -2.41 & -0.20\\
half-$t$(2.25) & 2.07 & 2.78 & 3.49 & 0.53 & -1.48 & -0.06\\
half-$t$(2.5) & 1.44 & 1.88 & 2.06 & 0.23 & -0.95 & 0.02\\
\bottomrule \end{tabular} \end{table}

\end{document}